\long\def\emph#1{\ifmmode\nfss@text{\em #1}\else\hmode@bgroup\text@command{#1}\em\check@icl #1\check@icr\expandafter\egroup\fi}
\newtheorem{theorem}{Theorem}
\newtheorem{definition}{Definition}
\newtheorem{notation}{Notation}
\newtheorem{remark}{Remark}
\newtheorem{proposition}[]{Proposition}
\newcommand{\tr}{\operatorname{tr}}
\newcommand\Vtextvisiblespace[1][.5em]{%
  \mbox{\kern.06em\vrule height.3ex}%
  \vbox{\hrule width#1}%
  \hbox{\vrule height.3ex}}
\tikzset{wave/.style={decorate, decoration=snake}}
\definecolor{MyBlue}{rgb}{0.25,0.5,0.75}
\colorlet{NextBlue}{MyBlue!20}
\colorlet{SecondBlue}{MyBlue!40}
\NewDocumentCommand{\tens}{t_}
 {%
  \IfBooleanTF{#1}
   {\tensop}
   {\otimes}%
 }
\NewDocumentCommand{\tensop}{m}
 {%
  \mathbin{\mathop{\otimes}\displaylimits_{#1}}%
 }
\newcommand{\eq}[1]{\begin{equation}\begin{gathered}#1\end{gathered}\end{equation}}
\newcommand{\eqs}[1]{\begin{equation}\begin{gathered}\begin{split}#1\end{split}\end{gathered}\end{equation}}
\DeclareMathOperator{\res}{Res}
\newcommand{\sfa}{{\sf a}}
\newcommand{\sfb}{{\sf b}}
\newcommand{\sfc}{{\sf c}}
\newcommand{\sfd}{{\sf d}}
\newcommand{\sfQ}{{\sf Q}}
\newcommand{\sfY}{{\sf Y}}
\newcommand{\mc}{\mathcal}
\newcommand{\bs}{\boldsymbol}
\newcommand{\cP}{\mathcal{P}}
\newcommand{\cin}{\mathcal{C}_{in}}
\newcommand{\cout}{\mathcal{C}_{out}}
\newcommand{\cC}{\mathcal{C}}
\newcommand{\cH}{\mathcal{H}}
\newcommand{\Yt}{\widetilde{Y}}
\newcommand{\CS}{\underline{\mathcal{C}}}
\newcommand{\CP}{\overline{\mathcal{C}}}
\newcommand{\T}{\mathcal{T}}
\newcommand{\diag}{\text{diag}}
\definecolor{darkspringgreen}{rgb}{0.05, 0.5, 0.06}
\definecolor{MyBlue}{rgb}{0.25,0.25,0.75}
\definecolor{MyRed}{rgb}{0.75,0.25,0.25}
\colorlet{NextBlue}{MyBlue!20}
\colorlet{SecondBlue}{MyBlue!40}
\title{Isomonodromic tau functions on a torus as  Fredholm determinants, and charged partitions}
\author[a]{Fabrizio Del Monte,}
\author[b,c]{Harini Desiraju,}
\author[d,e]{Pavlo Gavrylenko}
\affiliation[a]{Centre de Recherches Math\'ematiques (CRM), Universit\'e de Montr\'eal C. P. 6128, succ. Centre Ville, Montr\'eal, Qu\'ebec, Canada H3C 3J7}
\affiliation[b]{Scuola Internazionale Superiore di Studi Avanzati (SISSA),\\
Via Bonomea 265, Trieste}
\affiliation[c]{Institute for Geometry and Physics (IGAP), \\
via Beirut 2/1, 34151 Trieste, Italy}
\affiliation[d]{Center for Advanced Studies, Skolkovo Institute of Science and Technology, Nobel Street 1, 121205 Moscow, Russia}
\affiliation[e]{Faculty of Mathematics, HSE University, Usacheva 6, 119048 Moscow, Russia}
\emailAdd{fabrizio.delmonte@sissa.it}
\emailAdd{harini.desiraju@sissa.it}
\emailAdd{pasha.145@gmail.com}
 \abstract{We prove that the isomonodromic tau function on a torus with Fuchsian singularities and generic monodromies in $GL(N,\mathbb{C})$ can be written in terms of a Fredholm determinant of Plemelj operators. We further show that the minor expansion of this Fredholm determinant is described by a series labeled by charged partitions. As an example, we show that in the case of $SL(2,\mathbb{C})$ this combinatorial expression takes the form of a dual Nekrasov-Okounkov partition function, or equivalently of a free fermion conformal block on the torus. Based on these results we also propose a definition of the tau function of the Riemann-Hilbert problem on a torus with generic jump on the A-cycle.
 }
\begin{document}
\maketitle

\section{Introduction}
A central object in the study of monodromy preserving equations
is the so-called \textit{tau function} \cite{Jimbo:1981zz},  which encodes important information about the system, e.g. by generating the Hamiltonians governing its dynamics. Tau functions, when expressed as Fredholm determinants, bring together concepts from mathematics and physics: notable examples are the relation between Painlev\'e III and the two-dimensional Ising model \cite{PhysRevB.13.316}, Painlev\'e V and quantum correlations of Bose gases \cite{1990IJMPB...4.1003I}, and between Painlev\'e III and self-avoiding polymers \cite{Zamolodchikov:1994uw}, among others. An important example is the formulation of the general tau function of Painlev\'e VI as a Fredholm determinant \cite{Gavrylenko:2016zlf,Gavrylenko:2016moe}, and consequentially taking the form of Nekrasov partition functions \cite{Nekrasov:2003af,Nekrasov:2003rj} or equivalently free fermion conformal blocks.

While this correspondence was originally derived by using methods from two-dimensional Conformal Field Theory (CFT) \cite{Gamayun:2012ma,Iorgov:2013uoa,Iorgov:2014vla,Bershtein:2014yia,Bershtein:2016uov,Gavrylenko:2016moe,Gavrylenko:2018ckn}, it was later directly proven by representing the tau functions as Fredholm determinants in \cite{Gavrylenko:2016zlf,Gavrylenko:2017lqz,Cafasso:2017xgn,Gavrylenko:2018fsm} for the cases of Painlev\'e III,V,VI, and for Fuchsian systems on the sphere. Not only does the Fredholm determinant representation of the tau function provide an explicit formulation of the general solution to the Painlev\'e equations (Painlev\'e transcendents), but also reveals the combinatorial structure in terms of charged partitions underlying the tau functions, that are organized as a convergent power series in the isomonodromic time: such a representation is especially remarkable given the transcendental nature of these solutions.

A fascinating property of Painlev\'e equations is that they can be expressed as time-dependent Hamiltonian systems with a Calogero-type potential: this is the so-called Painlev\'e-Calogero correspondence \cite{Levin2000,Takasaki:2000zd}. In particular the sixth Painlev\'e equation with the parameters $\alpha_{0},\alpha_{1},\alpha_{2},\alpha_{3}$, can be expressed as a Hamiltonian system with an elliptic potential \cite{manin1996sixth},
    \begin{gather}
    (2\pi i )^2\frac{d^2 Q(\tau)}{d\tau^2} = \sum_{n=0}^{3} \alpha_{n} \wp'(Q+\omega_{n}), \label{step3}
\end{gather}
where
\begin{align}
    \omega_0=0, && \omega_1=\frac{1}{2}, && \omega_2=\frac{\tau}{2}, && \omega_3=\frac{\tau+1}{2}.
\end{align}
For a specific set of parameters, $\alpha_i=\frac{m^2}{8}$, $i=0,1,2,3$, 
this is the equation of the 2-particle nonautonomous Calogero-Moser system whose particles are positioned at $\pm Q$ in their center-of-mass frame, that we write below in \eqref{eq:EllPainleve}, and the associated Lax pair lives on a torus with one puncture at $z=0$, making it the simplest example to study isomonodromic deformations on a torus. 

In this paper, we first extend the determinant formalism of \cite{Gavrylenko:2016zlf} to construct the tau function of the 2-particle nonautonomous Calogero-Moser system as a Fredholm determinant that is explicitly determined by hypergeometric functions. We then generalise our construction and show that isomonodromic tau functions on a torus with an arbitrary number of Fuchsian singularities \cite{Korotkin:1995yi,levin1999hierarchies,takasaki1999elliptic,Korotkin:1999xx,Levin:2013kca} have a  Fredholm determinant representation, and its minor expansion can be written in terms of Nekrasov functions \cite{Nekrasov:2003af,Nekrasov:2003rj}. This extends and completes the analysis of \cite{Bonelli:2019boe,Bonelli:2019yjd}, where these cases were studied by using CFT methods. We conclude by outlining the general ideas behind the extension of the Widom constant's approach of \cite{Cafasso:2017xgn} to the present case.

\subsection*{Overview of the results}
Our starting point is the equation of motion for the 2-particle nonautonomous Calogero-Moser system \cite{takasaki1999elliptic}
\begin{align}\label{eq:EllPainleve}
    (2\pi i)^2\frac{d^2Q(\tau)}{d\tau^2}=m^2\wp'(2Q(\tau)|\tau),
\end{align}
where $m\in \mathbb{C}$ is an arbitrary complex parameter, and the Weierstrass $\wp$ function is defined in terms of the theta function $\theta_1$ by
\begin{align}
    \wp(z|\tau):=-\frac{\partial^2}{\partial z^2}\log\theta_1(z|\tau)-\frac{1}{6}\frac{\theta_1'''(0|\tau)}{\theta_1'(0|\tau)}\equiv -\frac{\partial^2}{\partial z^2}\log\theta_1(z|\tau)-\frac{1}{6}\frac{\theta_1'''}{\theta_1'},  
\end{align}
\begin{equation}
    \theta_1(z|\tau):=\sum_{n\in\mathbb{Z}}(-1)^{n-\frac{1}{2}}e^{i\pi\left(n+\frac{1}{2} \right)^2}e^{2\pi i\left(n+\frac{1}{2} \right)z}, \label{eq:Theta1def}
\end{equation}
with the theta function satisfying the following periodicity properties:
\begin{align}
    \theta_1(z+1 \vert \tau)=-\theta_1(z\vert \tau), && \theta_1(z+\tau \vert \tau)=-e^{-2\pi i(z+\frac{\tau}{2})}\theta_1(z \vert \tau). \label{eq:theta1transf}
\end{align}
 The modular parameter of the torus $\tau$  lies in the upper-half plane $\mathbb{H}$ and assumes the role of the isomonodromic time. The equation \eqref{eq:EllPainleve} arises as the compatibility condition of the following linear system on a torus with one puncture \cite{Korotkin:1995yi,levin1999hierarchies,takasaki1999elliptic},
\eqs{\label{eq:linear_systemCM}
\partial_z\mathcal{Y}_{CM}(z,\tau)=\mathcal{Y}_{CM}(z,\tau)L_{CM}(z,\tau), \\
2\pi i\partial_\tau \mathcal{Y}_{CM}(z,\tau)=\mathcal{Y}_{CM}(z,\tau) M_{CM}(z,\tau),
}
where $(L_{CM},M_{CM})$ is the Lax pair of 2-particle non-autonomous Calogero-Moser system
\begin{equation}
\begin{split}
    L_{CM}(z,\tau) &=\left(\begin{array}{cc}
        P(\tau) & mx(-2Q(\tau),z) \\
        mx(2Q(\tau),z) & -P(\tau)
    \end{array}\right),\\
    M_{CM}(z, \tau) &=m\left(\begin{array}{cc} 
        0 & y(-2Q(\tau),z) \\
        y(2Q(\tau),z) & 0
    \end{array}\right).
    \end{split}\label{linear_system}
\end{equation}
The functions $x(\xi,z)$, $y(\xi,z)$ and $P$  in \eqref{linear_system}  are respectively,
\begin{align}\label{eq:xyP}
    x(\xi,z)=\frac{\theta_1(z-\xi \vert \tau)\theta_1'(0\vert \tau)}{\theta_1(z\vert \tau)\theta_1(\xi \vert \tau)}, && y(\xi,z)=\partial_\xi x(\xi,z), && P=2\pi i\frac{dQ(\tau)}{d\tau}.
\end{align}
As opposed the behaviour of the Lax matrices on the sphere, the Lax matrix $L_{CM}$ in \eqref{linear_system} is not single-valued, and satisfies the relations
\begin{align}\label{L_1}
L_{CM}(z+1, \tau)=L_{CM}(z, \tau), &&  L_{CM}(z+\tau, \tau)=e^{-2\pi iQ(\tau)\sigma_3} L_{CM}(z, \tau) e^{2\pi iQ(\tau)\sigma_3}.
\end{align}
Subsequently, the solution of the linear system \eqref{linear_system} has the following monodromy properties around A,B cycles of the torus and around the puncture:
\begin{gather}
\begin{array}{c}
 \mathcal{Y}_{CM}(z+1, \tau)=M_A \mathcal{Y}_{CM}(z, \tau), \qquad \mathcal{Y}_{CM}(z+\tau, \tau)=M_B \mathcal{Y}_{CM}(z, \tau) e^{2\pi iQ(\tau)\sigma_3}, \\ \\
\mathcal{Y}_{CM}(e^{2\pi i}z, \tau)=M_0 \mathcal{Y}_{CM}(z, \tau), 
\end{array}
\end{gather}
under the constraint
\begin{equation}
    M_0=M_A^{-1} M_B^{-1} M_A M_B, \label{mon_cons}
\end{equation}
 and without loss of generality, it is always possible to set $M_A$ to be diagonal by conjugation. Introducing the monodromy exponent $a\notin\mathbb{Z} + \frac{1}{2}$ around the A-cycle, we have
\begin{align}\label{eq:1ptmonodromy}
    M_A=e^{2\pi i a\sigma_3}, && M_0\sim e^{2\pi im\sigma_3},
\end{align}
where $\sim$ means "in the same conjugacy class of", $\sigma_{3}$ is the Pauli sigma matrix, and $m$ is the free parameter of the equation \eqref{eq:EllPainleve}.
Furthermore, the Hamiltonian of the system \eqref{linear_system} is the A-cycle contour integral \cite{Korotkin:1995yi,Levin:2013kca}
\begin{equation} \label{eq:Ham_CM}
    H_{CM}(\tau)=\oint_A dz\frac{1}{2}\tr L_{CM}^2(z, \tau)=P(\tau)^2-m^2\wp(2Q(\tau)|\tau)+4\pi im^2\partial_\tau\log\eta(\tau),
\end{equation}
where $\eta(\tau)$ is Dedekind's eta function
\begin{equation}
    \eta(\tau):=\left(\frac{\theta_1'(0\vert \tau)}{2\pi} \right)^{1/3}.
\end{equation}
The generator of the Hamiltonian $H_{CM}$ is called the (isomonodromic) tau function $\T_{CM}$ of the 2-particle non-autonomous Calogero-Moser system, and is defined by
\begin{gather}\label{eq:IsomTau11}
    2\pi i \partial_{\tau} \log \T_{CM}(\tau) := H_{CM}(\tau).
\end{gather}
Our first result, proving a conjecture (equations 3.47, 4.10) of \cite{Bonelli:2019boe}, is that the tau function $\T_{CM}$ in \eqref{eq:IsomTau11} is proportional to the Fredholm determinant of an operator whose entries are determined solely by hypergeometric functions.
\begin{theorem}\label{prop:CM_Ham}
The isomonodromic tau function $\T_{CM}$ for the one-punctured torus is given by the following expression:
\begin{equation}\label{eq:prop1}
    \T_{CM}(\tau)=\det\left[\mathbb{I}- K_{1,1}\right]\,\left( \frac{e^{-2\pi i\rho}\eta(\tau)^2}{\theta_{1}(Q(\tau)-\rho)\theta_{1}(Q(\tau)+\rho)}\right)e^{2\pi i\tau\left(a^2+\frac{1}{6}\right)}\Upsilon_{1,1}(a,m),
\end{equation}
where $\rho$ is an arbitrary constant, $ Q(\tau)$ is the solution of the equation of motion for the 2-particle nonautonomous Calogero-Moser system \eqref{eq:EllPainleve}. The kernel $K_{1,1}(z,w; \tau)$ reads
\begin{gather}
  K_{1,1} (z,w; \tau)=
  \left(\begin{array}{cc}-e^{-2\pi i \rho}\frac{\Yt_{out}(z+\tau)\Yt_{in}(w)^{-1}}{e^{2\pi i(z-w+\tau)}-1}&
    \frac{\Yt_{out}(z+\tau)\Yt_{out}(w+\tau)^{-1}-\mathbb{I}}{1-e^{-2\pi i(z-w)}}\\
    \frac{\mathbb{I}-\Yt_{in}(z)\Yt_{in}(w)^{-1}}{1-e^{-2\pi i(z-w)}} &
    e^{2\pi i\rho}\frac{\Yt_{in}(z)\Yt_{out}(w+\tau)^{-1}}{e^{2\pi i(z-w-\tau)}-1}\end{array}\right).
\end{gather}
and the corresponding operator acts on $L^{2}(S^{1})\otimes \left(\mathbb{C}^2 \oplus \mathbb{C}^{2}\right)$. The function
\eq{\label{eq:in_hyp}
\Yt_{in}(z)=(1-e^{-2\pi iz})^{m}\times\operatorname{diag}(e^{2\pi i az},e^{-2\pi i az})\times\\\times
\begin{pmatrix}
{}_2F_1(m,m-2a,-2a,e^{-2\pi iz})&
-\frac{m}{2a}{}_2F_1(1+m,m-2a,1-2a,e^{-2\pi iz})\\\frac{m e^{-2\pi iz}}{2a+1}{}_2F_1(1+m,1+m+2a,2+2a,e^{-2\pi iz})&{}_2F_1(m,1+m+2a,1+2a,e^{-2\pi iz})
\end{pmatrix},
}
is the local behavior of the solution to the associated three-point spherical problem for $z\rightarrow-i\infty$, normalized in such a way that the monodromy around $-i\infty$ is diagonal and equal to $e^{2\pi ia\sigma_3}$, well-defined as a series in $e^{-2\pi iz}$, convergent for $|e^{-2\pi iz}|<1$, ${_2F_1}$ are hypergeometric functions, and the function $\Yt_{out}$ is defined by 
\begin{align}\label{eq:out_hyp}
\Yt_{out}(z):=e^{2\pi i(\nu+\delta\nu(a,m))\sigma_3}\sigma_1 \Yt_{in}(-z) \sigma_1, &&
e^{2\pi i\delta\nu(a,m)}=\frac{\Gamma(-2a)\Gamma(1+2a-m)}{\Gamma(1+2a)\Gamma(-2a-m)}.
\end{align}
This expression for $\Yt_{out}$, which is well-defined as a series in $e^{2\pi iz}$, was obtained in \cite{Bonelli:2016idi}, where $\nu$ parametrizes the B-cycle monodromy $M_B$ and $\delta\nu(a,m)$ is a shift depending on $a,m$. $\sigma_1$
is a Pauli sigma matrix, $m$ is the monodromy exponent around the puncture, $a$ is the monodromy exponent around the A-cycle of the torus, and $\Upsilon_{1,1}$ is an arbitrary function of the monodromy data.
\end{theorem}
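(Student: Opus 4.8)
The plan is to adapt the pair-of-pants decomposition and Plemelj-operator formalism of \cite{Gavrylenko:2016zlf} to the one-punctured torus. Cutting the torus along the B-cycle opens it into the cylinder $\mathbb{C}/\mathbb{Z}$ carrying only the A-cycle identification $z\sim z+1$, which in the coordinate $w=e^{2\pi iz}$ is the three-punctured sphere with punctures at $w=0$ and $w=\infty$ (the two ends, around which the monodromy is the A-cycle data $e^{\pm2\pi ia\sigma_3}$) and at $w=1$ (the genuine puncture $z=0$, with $M_0\sim e^{2\pi im\sigma_3}$). On this three-punctured sphere the Riemann-Hilbert problem is the explicitly solvable hypergeometric one, whose solutions normalised at the two ends are exactly $\Yt_{in}$ in \eqref{eq:in_hyp} and $\Yt_{out}$ in \eqref{eq:out_hyp}, as obtained in \cite{Bonelli:2016idi}. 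The torus is then recovered by regluing the two ends along the B-cycle, i.e.\ by the identification $z\mapsto z+\tau$ dressed with the twist $e^{2\pi iQ\sigma_3}$, and the whole construction hinges on encoding this regluing as a singular integral operator on a circle $S^1$ homotopic to the A-cycle.

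First I would set up the Riemann-Hilbert problem on an annular neighborhood of the gluing circle, with boundary data valued in the two copies of $\mathbb{C}^2$ attached to the ends $w=0$ and $w=\infty$. Introducing the Plemelj projectors onto positive and negative Fourier modes on $S^1$ generates the Cauchy-type kernels $\bigl(e^{2\pi i(z-w)}-1\bigr)^{-1}$, while dressing them by $\Yt_{in}$, $\Yt_{out}$ and the shift-by-$\tau$ map produces precisely the $2\times2$ block kernel $K_{1,1}$ on $L^2(S^1)\otimes(\mathbb{C}^2\oplus\mathbb{C}^2)$: the entries carrying the argument shift $z\mapsto z+\tau$ implement the B-cycle identification between the two ends, while the combinations $\Yt_{in}(z)\Yt_{in}(w)^{-1}-\mathbb{I}$ and $\Yt_{out}(z+\tau)\Yt_{out}(w+\tau)^{-1}-\mathbb{I}$ subtract the corresponding free propagation. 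Standard Fredholm/Plemelj arguments then show that the torus RHP is solvable exactly when $\mathbb{I}-K_{1,1}$ is invertible and that its solution is given by the resolvent, which singles out $\det[\mathbb{I}-K_{1,1}]$ as the natural candidate for the tau function up to a $\tau$-dependent normalisation.

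It then remains to fix the normalisation and prove equality with $\T_{CM}$ as defined by \eqref{eq:IsomTau11}. For this I would compute $2\pi i\,\partial_\tau\log$ of the complete right-hand side of \eqref{eq:prop1} and match it term by term to the Hamiltonian $H_{CM}$ of \eqref{eq:Ham_CM}. The Jimbo-Miwa-Ueno variational formula for the logarithmic derivative of a Plemelj determinant expresses $2\pi i\,\partial_\tau\log\det[\mathbb{I}-K_{1,1}]$ as a contour integral around the gluing circle that collapses onto the A-cycle integral $\oint_A\tfrac12\tr L_{CM}^2$. The explicit prefactors then supply the rest: $e^{2\pi i\tau(a^2+\frac16)}$ contributes the A-cycle classical term $a^2$ together with the $\tfrac16$ modular anomaly, the ratio $e^{-2\pi i\rho}\eta(\tau)^2/\bigl(\theta_1(Q-\rho)\theta_1(Q+\rho)\bigr)$ reproduces the $-m^2\wp(2Q)$ and $4\pi im^2\partial_\tau\log\eta$ pieces of $H_{CM}$ through $\wp=-\partial_z^2\log\theta_1-\tfrac16\theta_1'''/\theta_1'$, and $\Upsilon_{1,1}(a,m)$ is the $\tau$-independent integration constant fixed by the monodromy data. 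A clean internal check is $\rho$-independence: since $\rho$ is arbitrary, the $\rho$-dependence of $\det[\mathbb{I}-K_{1,1}]$ must cancel exactly against that of the theta-function prefactor.

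The step I expect to be the main obstacle is this variational identity on the torus. Unlike on the sphere, the isomonodromic time $\tau$ is here simultaneously the modular parameter of the surface, so $\partial_\tau K_{1,1}$ receives contributions both from the explicit shifts $z\mapsto z+\tau$ and from the $\tau$-dependence of the geometry and of $\Yt_{in},\Yt_{out}$ themselves. Disentangling these, keeping track of the boundary terms generated by the quasi-periodicity \eqref{L_1} of $L_{CM}$, and showing that the total collapses precisely onto $H_{CM}$ — in particular reproducing the anomalous $\partial_\tau\log\eta$ term and the elliptic $\wp(2Q)$ term rather than their sphere analogues — is the technically delicate core of the argument; indeed the theta-function prefactor in \eqref{eq:prop1} is essentially forced by demanding that this matching hold.
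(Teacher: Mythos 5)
Your geometric setup and overall strategy coincide with the paper's: cut the torus so that the two boundary circles are copies of the A-cycle, approximate the solution by the hypergeometric three-point problem on the cylinder (this is exactly \eqref{eq:3ptcyl}, with $\Yt_{in},\Yt_{out}$ as in \eqref{eq:in_hyp}, \eqref{eq:out_hyp}), encode the regluing $z\mapsto z+\tau$ in the block kernel $K_{1,1}$, and then prove the theorem by differentiating in $\tau$ and matching against $H_{CM}$. The genuine gap is in how you propose to carry out that matching. The kernel $K_{1,1}$ is built \emph{only} from the hypergeometric data, $\rho$ and $\tau$; it contains no reference to $L_{CM}$ or to $Q(\tau)$, so there is no ``Jimbo--Miwa--Ueno'' formula that makes $2\pi i\,\partial_\tau\log\det[\mathbb{I}-K_{1,1}]$ collapse by itself onto $\oint_A\tfrac12\tr L_{CM}^2$. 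The missing idea, which is the heart of the paper's proof, is a \emph{second} Plemelj projector $\cP_{\Sigma_{1,1}}$ built from the torus solution $Y_{CM}$ and the $\rho$-twisted elliptic Cauchy kernel $\Xi_2$ of \eqref{eq:twistedCauchy} (this is where $Q$ and $\rho$ enter), together with the identity $\T^{(1,1)}:=\det_{\cH_+}[\cP_{\Sigma_{1,1},+}^{-1}\cP_{\oplus,+}]=\det[\mathbb{I}-K_{1,1}]$ and the trace formula $\partial_\tau\log\T^{(1,1)}=-\tr_{\cH}\,\cP_\oplus\partial_\tau\cP_{\Sigma_{1,1}}$ as in \eqref{eq:11Tr}. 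Evaluating that trace (after the careful definition of $\partial_\tau\cP_{\Sigma_{1,1}}$ via the shift $\nabla$, which is precisely the moving-contour issue you flag as the main obstacle but do not resolve) gives, as in \eqref{eq:I_zF}, $2\pi i\,\partial_\tau\log\det[\mathbb{I}-K_{1,1}]=H_{CM}-(2\pi i)^2\bigl(a^2+\tfrac16\bigr)+2\pi i\,\frac{d}{d\tau}\log\bigl[\theta_1(Q-\rho)\theta_1(Q+\rho)/\eta(\tau)^2\bigr]$, where the $a^2$ deficit comes from $\tr L_{3pt}^2$ and the theta terms from the expansion \eqref{Xi_expansion} of $\Xi_2$.

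This also shows that your term-by-term allocation is wrong, indeed internally inconsistent: if the determinant's logarithmic derivative were exactly $H_{CM}$, then by \eqref{eq:IsomTau11} the remaining $\tau$-dependent prefactors in \eqref{eq:prop1} would have to be constant in $\tau$, which they are not. Moreover the ratio $e^{-2\pi i\rho}\eta(\tau)^2/\bigl(\theta_1(Q-\rho)\theta_1(Q+\rho)\bigr)$ does not ``reproduce the $-m^2\wp(2Q)$ and $4\pi i m^2\partial_\tau\log\eta$ pieces of $H_{CM}$'': those pieces sit inside $H_{CM}$ itself and arise from $\tfrac12\oint\tr L_{CM}^2$ in the trace, i.e.\ from the determinant. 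What the theta/eta prefactor actually does — using the heat equation \eqref{eq:heat_theta1} and $P=2\pi i\,dQ/d\tau$ — is cancel the spurious $\rho$-dependent terms $P\bigl[\theta_1'(Q-\rho)/\theta_1(Q-\rho)+\theta_1'(Q+\rho)/\theta_1(Q+\rho)\bigr]+\tfrac12\bigl[\theta_1''(Q-\rho)/\theta_1(Q-\rho)+\theta_1''(Q+\rho)/\theta_1(Q+\rho)\bigr]-\tfrac13\theta_1'''/\theta_1'$ generated by the twisted Cauchy kernel, not supply pieces of the Hamiltonian. Your $\rho$-independence ``check'' fails for the same reason: $\det[\mathbb{I}-K_{1,1}]$ and the theta prefactor are separately $\rho$-dependent in a way that only the above cancellation mechanism controls, and in fact the zeros of the determinant in $\rho$ carry the content of Remark \ref{rmk:zerostau11}.
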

An important consequence of the Fredholm determinant representation of the tau function in theorem \ref{prop:CM_Ham} is the combinatorial expansion in terms of Nekrasov partition functions, or equivalently free fermion conformal blocks, that we show in theorem \ref{prop:Minor_CM}.
The results for the 2-particle nonautonomous Calogero-Moser system are further generalized to the isomonodromic problem on an $n$-punctured torus $C_{1,n}$ which is characterised by the following $N\times N$ system of linear differential equations \cite{Takasaki:2001fr,Levin:2013kca}
\begin{align}
  \begin{array}{c}
     \frac{\partial}{\partial z}\mathcal{Y}\left(z;\tau,\lbrace z_i \rbrace_{_1}^{^n}\right) =\mathcal{Y}\left(z;\tau,\lbrace z_i \rbrace_{_1}^{^n}\right) L\left(z;\tau,\lbrace z_i \rbrace_{_1}^{^n}\right),\\ \\
     \quad (2\pi i) \frac{\partial}{\partial\tau} \mathcal{Y}\left(z;\tau,\lbrace z_i \rbrace_{_1}^{^n}\right) =  \mathcal{Y}\left(z;\tau,\lbrace z_i \rbrace_{_1}^{^n}\right) M_\tau\left(z;\tau,\lbrace z_i \rbrace_{_1}^{^n}\right), \\ \\
     \frac{\partial}{\partial z_k} \mathcal{Y}\left(z;\tau,\lbrace z_i \rbrace_{_1}^{^n}\right)=\mathcal{Y}\left(z;\tau,\lbrace z_i \rbrace_{_1}^{^n}\right)M_{z_k}\left(z;\tau,\lbrace z_i \rbrace_{_1}^{^n}\right),
\end{array}
&& 
\begin{array}{c}
z,z_1,\dots,z_n\in C_{1,n} ; \tau\in \mathbb{H}, \\
 k=1,\dots,n
\end{array}
\label{lax_general}
\end{align}
where\footnote{The dependence on the variables $\tau$, $z_{1},..., z_{n}$ of the functions $\mathcal{Y}(z)$, $L(z)$, $M(z)$, $H_{k}$, $H_{\tau}$, $\T_{H}$ is dropped henceforth for brevity.} $\mathcal{Y}(z) \in GL(N)$, and $L,M_{\tau}, M_{z_{k}}\in\mathfrak{gl}_N$ are the Lax matrices. The isomonodromic time evolution in this case is generated by $n+1$ Poisson commuting Hamiltonians, that can be obtained as before from contour integrals of $\frac{1}{2}\tr L^2$, and are generated by the isomonodromic tau function $\T_H$:
\begin{align}
    2\pi i\partial_\tau \log \T_{H}:=H_\tau =\frac{1}{2}\oint_A\tr L^2(z)dz, &&   \partial_{z_k}\log\T_{H}:=H_{k} =\res_{z=z_k}\frac{1}{2}\tr L^2(z).\label{eq:IsomHam}
\end{align}
In theorem \ref{thm:GL_Ham} we show that the isomonodromic tau function for the linear system \eqref{lax_general} is also described by a Fredholm determinant \eqref{eq:Thm2}. Furthermore, theorem \ref{thm:GL_minor} generalizes theorem \ref{prop:Minor_CM}, describing the tau function of the elliptic Garnier system in terms of Nekrasov partition functions.

\subsection*{Outline of the paper}
This paper is organised as follows. We introduce our main motivating example, the 2-particle nonautonomous Calogero-Moser system, in Section \ref{sec:CM}. We then introduce the pants decomposition for the one-punctured torus, and construct Plemelj operators acting on functions holomorphic on the annuli of the pants decomposition, in Section \ref{subsec:Plemelj_CM}. In Section \ref{subsec:Fred_CM}, we show that the Fredholm determinant of the Plemelj operators constructed in Section \ref{subsec:Plemelj_CM} is described by hypergeometric functions, and show its relation to the isomonodromic tau function $\T_{CM}$ proving Theorem \ref{prop:CM_Ham}, in Section \ref{subsec:Ham_CM}.

We extend the construction in Section \ref{sec:CM} to the case of a $GL(N)$ linear problem over a torus with $n$ Fuchsian singularities, in Section \ref{sec:GL}. In proving theorem \ref{thm:GL_Ham}, we show that the isomonodromic tau function $\T_H$ in \eqref{eq:IsomHam} can be written in terms of the Fredholm determinant of 3-point Plemelj operators constructed on boundary spaces of the pants decomposition of the $n$-punctured torus.

In Section \ref{subsec:Minor_Pants}, we perform the explicit minor expansion of the Fredholm determinant of the $n$-point in Theorem \ref{thm:GL_Ham}. Using previously obtained results for the tau functions associated to rank-1 linear systems, we write the explicit Nekrasov sum  representation for the tau functions of the 2-particle nonautonomous Calogero-Moser system in Theorem \ref{prop:Minor_CM}, and the elliptic Garnier system in Theorem \ref{thm:GL_minor}. Finally, in Section \ref{sec:RHP}, we outline a possible representation of the tau function on the one point torus in terms of determinant of a particular combination of Toeplitz operators and \(e^{2\pi i \tau\partial_z}\), which we propose to be the generalization of the Widom constant.

\section{The 2-particle nonautonomous Calogero-Moser system: a toy model}\label{sec:CM}
We defined the isomonodromic tau function $\T_{CM}$ for the equation \eqref{eq:EllPainleve} in equation \eqref{eq:Ham_CM} as the generator of the corresponding Hamiltonian. Another notion of a tau function describes it as a Fredholm determinant (if it exists) of an operator whose vanishing locus, called the {\it Malgrange divisor} \cite{Malgrange1982}, defines the non-solvability of some linear problem \cite{2010CMaPh.294..539B,2016arXiv160104790B}. In this spirit, following the construction in \cite{Gavrylenko:2016zlf}, we define a tau function as the Fredholm determinant of certain Plemelj operators. The overview of the construction for the one-punctured torus is as follows:
\begin{itemize}
    \item The pants decomposition \cite{hatcher1999pants} of the one-punctured torus consists of a trinion with two legs identified \cite{Goldman2009arXiv0901.1404G}, whose boundaries become the A-cycle of the torus;
    \item A linear system with 3 Fuchsian singularities, whose solution is explicitly described by hypergeometric functions, is associated to the trinion;
    \item Boundary (Hilbert) spaces are defined on the two legs of the trinion; 
    \item Two Plemelj operators, $P_{\Sigma}$ and $P_{\oplus}$, are defined in terms of the solutions to the linear systems on the torus and on the trinion respectively. The Plemelj operators project one boundary space on to the other, effectively 'gluing' the cut along the A-cycle and giving us the one-punctured torus.
    \item A tau function is then defined in \eqref{eq:TAU11} as a determinant of some combination of (restrictions of) the operators $\cP_{\Sigma}$ and $\cP_{\oplus}$.
\end{itemize}

\subsection{Pants decomposition and Plemelj operators}\label{subsec:Plemelj_CM}
Let us introduce the $2\times2$ matrix-valued function $\Yt(z)$ that solves the following auxiliary linear system on a cylinder with 3 punctures at $-i\infty,0,+i\infty$:
\begin{align}\label{eq:3ptcyl}
    \partial_z\widetilde{\mathcal{Y}}(z)=\widetilde{\mathcal{Y}}(z)L_{3pt}(z), && L_{3pt}(z)=-2\pi i A_{0}-2\pi i\frac{A_1}{1-e^{2\pi i z}},
\end{align}
whose fundamental solution $\widetilde{\mathcal{Y}}(z)$ is described by hypergeometric functions, see \cite{Gavrylenko:2016zlf,Bonelli:2019boe}. The local monodromy exponents of the Lax matrix in \eqref{eq:3ptcyl} are chosen so that they coincide with those  on the torus \eqref{eq:1ptmonodromy}:
\begin{align}\label{eq:3ptexpts}
    A_{0}\sim a\sigma_3, && A_{1}\sim m\sigma_3,
\end{align} 
and \(\widetilde{\mathcal{Y}}(z)\) itself is chosen in such a way that
\begin{equation*}
\widetilde{\mathcal{Y}}(z)^{-1}\mathcal{Y}_{CM}(z)
\end{equation*}
is regular and single-valued around \(z=0\) and has no monodromy around the closest A-cycles.
In other word, \(\widetilde{\mathcal{Y}}(z)\) ``approximates'' analytic behavior of \(\mathcal{Y}(z)\) in the fundamental domain having the same monodromies around puncture and around two closest A-cycles.

The trinion $\mathscr{T}$ can then be viewed as being obtained by cutting the torus along its A-cycle, see Figure \ref{fig:TorusTrinion}, inducing a homomorphism of monodromy groups $\pi_1(C_{3,0})\rightarrow\pi_1(C_{1,1})$
\begin{gather}
   M_{A}M_0M_{B}^{-1}M_{A}^{-1}M_{B}=1=(M_{A})M_0(M_{B}^{-1}M_{A}M_{B})^{-1}  := M_{-i\infty}^{3pt}M_0^{3pt}M_{i\infty}^{3pt},
\end{gather}
that defines the monodromies of the three-punctured cylinder around $-i\infty,0,+i\infty$ in terms of the monodromy representation of the torus as in Figure \ref{fig:Trinion}.

\begin{figure}[H]
\begin{center}
\begin{subfigure}{.45\textwidth}
\centering
\begin{tikzpicture}[scale=1.5]
\draw[thick,decoration={markings, mark=at position 0.25 with {\arrow{>}}}, postaction={decorate}] (0,0) circle [x radius=0.5, y radius =0.2];
\node at (0,0) {$M_0$};
\draw(-0.5,0)  to[out=270,in=90] (-1.5,-1.5) to[out=-90,in=-90] (1.5,-1.5);
\draw(0.5,0) to[out=270,in=90] (1.5,-1.5);

\draw(-0.7,-1.4) to[out= -30,in=210] (0.7,-1.4);
\draw(-0.55,-1.469) to[out= 30,in=-210] (0.55,-1.469);

\fill[red!30!white] (-0.2,-2.375) to[out=135,in=225] (-0.2,-1.605)
to (0.2,-1.6) to[out=190,in=170] (0.2,-2.373) --cycle;

\draw[dashed,color=black!60!white](-0.2,-2.375) to[out=135,in=225] (-0.2,-1.605)
to (0.2,-1.6) to[out=190,in=170] (0.2,-2.373) --cycle;

\fill[red](-0.2,-2.375) to[out=0,in=0] (-0.2,-1.605)
to (0.2,-1.6) to[out=10,in=-10] (0.2,-2.373) --cycle;

\draw(-0.2,-2.375) to[out=0,in=0] (-0.2,-1.605)
to (0.2,-1.6) to[out=10,in=-10] (0.2,-2.373) --cycle;
\node at ($(0,-0.7)$) {\Large $\mathscr{T}$};
\node at ($(0,-2.6)$) {\color{red}\Large $\mathscr{A}$};
\node at ($(-0.5,-1.8)$) {$\cC_{in}$};
\node at ($(0.7,-1.8)$) {$\cC_{out}$};
\end{tikzpicture}
\caption{Pants decomposition of $C_{1,1}$}
\label{fig:Torus}
\end{subfigure}
\hfill
\begin{subfigure}{.45\textwidth}
\centering
\begin{tikzpicture}[scale=1.5]
\draw[thick,decoration={markings, mark=at position 0.25 with {\arrow{>}}}, postaction={decorate}](0,0) circle[x radius=0.5, y radius =0.2];
\draw[red,thick,decoration={markings, mark=at position 0.25 with {\arrow{>}}}, postaction={decorate}] (-1,-2) circle[x radius=0.5, y radius =0.2];
\draw[red,thick,decoration={markings, mark=at position 0.25 with {\arrow{<}}}, postaction={decorate}] (1,-2) circle[x radius=0.5, y radius =0.2];

\draw(-0.5,0)  to[out=270,in=90] (-1.5,-2);
\draw(0.5,0) to[out=270,in=90] (1.5,-2);
\draw(-0.5,-2) to[out=90,in=180] (0,-1.5) to[out=0, in=90] (0.5,-2);
\node at ($(0,-0.7)$) {\Large $\mathscr{T}$};

\node at ($(0.6,0.3)$) {$M_{0}$};
\node at ($(-1,-2.5)$) {$M_A$};
\node at ($(-1.8,-2.0)$) {$\cC_{in}$};
\node at ($(1,-2.5)$) {$M_{B}^{-1} M_{A}^{-1} M_{B}$};
\node at ($(1.8,-2.0)$) {$\cC_{out}$};
\end{tikzpicture}
\caption{Trinion}
\label{fig:Trinion}
\end{subfigure}
\end{center}
\vspace{-0.5cm}
\caption{}
\label{fig:TorusTrinion}
\end{figure}

\begin{remark}
The linear system \eqref{eq:3ptcyl} is simply the usual three-point Fuchsian problem on the sphere, having mapped the sphere to a cylinder by $z\rightarrow e^{-2\pi iz}$. The punctures at $0,1,\infty$ become punctures at $-i\infty,0,i\infty$ respectively.
\end{remark}
\begin{definition}\label{def:YcalY}
Out of the solutions $\mathcal{Y}_{CM}(z)$, $\widetilde{\mathcal{Y}}(z)$ of the linear problems \eqref{eq:linear_systemCM}, \eqref{eq:3ptcyl} respectively, we define two matrix-valued functions $Y_{CM}(z)$, $\Yt(z)$ with diagonal monodromies around the boundary circles $\cC_{in}$ and $\cC_{out}$ in Figure \ref{fig:TorusTrinion}, by the following equations:
 \begin{align}
    Y_{CM}(z)\vert_{\cC_{in}}:=\mathcal{Y}_{CM}(z)\vert_{\cC_{in}}\in\cH_{in}, && Y_{CM}(z)\vert_{\cC_{out}}:=M_B^{-1}\mathcal{Y}_{CM}(z)\vert_{\cC_{out}}\in\cH_{out},
\end{align}
 \begin{align}
    \Yt(z)\vert_{\cC_{in}}\equiv \Yt_{in}(z):=\widetilde{\mathcal{Y}}(z)\vert_{\cC_{in}}\in\cH_{in}, && \Yt(z)\vert_{\cC_{out}}\equiv \Yt_{out}(z):=M_B^{-1}\widetilde{\mathcal{Y}}(z)\vert_{\cC_{out}}\in\cH_{out}.
\end{align}
\end{definition}
Notice that $Y_{CM}(z)$ and $\widetilde{Y}(z)$ also solve \eqref{eq:linear_systemCM}, \eqref{eq:3ptcyl} respectively.
Moreover,
\begin{equation}
    \widetilde{Y}(z)^{-1}Y_{CM}(z)=\widetilde{\mathcal{Y}}(z)^{-1}\mathcal{Y}_{CM}(z),
\end{equation}
so effectively they can be exchanged in the formulas where they appear in the form of such ratios.
Notice also that under such definition
\begin{equation}
\label{eq:YCMtwist}
Y_{CM}(z+\tau)=Y_{CM}(z)e^{2\pi i \boldsymbol{Q}},\quad z\in \mathcal{C}_{in}.
\end{equation}

The Hilbert spaces $\cH_{in}$, $\cH_{out}$ on the boundaries of the pants $\mathcal{C}_{in},  \cC_{out}$ respectively (see Figure \ref{fig:TorusTrinion}) have an orthogonal decomposition into spaces of positive and negative Fourier modes. A Hilbert space $\cH$ defined as the direct sum of $\cH_{in}$ and $\cH_{out}$ is then associated to the trinion $\mathscr{T}$:
\begin{equation}
    \cH := \cH_{in}\oplus\cH_{out}  = \left( \cH_{in, -} \oplus \cH_{out, +} \right) \oplus \left( \cH_{in, +} \oplus \cH_{out, -} \right):=\cH_{+} \oplus \cH_{-},
\end{equation}
where
\begin{equation}
\label{eq:2}
\mathcal{H}_+=\cH_{in, -} \oplus \cH_{out, +},\quad
\mathcal{H}_-=\cH_{in, +} \oplus \cH_{out, -}.
\end{equation}
The functions $f(z) \in \cH$ then have the decomposition
\begin{gather}
    f(z) = \left(\begin{array}{c}
         f_{in} \\
         f_{out}
    \end{array}  \right) = \left(\begin{array}{c}
         f_{in, -} \\
         f_{out, +}
    \end{array}  \right) \oplus \left(\begin{array}{c}
         f_{in, +} \\
         f_{out, -}
    \end{array}  \right) \equiv f_{+} \oplus f_{-}, \label{eq:f_dec}
\end{gather}
where
\begin{equation}
f_+=\left(\begin{array}{c}
         f_{in, -} \\
         f_{out, +}
    \end{array}  \right)\in\mathcal{H}_+,\quad
f_-=\left(\begin{array}{c}
         f_{in, +} \\
         f_{out, -}
    \end{array}  \right)\in\mathcal{H}_-,
\end{equation}
and the $\pm$ parts of the function are defined by their Fourier expansions:
\begin{equation}
\begin{split}\label{eq:functionComponents}
    f_{in,+}=e^{2\pi i az \sigma_3}\sum_{n=0}^\infty  f_{in,n}e^{-2\pi inz}, \qquad\qquad f_{in,-}=e^{2\pi i az\sigma_3}\sum_{n=1}^\infty f_{in,-n}e^{2\pi inz}, \\
    f_{out,+}=e^{2\pi i az\sigma_3}\sum_{n=0}^\infty f_{out,n}e^{-2\pi inz}, \qquad\qquad f_{out,-}=e^{2\pi i az\sigma_3}\sum_{n=1}^\infty f_{out,n}e^{2\pi inz},
\end{split}
\end{equation}
where the coefficients $f_{in,\pm n}, f_{out, \pm n}$ are column vectors.
On the space $\cH$ we introduce two Plemelj projectors in terms of the solutions to the linear systems \eqref{linear_system}, \eqref{eq:3ptcyl} respectively.

\begin{definition}\label{def:cP_Sigma} 
The Plemelj operator $\cP_{\Sigma_{1,1}}:\cH \rightarrow \cH$ is defined in terms of the solution to the linear system on the torus \eqref{linear_system} as
\begin{gather} 
    \left(\cP_{\Sigma_{1,1}} f\right)(z) = \int_{\cin \cup \cout} \frac{dw}{2\pi i}\, Y_{CM}(z;\tau)\Xi_2(z,w; \tau) Y_{CM}(w;\tau)^{-1} f(w) \nonumber\\
    \equiv\int_{\mathcal{C}} \frac{dw}{2\pi i} \, Y_{CM}(z;\tau)\Xi_2(z,w; \tau) Y_{CM}(w;\tau)^{-1} f(w),\label{ps}
\end{gather}
where
\begin{align}\label{eq:twistedCauchy} 
    \Xi_2(z,w; \tau)= \left( \begin{array}{cc}
        \frac{\theta_1(z-w+Q- \rho)\theta_1'(0)}{\theta_1(z-w)\theta_1(Q-\rho)} &  0 \\
        0 & -\frac{\theta_1(z-w-Q- \rho)\theta_1'(0)}{\theta_1(z-w)\theta_1(Q+\rho)}
    \end{array} \right).
\end{align}
\end{definition}
The function $\Xi_2(z,w; \tau)dw$ in \eqref{eq:twistedCauchy} is a twisted Cauchy kernel, with the properties
\begin{align}
    \Xi_2(z+\tau,w; \tau)=e^{-2\pi iQ\sigma_3+2\pi i\rho} \Xi_2(z,w; \tau), && \Xi_2(z,w+\tau; \tau)=\Xi_2(z,w; \tau) e^{2\pi i Q\sigma_3-2\pi i\rho},\label{eq:TauShiftXi2}
\end{align}
The variable\footnote{Here on, we drop the $\tau$ dependence of $Q$ for brevity.} $Q\equiv Q(\tau)$ is the solution of the non-autonomous Calogero-Moser system \eqref{eq:EllPainleve}, and $\rho$ is a parameter encoding a $U(1)$ B-cycle monodromy of the twisted Cauchy kernel as can be seen in \eqref{eq:TauShiftXi2}. It does not appear in the linear problem \eqref{eq:linear_systemCM}, but rather it is an arbitrary parameter whose role will become clear later (see remark \ref{rmk:zerostau11}).
The expansion of \(\Xi_2(z,w; \tau)\) for $z\sim w$ reads
\begin{equation}\label{Xi_expansion}
    \begin{split}\Xi_2(z,w; \tau) = \frac{\mathbb{I}}{z-w} + \diag\left[ \frac{\theta_{1}'(Q-\rho)}{\theta_{1}(Q-\rho)}, -\frac{\theta_{1}'(Q+\rho)}{\theta_{1}(Q+\rho)} \right] &+\frac{1}{2}(z-w)\,\diag\left[\frac{\theta_1''(Q-\rho)}{\theta_1(Q-\rho)}, \frac{\theta_1''(Q+\rho)}{\theta_1(Q+\rho)}\right]\\
    &  -\frac{\mathbb{I}}{6}(z-w)\frac{\theta_1'''}{\theta_1'} + \mc{O}\left( (z-w)^2 \right).  \end{split}
    \end{equation}
    \begin{definition}\label{notation:contour}
Since the integrand in \eqref{ps} has a singularity at \(w=z\), we define the following rule: each time \(w\) approaches \(z\), we go around the singularity in \emph{\it clockwise} direction. 
Sometimes it is also useful to use the notation $\mathcal{C}=\mathcal{C}_{in}\cup\mathcal{C}_{out}$, and $\CS$, $\CP$ for the shifted contours as in Figure \ref{fig:contour_PS}.
\end{definition}
One can verify that $
    \cP_{\Sigma_{1,1}}^2 = \cP_{\Sigma_{1,1}}$, and that the space of functions on the annulus $\mathscr{A}$, which is defined by the equation \eqref{eq:A_id} (see also Figure \ref{fig:Torus}), is
    \eq{
    \label{eq:ker11} \cH_{\mathscr{A}}\,\subseteq \textrm{ker}\cP_{\Sigma_{1,1}}.
    }
\begin{definition}\label{def:cP_oplus}
The  Plemelj operator $ \cP_{\oplus}: \cH \rightarrow \cH $ is defined in terms of the solution of the 3--point linear system \eqref{eq:3ptcyl} as
\begin{gather}
    \left(\cP_{\oplus} f\right)(z) = \int_{\cin \cup \cout} dw\, \frac{\Yt(z) \Yt (w){^{-1}}}{1-e^{-2\pi i(z-w)}} f(w) \nonumber \\
    = \int_{\mathcal{C}} dw\, \frac{\Yt(z) \Yt (w){^{-1}}}{1-e^{-2\pi i(z-w)}} f(w). \label{pp}
\end{gather}
\end{definition}
For $z\sim w$, 
 \eq{\frac{1}{1- e^{-2\pi i(z-w)}} = \frac{1}{2\pi i (z-w)} + \frac{1}{2} + \frac{2\pi i }{12}(z-w)+\mathcal{O}\left((z-w)^2\right). \label{eq:ExpExpansion}}
It can be verified that $ \cP_{\oplus}^2 = \cP_{\oplus}$, and
\eq{\label{eq:keroplus}\textrm{ker}~\cP_{\oplus} = \cH_{-}.}
Furthermore, one can prove that
    \eq{
        \cP_{\oplus}\cP_{\Sigma_{1,1}} = \cP_{\Sigma_{1,1}} ,\quad \cP_{\Sigma_{1,1}}\cP_{\oplus}= \cP_{\oplus},}
and therefore, the space of functions on the trinion $\mathscr{T}$ in Figure \ref{fig:Torus} is defined as
    \begin{gather}
   \cH_{\mathscr{T}}:= \textrm{im}~ \cP_{\oplus} = \textrm{im}~ \cP_{\Sigma_{1,1}} \label{eq:H_T}.
\end{gather}

 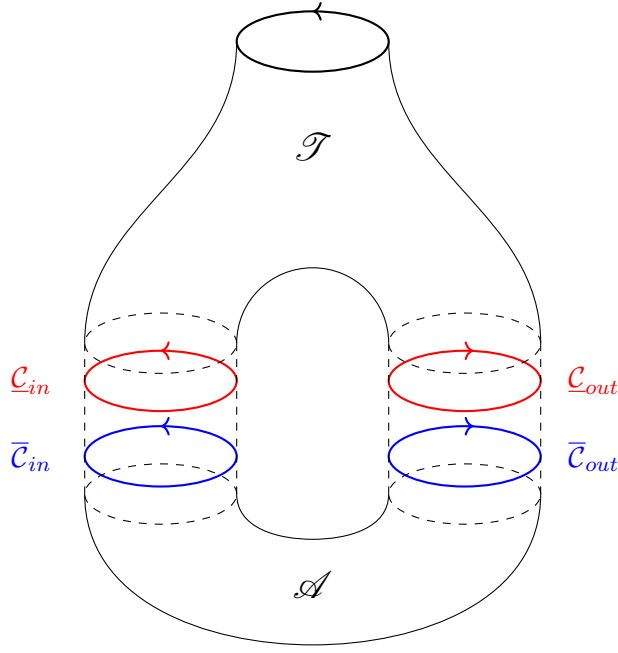
\begin{figure}[H]
\centering
\begin{tikzpicture}[scale=2]
\draw[thick,decoration={markings, mark=at position 0.25 with {\arrow{>}}}, postaction={decorate}](0,0) circle[x radius=0.5, y radius =0.2];
\draw[thick,red,decoration={markings, mark=at position 0.25 with {\arrow{>}}}, postaction={decorate}] (-1,-2.25) circle[x radius=0.5, y radius =0.2];
\draw[thick,red,decoration={markings, mark=at position 0.25 with {\arrow{<}}}, postaction={decorate}] (1,-2.25) circle[x radius=0.5, y radius =0.2];

\draw(-0.5,0)  to[out=270,in=90] (-1.5,-2);
\draw(0.5,0) to[out=270,in=90] (1.5,-2);
\draw(-0.5,-2) to[out=90,in=180] (0,-1.5) to[out=0, in=90] (0.5,-2);

\node at ($(0,-0.7)$) {\Large $\mathscr{T}$};\textbf{}

\node at ($(-1.85,-2.25)$) {{\color{red}$\underline{\mathcal{C}}_{in}$}};

\node at ($(1.85,-2.25)$) {{\color{red}$\underline{\mathcal{C}}_{out}$}};

\draw[thick,blue,decoration={markings, mark=at position 0.25 with {\arrow{>}}}, postaction={decorate}] (-1,-2.75) circle[x radius=0.5, y radius =0.2];
\draw[thick,blue,decoration={markings, mark=at position 0.25 with {\arrow{<}}}, postaction={decorate}] (1,-2.75) circle[x radius=0.5, y radius =0.2];

\node at ($(-1.85,-2.75)$) {{\color{blue}$\overline{\mathcal{C}}_{in}$}};
\node at ($(1.85,-2.75)$) {{\color{blue}$\overline{\mathcal{C}}_{out}$}};

\draw (-1.5,-3) to[out=270,in=180] (0,-4) to[out=0,in=270] (1.5,-3);
\draw (-0.5,-3) to[out=270,in=180] (0,-3.3) to[out=0,in=270] (0.5,-3);
\draw[dashed] (-1.5,-2) to (-1.5,-3);
\draw[dashed] (-0.5,-2) to (-0.5,-3);
\draw[dashed] (1.5,-2) to (1.5,-3);
\draw[dashed] (0.5,-2) to (0.5,-3);

\draw[dashed] (-1,-3) circle[x radius=0.5, y radius =0.2];
\draw[dashed] (1,-3) circle[x radius=0.5, y radius =0.2];
\draw[dashed] (-1,-2) circle[x radius=0.5, y radius =0.2];
\draw[dashed] (1,-2) circle[x radius=0.5, y radius =0.2];

\node at ($(0,-3.6)$) {\Large $\mathscr{A}$};

\end{tikzpicture}
\caption{Contours}
\label{fig:contour_PS}
\end{figure}
The components of $\mc P_\oplus$ under the orthogonal decomposition are obtained by computing its action on the function $f(z)\in\cH$:
\eq{\label{eq:Pofin}
 (\mc P_\oplus f)(z)_{in}=\oint_{\mc C_{in}}dw\frac{1}{1- e^{-2\pi i(z-w)}}f_{in}(w)\\
  +\oint_{\mc C_{in}}dw\frac{\Yt_{in}(z) \Yt_{in}(w)^{-1}-1}{1-e^{-2\pi i(z-w)}}f_{in}(w) \\
  +\oint_{\mc C_{out}}dw\frac{\Yt_{in}(z)\Yt_{out}(w)^{-1}}{1-e^{-2\pi i(z-w)}}f_{out}(w),
  }
\eq{ \label{eq:Pofout}
    (\mc P_\oplus f)(z)_{out}=\oint_{\mc C_{out}}dw\frac1{1- e^{-2\pi i(z-w)}}f_{out}(w)  \\
  +\oint_{\mc C_{out}} dw\frac{\Yt_{out}(z)\Yt_{out}(w)^{-1}-1}{1- e^{-2\pi i(z-w)}}f_{out}(w) \\
  +\oint_{\mc C_{in}}dw\frac{\Yt_{out}(z)\Yt_{in}(w)^{-1}}{1- e^{-2\pi i(z-w)}}f_{in}(w).
  }
To analyze the formulas above we notice that
\begin{equation} \label{eq:Notice1}
  \Yt_{out}(z)\in  \mathbb{C}[[e^{2\pi iz}]]\otimes e^{2\pi iaz\sigma_3} End(\mathbb C^2),\quad\Yt_{in}(z)\in \mathbb C[[e^{-2\pi iz}]]\otimes e^{2\pi iaz\sigma_3} End(\mathbb C^2),
\end{equation}
and
\eq{ \label{eq:Notice2}
  \int_{-\frac{1}{2}+ic}^{\frac{1}{2}+ic}dw \frac1{1-e^{-2\pi i (z-w)}}f(w)=  \left\{\begin{array}{l}
f_{+}(z),\quad \textrm{Im}~ z<c\\
         -f_{-}(z),\quad \textrm{Im}~ z>c.
        \end{array}
      \right.
}
Because of \eqref{eq:Notice1}, \eqref{eq:Notice2}, the action of $\cP_{\oplus}$ on $f(z)$ in \eqref{eq:Pofin}, \eqref{eq:Pofout} can be rewritten as
\eq{\label{eq:Pplusabcd}
  (\mc P_\oplus f)(z)=\begin{pmatrix}(\mc P_\oplus f)_{in,-}\\(\mc P_\oplus f)_{out,+}  \end{pmatrix}\oplus\begin{pmatrix}  (\mc P_\oplus f)_{in,+}\\(\mc P_\oplus f)_{out,-}
  \end{pmatrix}
  =\\=
  \begin{pmatrix}f_{in,-}\\f_{out,+}  \end{pmatrix}
  \oplus\begin{pmatrix}{\sf a}&{\sf b}\\{\sf c}&{\sf d}\end{pmatrix}\begin{pmatrix}f_{in,-}\\f_{out,+}\end{pmatrix},
}
where ${\sfa}, {\sfb}, {\sfc}, {\sfd} $ are the components of $\cP_\oplus$ with respect to the decomposition $\cH=\cH_{in} \oplus\cH_{out} $:
\eqs{
  ({\sfa} f)(z)=&\oint_{\mc C_{in}}dw\frac{\Yt_{in}(z)\Yt_{in}(w)^{-1}-\mathbb{I}}{1-e^{-2\pi i(z-w)}}f_{in}(w), && z\in\cin,\\
  ({\sfb} f)(z)=&\oint_{\mc C_{out}}dw\frac{\Yt_{in}(z)\Yt_{out}(w)^{-1}}{1-e^{-2\pi i(z-w)}}f_{out}(w), && z\in\cin,\\
  ({\sfc} f)(z)=&\oint_{\mc C_{in}}dw\frac{\Yt_{out}(z)\Yt_{in}(w)^{-1}}{1- e^{-2\pi i(z-w)}}f_{in}(w), && z\in\cout,\\
  ({\sfd} f)(z)=&\oint_{\mc C_{out}} dw \frac{\Yt_{out}(z)\Yt_{out}(w)^{-1}-\mathbb{I}}{1- e^{-2\pi i(z-w)}}f_{out}(w), && z\in\cout.
  \label{eq:kernels}
}
The functions $\Yt_{in},\Yt_{out}$ are the local solutions of the three-point problem \eqref{eq:3ptcyl} around $\mp i\infty$, defined in Definition \ref{def:YcalY}.
They are given by, respectively \eqref{eq:in_hyp},   
which is well-defined as a series in $e^{-2\pi iz}$, convergent for $|e^{-2\pi iz}|<1$, and \eqref{eq:out_hyp}, 
which is well-defined as a series in $e^{2\pi iz}$. 
\begin{definition}\label{def:tau_11} 
The tau function $\T^{(1,1)} $ is defined, in terms of the Plemelj operators $\cP_{\oplus},\cP_{\Sigma_{1,1}}$ in definitions \ref{def:cP_oplus} and \ref{def:cP_Sigma}, as:
\begin{equation}\label{eq:TAU11}
    \T^{(1,1)}(\tau):=\det_{\cH_+}\left[\cP_{\Sigma_{1,1},+}^{-1}\cP_{\oplus,+} \right],
\end{equation}
where
\begin{equation}
    \cP_{\cdot,+}:=\cP_\cdot\vert_{\cH_+}.
\end{equation}
\end{definition}
In general, it is useful to introduce the following notation:
\begin{notation}\label{not:tau_gn}
$\T^{(g,n)}$ denotes the determinant tau function on genus $g$ Riemann Surfaces with $n$ Fuchsian singularities.
\end{notation}
\subsection{Constructing the Fredholm determinant}\label{subsec:Fred_CM} 
As a stepping stone to theorem \ref{prop:CM_Ham}, that links the determinant tau function \eqref{eq:TAU11} to the isomonodromic tau function \eqref{eq:IsomTau11}, in the following proposition we show that the tau function $\T^{(1,1)}$ of Definition \ref{def:tau_11} depends solely on the operators $\sfa,\sfb,\sfc,\sfd$ defined by the three-point problem.
\begin{proposition}\label{prop:CM_tau}
The tau function $\T^{(1,1)}(\tau)$ is the Fredholm determinant of an operator acting on $L^2(S^{1})\otimes \left(\mathbb{C}^2 \oplus \mathbb{C}^{2}\right)$, 
explicitly determined by hypergeometric functions
\begin{equation} \label{eq:propdetT11}
\T^{(1,1)}(\tau) = \det\left[ \mathbb{I} -K_{1,1} \right],
\end{equation}
where
\begin{gather}
   K_{1,1}(z,w)=
  \left(\begin{array}{cc}-e^{-2\pi i \rho}\frac{\Yt_{out}(z+\tau)\Yt_{in}(w)^{-1}}{1-e^{-2\pi i(z-w+\tau)}}&
    \frac{\Yt_{out}(z+\tau)\Yt_{out}(w+\tau)^{-1}- \mathbb{I}}{1-e^{-2\pi i(z-w)}}\\
    \frac{\mathbb{I}-\Yt_{in}(z)\Yt_{in}(w)^{-1}}{1-e^{-2\pi i(z-w)}} &
    e^{2\pi i \rho}\frac{\Yt_{in}(z)\Yt_{out}(w+\tau)^{-1}}{1-e^{-2\pi i(z-w-\tau)}}
  \end{array} \right), 
\end{gather}
$\Yt_{in}$ and $\Yt_{out}$ are the solutions of the three-point problem on the cylinder \eqref{eq:3ptcyl}, given by \eqref{eq:in_hyp} and \eqref{eq:out_hyp} respectively, $\rho$ parametrizes the \(U(1)\) shift of the B-cycle monodromy of \(\mathcal{P}_{\Sigma}\), and $\tau$ is the modular parameter of the torus.
\end{proposition}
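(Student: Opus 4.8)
The plan is to evaluate the determinant \eqref{eq:TAU11} by block-decomposing the two Plemelj projectors with respect to the splitting $\cH=\cH_+\oplus\cH_-$ and reducing to a Fredholm determinant on $\cH_+$. First I would read off from \eqref{eq:Pplusabcd} and \eqref{eq:keroplus} the triangular form
\eq{
\cP_\oplus=\begin{pmatrix}\mathbb{I}&0\\ A_\oplus&0\end{pmatrix},\qquad
A_\oplus=\begin{pmatrix}\sfa&\sfb\\ \sfc&\sfd\end{pmatrix}\colon\cH_+\longrightarrow\cH_-,
}
where $\sfa,\sfb,\sfc,\sfd$ are the three-point operators of \eqref{eq:kernels}: indeed the $\cH_+$-component of $\cP_\oplus f$ equals $f_+$ while its kernel is $\cH_-$. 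Writing $\cP_{\Sigma_{1,1}}$ in the same basis with blocks $P_{\pm\pm}$ and imposing the intertwining relations $\cP_\oplus\cP_{\Sigma_{1,1}}=\cP_{\Sigma_{1,1}}$, $\cP_{\Sigma_{1,1}}\cP_\oplus=\cP_\oplus$ forces $P_{-+}=A_\oplus P_{++}$, $P_{--}=A_\oplus P_{+-}$ and, crucially, $P_{++}=\mathbb{I}-P_{+-}A_\oplus$. Since both projectors have image $\cH_{\mathscr{T}}$, which is the graph of $A_\oplus$ over $\cH_+$, the restrictions $\cP_{\oplus,+}$ and $\cP_{\Sigma_{1,1},+}$ are isomorphisms onto $\cH_{\mathscr{T}}$, and a short computation on the graph identifies $\cP_{\Sigma_{1,1},+}^{-1}\cP_{\oplus,+}$ with $P_{++}^{\pm1}=(\mathbb{I}-P_{+-}A_\oplus)^{\pm1}$, the sign of the exponent being fixed by the ordering convention in Definition \ref{def:tau_11}. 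Taking determinants thus reduces \eqref{eq:TAU11} to $\det_{\cH_+}\!\bigl(\mathbb{I}-P_{+-}A_\oplus\bigr)^{\pm1}$, and it remains to identify the composite $P_{+-}A_\oplus$ with the kernel $K_{1,1}$ under the Fourier identification $\cH_+\cong L^2(S^1)\otimes(\mathbb{C}^2\oplus\mathbb{C}^2)$.

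The heart of the argument is the explicit computation of the block $P_{+-}\colon\cH_-\to\cH_+$ of $\cP_{\Sigma_{1,1}}$ from its integral kernel \eqref{ps}, built out of the torus solution $\mathcal{Y}_{CM}$ and the twisted Cauchy kernel $\Xi_2$ of \eqref{eq:twistedCauchy}. Two structural facts drive the calculation. First, the interchangeability $\widetilde{Y}(z)^{-1}Y_{CM}(z)=\widetilde{\mathcal{Y}}(z)^{-1}\mathcal{Y}_{CM}(z)$ of Definition \ref{def:YcalY} lets me replace $Y_{CM}$ by the three-point solution $\Yt$ inside every ratio; combined with the pole expansions \eqref{Xi_expansion} and \eqref{eq:ExpExpansion} this trades the twisted Cauchy kernel for the plain exponential kernel $\bigl(1-e^{-2\pi i(z-w)}\bigr)^{-1}$ and produces the regularized combinations $\Yt\,\Yt^{-1}-\mathbb{I}$ together with the mixed products $\Yt_{in}\Yt_{out}^{-1}$, $\Yt_{out}\Yt_{in}^{-1}$. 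Second, the B-cycle twist \eqref{eq:YCMtwist}, $Y_{CM}(z+\tau)=Y_{CM}(z)e^{2\pi i\bs Q}$, together with the companion shift property \eqref{eq:TauShiftXi2} of $\Xi_2$, converts $\cout$-data into $\tau$-shifted $\cin$-data: this is the origin both of the shifted arguments $z+\tau$, $w+\tau$ and of the weights $e^{\pm 2\pi i\rho}$ that decorate $K_{1,1}$. Deforming the contour $\cC$ to $\CS$ and $\CP$ as in Figure \ref{fig:contour_PS}, with the clockwise prescription of Definition \ref{notation:contour}, then extracts the $\pm$ Fourier parts via \eqref{eq:functionComponents} and assembles the four blocks.

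Assembling $P_{+-}A_\oplus$ I expect to recover exactly \eqref{eq:kernels} dressed by the B-cycle shift: the block coming from $\sfa$ reproduces $(\mathbb{I}-\Yt_{in}(z)\Yt_{in}(w)^{-1})/(1-e^{-2\pi i(z-w)})$, the one from $\sfd$ its $\tau$-shifted counterpart $(\Yt_{out}(z+\tau)\Yt_{out}(w+\tau)^{-1}-\mathbb{I})/(1-e^{-2\pi i(z-w)})$, and the two off-diagonal blocks the $\sfb,\sfc$-kernels carrying a single $\tau$-shift and the factors $e^{\pm 2\pi i\rho}$, matching $K_{1,1}$ entry by entry (so that the exponent $\pm1$ above resolves to $+1$). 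I would close the argument by verifying that $K_{1,1}$ is trace class on $L^2(S^1)\otimes(\mathbb{C}^2\oplus\mathbb{C}^2)$, which follows from the exponential decay of the Fourier coefficients of $\Yt_{in},\Yt_{out}$ in \eqref{eq:in_hyp}, \eqref{eq:out_hyp} reinforced by the suppression $|e^{2\pi i\tau}|<1$ produced by the shifts, so that $\det(\mathbb{I}-K_{1,1})$ in \eqref{eq:propdetT11} converges and the reduction is rigorous.

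The \textbf{main obstacle} I anticipate is the bookkeeping in the second step: ensuring that the interplay of the simple pole of $\Xi_2$, its twist \eqref{eq:TauShiftXi2}, and the monodromy \eqref{eq:YCMtwist} yields precisely the shifted denominators $1-e^{-2\pi i(z-w\pm\tau)}$ in the off-diagonal blocks rather than unshifted ones, while correctly accounting for the $w=z$ residue under the clockwise rule and for the $\rho$-dependent normalizations of $\Xi_2$. A secondary point, needed for the determinant to be well defined, is the invertibility of $\cP_{\Sigma_{1,1},+}$ -- equivalently, that $\mathbb{I}-P_{+-}A_\oplus$ has trivial kernel -- which I would argue holds for generic monodromy data ($a\notin\mathbb{Z}+\tfrac12$) away from the Malgrange divisor.
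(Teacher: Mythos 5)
Your opening block algebra is correct as far as it goes: the intertwining relations do force $P_{-+}=A_\oplus P_{++}$, $P_{--}=A_\oplus P_{+-}$ and $P_{++}=\mathbb{I}-P_{+-}A_\oplus$, and solving $\cP_{\Sigma_{1,1}}F=\cP_\oplus f$ with $F\in\cH_+$ gives, unambiguously,
\begin{equation*}
\cP_{\Sigma_{1,1},+}^{-1}\cP_{\oplus,+}=P_{++}^{-1}=\left(\mathbb{I}-P_{+-}A_\oplus\right)^{-1},
\end{equation*}
so the exponent is $-1$; it is not fixed by any ``ordering convention''. The genuine gap is your central identification $P_{+-}A_\oplus=K_{1,1}$: it is false, and if it were true your own algebra would yield $\T^{(1,1)}=\det\left(\mathbb{I}-K_{1,1}\right)^{-1}$, the reciprocal of the proposition. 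To see what $P_{+-}A_\oplus$ really is, use the one piece of information about $\cP_{\Sigma_{1,1}}$ that your proposal never invokes: its null space. By \eqref{eq:ker11} (and the footnote in the paper's proof), $\ker\cP_{\Sigma_{1,1}}=\cH_{\mathscr{A}}$ is the space of functions on the annulus, i.e.\ by \eqref{eq:A_id}, \eqref{eq:def_nabla} the graph over $\cH_-$ of the gluing operator $B=\begin{pmatrix}0&\nabla^{-1}\\ \nabla&0\end{pmatrix}$. A projector whose image is the graph of $A_\oplus$ over $\cH_+$ and whose kernel is the graph of $B$ over $\cH_-$ has $P_{++}=\mathbb{I}+B(\mathbb{I}-A_\oplus B)^{-1}A_\oplus=(\mathbb{I}-BA_\oplus)^{-1}$ and $P_{+-}=-B(\mathbb{I}-A_\oplus B)^{-1}$. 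Hence $P_{+-}A_\oplus=-\widehat{K}_{1,1}\bigl(\mathbb{I}-\widehat{K}_{1,1}\bigr)^{-1}$ is a resolvent, not the kernel: the operator actually equal to $\widehat{K}_{1,1}=\begin{pmatrix}\nabla^{-1}\sfc&\nabla^{-1}\sfd\\ \nabla\sfa&\nabla\sfb\end{pmatrix}$ is $BA_\oplus$, and the correct chain is $\cP_{\Sigma_{1,1},+}^{-1}\cP_{\oplus,+}=P_{++}^{-1}=\mathbb{I}-BA_\oplus$, which after conjugation by $\diag(1,\nabla^{-1})$ becomes $\mathbb{I}-K_{1,1}$. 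This is, in substance, the paper's proof: the only input from $\cP_{\Sigma_{1,1}}$ is its kernel, and the $\tau$-shifts and $e^{\pm 2\pi i\rho}$ factors in $K_{1,1}$ come from the gluing map $\nabla$, not from integrals against $\Xi_2$.

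The second, related failure is your proposed computation of $P_{+-}$ itself. The integral kernel of $\cP_{\Sigma_{1,1}}$ is $Y_{CM}(z)\,\Xi_2(z,w)\,Y_{CM}(w)^{-1}$, involving the torus solution at two distinct points; the interchangeability statement of Definition \ref{def:YcalY} only says $\widetilde{Y}(z)^{-1}Y_{CM}(z)=\widetilde{\mathcal{Y}}(z)^{-1}\mathcal{Y}_{CM}(z)$ at one and the same point $z$, so it gives no license to replace $Y_{CM}$ by $\Yt$ inside this kernel. Nor can any such replacement be valid: as $P_{+-}=-B(\mathbb{I}-A_\oplus B)^{-1}$ shows, the individual blocks of $\cP_{\Sigma_{1,1}}$ contain the resolvent, i.e.\ they encode the solved torus Riemann--Hilbert problem and genuinely depend on $Y_{CM}$ (equivalently on $Q(\tau)$), whereas the blocks of $K_{1,1}$ are pure three-point data. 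Only the particular combination $P_{++}^{-1}=\mathbb{I}-BA_\oplus$ is free of $Y_{CM}$, and isolating exactly that combination --- either by the graph/kernel argument above, or as the paper does by writing $F=A+\cP_\oplus f$ with $A\in\cH_{\mathscr{A}}$ and solving the $\cH_-$ components --- is the step your proposal is missing.
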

\begin{proof}
Starting from the definition \eqref{eq:TAU11} of $\T^{(1,1)}$, we compute the action of $\cP_{\Sigma_{1,1},+}^{-1}\cP_{\oplus,+}$ on a function $f\in\cH_+$:
\begin{align}
F:=\mc P_{\Sigma_{1,1},+}^{-1}\mc P_{\oplus,+}f 
 \quad \Rightarrow \quad \mc P_{\Sigma_{1,1}}F=\mc P_{\oplus}f \, , \qquad\quad F \in \cH_{+}.
\end{align}
Noting that for any projector $\mc P$ acting on a vector $x$, one has $x-\mc P x\in \textrm{ker } \mc P$, and that\footnote{When $\left(\mathbb{I}- K_{1,1}\right)$ is invertible, $\cH = \cH_{\mathscr{T}} \oplus \cH_{A}$, and therefore $\textrm{ker } \cP_{\Sigma_{1,1}}=\cH_\mathscr{A}$. } $ \textrm{ker } \cP_{\Sigma_{1,1}}=\cH_\mathscr{A}$:
\begin{align} \label{eq:F_11}
    F=(F-\mc P_{\Sigma_{1,1}} F) + \mc P_{\Sigma_{1,1}} F=A+\mc P_\oplus f, && A:=F-\mc P_{\Sigma_{1,1}} F\in \mc H_{\mathscr{A}}.
\end{align}
In components, $A$ reads
\eq{
  A=\begin{pmatrix}
    A_{in,-}(z)\\A_{out,+}(z)
\end{pmatrix}\oplus
\begin{pmatrix}
  A_{in,+}(z)\\A_{out,-}(z)
\end{pmatrix}.
}
The identification of $\cC_{in}$ with $\cC_{out}$, that produces the torus from the trinion as in Figure \ref{fig:TorusTrinion}, is implemented at the level of functional spaces by setting
\begin{align}\label{eq:A_id}
    A_{in, \pm} = \nabla^{-1} A_{out, \pm}, 
\end{align}
where $\nabla:\cH_{in}\rightarrow\cH_{out}$ is a translation operator acting on an arbitrary function $g(z)\in\cH_{in}$ as
\eq{\label{eq:def_nabla} \nabla g (z) =e^{2\pi i\rho}g (z-\tau).} 
The factor $e^{2\pi i\rho}$ takes into account the $U(1)$ B-cycle monodromy of the Cauchy kernel in \eqref{eq:TauShiftXi2}. Using the explicit form of $\cP_\oplus$ in \eqref{eq:Pplusabcd}, together with the fact that $F\in\cH_+$, equation \eqref{eq:F_11} reads:
\eq{\label{eq:comp_F11}
  \begin{pmatrix}F_{in,-}\\F_{out,+}  \end{pmatrix}
  \oplus\begin{pmatrix}0\\0\end{pmatrix}
  =\begin{pmatrix}
    A_{in,-}\\A_{out,+}
\end{pmatrix}\oplus
\begin{pmatrix}
  A_{in,+}\\A_{out,-}
\end{pmatrix}+
  \begin{pmatrix}f_{in,-}\\f_{out,+}  \end{pmatrix}
  \oplus\begin{pmatrix}{\sfa}&{\sfb}\\{\sfc}&{\sfd}\end{pmatrix}\begin{pmatrix}f_{in,-}\\f_{out,+}.\end{pmatrix}
}
The $\cH_-$ components of \eqref{eq:comp_F11} are solved by
\eq{\label{eq:11AHminus}
  A_{out,-}=-{\sfc}f_{in,-}-{\sfd}f_{out,+}= \nabla A_{in,-}\,, \\
  A_{in,+}=-{\sfa}f_{in,-}-{\sfb}f_{out,+} = \nabla^{-1} A_{out,+}\,,
}
and substituting \eqref{eq:11AHminus} into \eqref{eq:comp_F11} gives
\eq{\label{eq:cL11}
F = \begin{pmatrix}F_{in,-}\\F_{out,+}\end{pmatrix}=\begin{pmatrix}f_{in,-}\\f_{out,+}\end{pmatrix}-\begin{pmatrix}\nabla^{-1}{\sfc}&\nabla^{-1}{\sfd}\\\nabla {\sfa}&\nabla {\sfb}\end{pmatrix}\begin{pmatrix}f_{in,-}\\f_{out,+}\end{pmatrix}:= \left( \mathbb{I} -\widehat{K}_{1,1}\right) f.
}
We note that the kernel $\widehat{K}$ in \eqref{eq:cL11}, when expressed in spherical coordinates, becomes the one appearing in Section 4 of \cite{Bonelli:2019boe}.
It is however more natural to conjugate the kernel $\widehat{K}_{1,1}$ by the operator ${\rm diag}(1,\nabla^{-1})$:
\eq{\label{eq:nabla_conj11}
K_{1,1}:={\rm diag}(1,\nabla^{-1})\, \widehat{K}_{1,1}\, {\rm diag}(1,\nabla) = \begin{pmatrix}
   \nabla^{-1} {\sfc}&\nabla^{-1}{\sfd}\nabla\\{\sfa}&{\sfb}\nabla
  \end{pmatrix}
}
The advantage of such a conjugation is the following: recall that we identify $\cin$ and $\cout$ with two copies of the A-cycle obtained by cutting the B-cycle of the torus. They are given by the segments in figure \eqref{fig:TorusParalgram_11} with endpoints identified.
 \begin{figure}[H]
\centering
\begin{tikzpicture}[scale = 4]
\draw[ultra thick, blue,decoration={markings, mark=at position 0.5 with {\arrow{<}}}, postaction={decorate}] (0,0) to (1,0);
\draw[ultra thick, red,decoration={markings, mark=at position 0.5 with {\arrow{>}}}, postaction={decorate}] (0.72,0.5) to (1.72,0.5);
\draw[thick] (0,0) to (0.72,0.5);
\draw[thick] (1,0) to (1.72,0.5);

\node at (0.5,-0.1) {{\color{blue}$\cin$}};
\node at (1.22,0.6) {{\color{red}$\cout$}};

\node at (-0,-0.06) {$0$};
\node at (1,-0.06) {$1$};
\node at (0.72,0.56) {$\tau$};
\node at (1.72,0.56) {$1+\tau$};
\end{tikzpicture}
\caption{$\cin,\cout$ in coordinates on the torus}
\label{fig:TorusParalgram_11}
\end{figure}
\noindent After the conjugation, $\widehat{K}_{1,1}$ is defined on a single circle, since all the functions on $\cout$ are translated by $\tau$, as is clear from the explicit expression
\eq{\label{eq:kerker_11}
   K_{1,1}(z,w)=
  \left(\begin{array}{cc}-e^{-2\pi i \rho}\frac{\Yt_{out}(z+\tau)\Yt_{in}(w)^{-1}}{1-e^{-2\pi i(z-w+\tau)}}&
    \frac{\Yt_{out}(z+\tau)\Yt_{out}(w+\tau)^{-1}-\mathbb{I}}{1-e^{-2\pi i(z-w)}}\\
    \frac{\mathbb{I} -\Yt_{in}(z)\Yt_{in}(w)^{-1}}{1-e^{-2\pi i(z-w)}} &
     e^{2\pi i \rho}\frac{\Yt_{in}(z)\Yt_{out}(w+\tau)^{-1}}{1-e^{-2\pi i(z-w-\tau)}}
  \end{array} \right).}
The tau function $\T^{(1,1)}$ in \eqref{eq:TAU11} is therefore
  \eqs{\T^{(1,1)}(\tau) = \det_{\cH_{+}}\left[\cP_{\Sigma_{1,1},+}^{-1} \cP_{\oplus,+}  \right] = \det[\mathbb{I}- K_{1,1}]. }
\end{proof}
Let us highlight the block determinant structure of the tau function 
\begin{gather} \label{eq:11blockdet}
    \T^{(1,1)}(\tau) = \det_{\cH_{+}} \left[\cP_{\Sigma_{1,1},+}^{-1} \cP_{\oplus,+}  \right] = \det \left[ \mathbb I - \begin{pmatrix}
   \nabla^{-1} {\sfc}&\nabla^{-1}{\sfd}\nabla\\{\sfa}&{\sfb}\nabla
  \end{pmatrix} \right],
\end{gather}
which will prove important in theorem \eqref{thm:GL_det}, that generalizes proposition \ref{prop:CM_tau} to the case of a genus 1 surface with $n$ punctures, with tau function $\T^{(1,n)}$. 

\subsection{Relation to the Hamiltonian: Proof of Theorem \ref{prop:CM_Ham}}\label{subsec:Ham_CM}
In this section we prove that the logarithmic derivative of the tau function \eqref{eq:TAU11}
differs from the Hamiltonian \eqref{eq:Ham_CM} by a factor that we compute. Let us recall the main statement of theorem \ref{prop:CM_Ham}:
\begin{equation}
  \T_{CM}(\tau)=\det\left[\mathbb{I}- K_{1,1}\right]\frac{e^{-2\pi i\rho}\eta(\tau)^2}{\theta_{1}(Q-\rho)\theta_{1}(Q+\rho)} e^{2\pi i\tau \left( a^2 +\frac{1}{6} \right)}\Upsilon_{1,1}(a,m),
\end{equation}
where $\Upsilon$ is an arbitrary function of the monodromy data of the system \eqref{eq:linear_systemCM}.
\begin{proof}
 Recall from \eqref{ps}, \eqref{pp} that 
 \begin{gather}
     \cP_{\Sigma_{1,1}}f(z) = \int_{\mathcal{C}} \frac{dw}{2\pi i}\, Y_{CM}(z,\tau) \Xi_2(z,w;\tau) Y_{CM}(w, \tau){^{-1}} f(w) \\
     \cP_{\oplus} f(z) = \int_{\mathcal{C}} dw\, \frac{\Yt(z) \Yt(w)^{-1}}{1-e^{-2\pi i(z-w)}} f(w),
 \end{gather}
 and since $\cP_\oplus$ does not depend on $\tau$, the logarithmic derivative of $\T^{(1,1)}$ in \eqref{eq:TAU11} is (see also pg. 20 in \cite{Gavrylenko:2016zlf})
\begin{gather} \label{eq:11Tr}
    \partial_{\tau} \log \T^{(1,1)}(\tau) = - \tr_{\cH} \cP_{\oplus} \partial_{\tau} \cP_{\Sigma_{1,1}}.
 \end{gather}
{The computation of the \(\tau\)-derivative of \(\mathcal{P}_{\Sigma}\) needs careful analysis.
In principle, the operator $\mathcal{P}_{\Sigma}$ acts on different spaces for different values of the  complex moduli: to define its derivative we need a local identification of these spaces (connection).
In the spherical case such an identification is absolutely natural, because we can keep the system of contours \(\mathcal{C}_{in,out}\) untouched while varying the complex moduli; which is no longer true in the torus case, since the position of $\cC_{out}$ depends on \(\tau\), see Figure~\ref{fig:TorusParalgram_11}.
In order to make the space \(\mathcal{H}_{out}\) \(\tau\)-independent we identify it with \(\mathcal{H}_{in}\) using the shift operator \(\nabla\) defined in \eqref{eq:def_nabla}, by setting \(\mathcal{H}_{out}=\nabla\mathcal{H}_{in}'\), where the space $\cH_{in}'$ is isomorphic to $\cH_{in}$.
This identification gives us a new operator $\cP_{\Sigma_{1,1}}'$ acting on ``time-independent'' spaces: \(\mathcal{P}_{\Sigma_{1,1}}': \mathcal{H}_{in}\oplus \mathcal{H}_{in}'\to \mathcal{H}_{in}\oplus \mathcal{H}_{in}'\).
\begin{equation}
\label{eq:Pp}
\mathcal{P}_{\Sigma_{1,1}}':=\diag(1,\nabla^{-1})\mathcal{P}_{\Sigma_{1,1}}\diag(1,\nabla).
\end{equation}
We identify \(\mathcal{H}_{in}'\) with the space of functions on \(\mathcal{C}_{in}'\), which is just another copy of \(\mathcal{C}_{in}\), introduced for convenience to describe the block structure of \(\mathcal{P}_{\Sigma}'\) by indicating the positions of the arguments of the kernel. Using these notations, the kernel of \(\mathcal{P}_{\Sigma}'\) is given by the following expressions:
\begin{equation}\begin{gathered}\begin{split}
\label{eq:PpKernel}
&\mathcal{P}_{\Sigma_{1,1}}'(w,z)=\mathcal{P}_{\Sigma_{1,1}}(w,z), \quad \text{for } w, z\in \mathcal{C}_{in},\\
&\mathcal{P}_{\Sigma_{1,1}}'(w,z)=e^{-2\pi i\rho}\mathcal{P}_{\Sigma_{1,1}}(w+\tau,z), \quad \text{for } w\in \mathcal{C}_{in}', z\in \mathcal{C}_{in},\\
&\mathcal{P}_{\Sigma_{1,1}}'(w,z)=e^{2\pi i\rho}\mathcal{P}_{\Sigma_{1,1}}(w,z+\tau), \quad \text{for } w\in \mathcal{C}_{in}, z\in \mathcal{C}_{in}',\\
&\mathcal{P}_{\Sigma_{1,1}}'(w,z)=\mathcal{P}_{\Sigma_{1,1}}(w+\tau,z+\tau), \quad \text{for } w, z\in \mathcal{C}_{in}'.
\end{split}\end{gathered}\end{equation}
Now we define the $\tau$-derivative of \(\mathcal{P}_{\Sigma_{1,1}}\) simply as
\begin{equation}
\label{eq:3}
\partial_{\tau}\mathcal{P}_{\Sigma_{1,1}}:=\diag(1,\nabla)\partial_{\tau}\mathcal{P}_{\Sigma_{1,1}}'\diag(1,\nabla^{-1}).
\end{equation}
Using \eqref{eq:PpKernel} we get the kernel of \(\partial_{\tau}\mathcal{P}_{\Sigma_{1,1}}\) explicitly:
\begin{equation}\begin{gathered}\begin{split}
\label{eq:11xizeta}
  \left(\partial_\tau\mc P_{\Sigma_{1,1}}\right)(w,z)&=\partial_\tau\mc P_{\Sigma_{1,1}}(w,z) \quad \textrm{for} \quad w, z \in \cC_{in}, \\
  \left(\partial_\tau\mc P_{\Sigma_{1,1}}\right)(w,z) &=(\partial_\tau+\partial_w)\mc P_{\Sigma_{1,1}}(w,z) \quad \textrm{for} \quad w \in \cC_{out}, \, z \in \cC_{in}, \\
  \left(\partial_\tau\mc P_{\Sigma_{1,1}}\right)(w,z) &=(\partial_\tau+\partial_{z})\mc P_{\Sigma_{1,1}}(w,z) \quad \textrm{for} \quad w \in \cC_{in}, \, z \in \cC_{out},\\
  \left(\partial_\tau\mc P_{\Sigma_{1,1}}\right)(w,z) &=(\partial_\tau+\partial_w+\partial_{z})\mc P_{\Sigma_{1,1}}(w,z) \quad w, z \in \cC_{out}.
\end{split}\end{gathered}\end{equation}}
 Therefore\footnote{We drop the $\tau$ dependence of $Y_{CM}$, $L_{CM}$ and $M_{CM}$ in this proof for brevity}, 
 \begin{gather}
     \begin{split}
&-\tr_{\cH}(\mathcal{P}_\oplus\partial_\tau\mathcal{P}_{\Sigma_{1,1}}) \\&=- \oint_{\CP}dw \oint_{\CS}\frac{dz}{2\pi i} \frac{1}{1-e^{-2\pi i(z-w)}}\tr\left\{\Yt(z)\Yt(w)^{-1} \partial_{\tau} \left( Y_{CM}(w)\Xi_2(w,z)Y_{CM}(z)^{-1} \right) \right\} \\
&  - \oint_{\CP}dw \oint_{\CS_{out}}\frac{dz}{2\pi i} \frac{1}{1-e^{-2\pi i(z-w)}}\tr\left\{\Yt(z)\Yt(w)^{-1} \partial_{z} \left( Y_{CM}(w)\Xi_2(w,z)Y_{CM}(z)^{-1} \right) \right\} \\
&  - \oint_{\CP_{out}}dw \oint_{\CS}\frac{dz}{2\pi i} \frac{1}{1-e^{-2\pi i(z-w)}}\tr\left\{\Yt(z)\Yt(w)^{-1} \partial_{w} \left( Y_{CM}(w)\Xi_2(w,z)Y_{CM}(z)^{-1} \right) \right\} \\ 
&= -I_{\tau}- I_{z} - I_{w},
\end{split} \label{eq:11_Tr}
 \end{gather}
 where 
\begin{gather}
I_{\tau} :=  \oint_{\CP}dw  \oint_{\CS}\frac{dz}{2\pi i}  \frac{1}{1- e^{-2\pi i(z-w)}}\tr\left\{\Yt(z)\Yt(w)^{-1} \partial_{\tau} \left( Y_{CM}(w)\Xi_2(w,z)Y_{CM}(z \right)^{-1} \right\}, \label{eq:11_I_tau}\\
I_{z} :=   \oint_{\CP}dw  \oint_{\CS_{out}}\frac{dz}{2\pi i}  \frac{1}{1- e^{-2\pi i(z-w)}}\tr\left\{\Yt(z)\Yt(w)^{-1} \partial_{z} \left( Y_{CM}(w)\Xi_2(w,z)Y_{CM}(z)^{-1} \right) \right\} \label{eq:11_I_z},\\
I_{w} :=  \oint_{\CP_{out}}dw  \oint_{\CS}\frac{dz}{2\pi i}  \frac{1}{1- e^{-2\pi i(z-w)}}\tr\left\{\Yt(z)\Yt(w)^{-1} \partial_{w} \left( Y_{CM}(w)\Xi_2(w,z)Y_{CM}(z)^{-1} \right) \right\} \label{eq:11_I_w}.
\end{gather}
In the multiple integrals we always use the convention that $z$ is inside $w$ (recall that the notation $\CP$, $\CS$ is explained in Figure \ref{fig:contour_PS}) and we close the contours in the direction of $\mathscr{A}$.
The reason for such choice of the contour is the following: the
kernel  \(\left( \partial_{\tau}\mathcal{P}_{\Sigma_{1,1}} \right)(w,z)\) is regular at \(z=w\) since \(\partial_{\tau}\frac1{w-z}=0\) and \(\left( \partial_{\tau}+\partial_z+\partial_w \right)\frac1{w-z}=0\),
which means that the relative positions of the arguments of \(\partial_{\tau}\mathcal{P}_{\Sigma_{1,1}}\) can be arbitrary.
Keeping this in mind we first act on \(\left( \mathcal{P}_{\Sigma_{1,1}} \right)(w,z_0)\), viewed as a function of $w$, by \(\mathcal{P}_{\oplus}(z,w)\):
the action results in an integral over $w$, whose contour should be chosen according to Definition \ref{notation:contour}.
Namely, since \(\mathcal{P}_{\oplus}(z,w)\) has pole along the diagonal, we deform the contour for \(w\) to \(\overline{\mathcal{C}}\), and also move \(z\) to \(\underline{\mathcal{C}}\)  for convenience.
After this, we set \(z_0=z\) and integrate over $z$ on $\CS$ to take trace.

The integration of $w$ over $\CP$ then picks up the residue at $w=z$. 
Let us begin with the integral $I_{z}$:
 \begin{equation} \label{eq:11_I_z1p2}
 \begin{split}
     I_{z} &= \oint_{\CP}dw  \oint_{\CS_{out}}\frac{dz}{2\pi i}  \frac{1}{1- e^{-2\pi i(z-w)}}\tr\left\{\Yt(z)\Yt(w)^{-1} \partial_{z} \left( Y_{CM}(w)\Xi_2(w,z)Y_{CM}(z)^{-1} \right) \right\} \\
     &= I_{z}^{(1)} + I_{z}^{(2)},
     \end{split}
 \end{equation}
 where 
 \begin{gather}
      I_{z}^{(1)} := \oint_{\CP}dw  \oint_{\CS_{out}}\frac{dz}{2\pi i}  \frac{1}{1- e^{-2\pi i(z-w)}}\tr\left\{\Yt(w)^{-1} Y_{CM}(w) \partial_{z} \Xi_2(w,z) Y_{CM}(z)^{-1}  \Yt(z) \right\}, \label{eq:11_I_z1}\\
  \label{eq:11_I_z2} I_{z}^{(2)}:=  \oint_{\CP}dw  \oint_{\CS_{out}}dz  \frac{1}{1-e^{-2\pi i(z-w)}}\tr\left\{\Yt(w)^{-1}Y_{CM}(w) \Xi_2(w,z) \partial_{z}Y_{CM}(z)^{-1} \Yt(z)\right\}.
 \end{gather}
 To compute $I_{z}^{(1)}$, we expand $\Xi_2(w,z)$ as in \eqref{Xi_expansion}, and use \eqref{eq:ExpExpansion} 
 \begin{gather}
     I_{z}^{(1)} =\oint_{\CP}dw  \oint_{\CS_{out}}\frac{dz}{2\pi i}  \frac{1}{1-e^{-2\pi i(z-w)}}\tr\left\{\Yt(w)^{-1} Y_{CM}(w) \partial_{z} \Xi_2(w,z) Y_{CM}(z)^{-1}  \Yt(z) \right\} \nonumber \\
     =  \oint_{\CP}\frac{dw}{2\pi i}\oint_{\CS}\frac{dz}{2\pi i}\tr\left\{\Yt(w)^{-1}Y_{CM}(w)  \left[-\frac{1}{(w-z)^3}+\frac{i\pi}{(w-z)^2} \right] Y_{CM}(z)^{-1}\Yt(z) \right\}  \nonumber\\
+ \oint_{\CP}\frac{dw}{2\pi i}\oint_{\CS}\frac{dz}{2\pi i}\tr\left\{\frac{\Yt(w)^{-1}Y_{CM}(w)}{2(w-z)}\left[\left( \begin{array}{cc}
 \frac{\theta_1''(Q-\rho)}{\theta_1(Q-\rho)} & 0 \\ 0 &  \frac{\theta_1''(Q+\rho)}{\theta_1(Q+\rho)}
 \end{array} \right)-\frac{1}{3}\frac{\theta_1'''}{\theta_1'}-\frac{1}{6}(2\pi i)^2 \right] Y_{CM}(z)^{-1}\Yt(z) \right\} \nonumber\\
=\frac{1}{2}\oint_{\CS_{out}}\frac{dz}{2\pi i}\tr\left\{\partial_z^2\left(\Yt(z)^{-1}Y_{CM}(z) \right)Y_{CM}(z)^{-1}\Yt(z) \right\} -\frac{1}{2}\oint_{\CS_{out}}dz\tr\left\{L_{CM}-L_{3pt} \right\} \nonumber\\
 -\oint_{\CS_{out}}\frac{dz}{2\pi i}\tr\left\{\frac{1}{2}\left( \begin{array}{cc}
 \frac{\theta_1''(Q-\rho)}{\theta_1(Q-\rho)} & 0 \\ 0 &  \frac{\theta_1''(Q+\rho)}{\theta_1(Q+\rho)}
 \end{array} \right)-\frac{\mathbb{I}}{6}\frac{\theta_1'''}{\theta_1'}-\frac{(2\pi i)^2\mathbb{I}}{12} \right\} 
, \label{I_z1}
 \end{gather}
with $L_{CM}(z)$, $L_{3pt}(z)$ given in \eqref{linear_system}, \eqref{eq:3ptcyl} respectively. Similarly, $I_{z}^{(2)}$ reads
{\small \begin{gather}
     I_{z}^{(2)} = \oint_{\CP}dw  \oint_{\CS_{out}}\frac{dz}{2\pi i}  \frac{1}{1-e^{-2\pi i(z-w)}}\tr\left\{\Yt(w)^{-1} Y_{CM}(w) \Xi_2(w,z) \partial_{z}Y_{CM}(z)^{-1}  \Yt(z)\right\} \nonumber \\
     = -\oint_{\CP}\frac{dw}{2\pi i} \oint_{\CS_{out}}\frac{dz}{2\pi i}\tr\left\{  \Yt(w)^{-1}Y_{CM}(w)  \left[\frac{1}{(w-z)^2}\right] \partial_{z} Y_{CM}(z)^{-1}\Yt(z) \right\} \nonumber \\
 -\oint_{\CP}\frac{dw}{2\pi i} \oint_{\CS_{out}}\frac{dz}{2\pi i}\tr\left\{ \frac{ \Yt(w)^{-1}Y_{CM}(w) }{w-z}\left[ \left( \begin{array}{cc}
 \frac{\theta_1'(Q-\rho)}{\theta_1(Q-\rho)} & 0 \\ 0 &  -\frac{\theta_1'(Q+\rho)}{\theta_1(Q+\rho)}
 \end{array} \right)- i\pi \right] \partial_zY_{CM}(z)^{-1}\Yt(z) \right\}\nonumber \\
          =  \oint_{\CS_{out}}\frac{dz}{2\pi i} \tr\left\{\partial_{z}\left(\Yt(z)^{-1} Y_{CM}(z)\right)  \partial_{z}Y_{CM}(z)^{-1}  \Yt(z)\right\} \nonumber \\
          -\oint_{\CS_{out}}\frac{dz}{2\pi i} \tr\left\{ \left( \begin{array}{cc}
 \frac{\theta_1'(Q-\rho)}{\theta_1(Q-\rho)} & 0 \\ 0 &  -\frac{\theta_1'(Q+\rho)}{\theta_1(Q+\rho)}
 \end{array} \right) L_{CM}(z) \right\} +\frac{1}{2}  \oint_{\CS_{out}}dz \tr L_{CM}(z). \label{eq:Iz2F}
 \end{gather}}
Plugging the expressions for $I_z^{(1)}$ in \eqref{I_z1} and $I_z^{(2)}$ in \eqref{eq:Iz2F} into \eqref{eq:11_I_z1p2}, observing that $\tr L_{CM}=\tr L_{3pt}=0$, and rearranging the terms we find:
\begin{gather}
    I_{z} =  \frac{1}{2}\oint_{\CS_{out}}\frac{dz}{2\pi i}  \tr\left\{\partial_{z}^2 \left(\Yt(z)^{-1}Y_{CM}(z)   \right)   Y_{CM}(z)^{-1} \Yt(z) \right\} \nonumber\\
     +   \oint_{\CS_{out}}\frac{dz}{2\pi i}  \tr\left\{\partial_{z} \left( \Yt(z)^{-1}Y_{CM}(z)  \right)  \partial_{z}Y_{CM}(z)^{-1} \Yt(z)\right\} \nonumber \\
     -\oint_{\CS_{out}}\frac{dz}{2\pi i}\tr\left\{\frac{1}{2}\left( \begin{array}{cc}
 \frac{\theta_1''(Q-\rho)}{\theta_1(Q-\rho)} & 0 \\ 0 &  \frac{\theta_1''(Q+\rho)}{\theta_1(Q+\rho)}
 \end{array} \right)-\frac{\mathbb{I}}{6}\frac{\theta_1'''}{\theta_1'}-\frac{(2\pi i)^2\mathbb{I}}{12} \right\} .
 \label{I_z}
\end{gather}
Let us integrate by parts the first two terms in \eqref{I_z}:
\begin{gather}
   \frac{1}{2}\oint_{\CS_{out}}\frac{dz}{2\pi i}  \tr\left\{\partial_{z}^2 \left(\Yt(z)^{-1}Y_{CM}(z)   \right)   Y_{CM}(z)^{-1} \Yt(z) \right\} \nonumber\\
     + \oint_{\CS_{out}}\frac{dz}{2\pi i}  \tr\left\{\partial_{z} \left( \Yt(z)^{-1}Y_{CM}(z) \right)  \partial_{z}Y_{CM}(z)^{-1} \Yt(z)\right\} \nonumber \\
     = -\frac{1}{2}\oint_{\CS_{out}}\frac{dz}{2\pi i}  \tr\left\{\partial_{z} \left(\Yt(z)^{-1}Y_{CM}(z)   \right)   \partial_{z}\left(Y_{CM}(z)^{-1} \Yt(z)\right) \right\} \nonumber \\
     +  \oint_{\CS_{out}}\frac{dz}{2\pi i}  \tr\left\{\partial_{z} \left( \Yt(z)^{-1}Y_{CM}(z)  \right)  \partial_{z}Y_{CM}(z)^{-1} \Yt(z)\right\} \nonumber \\
     =- \frac{1}{2}\oint_{\CS_{out}}\frac{dz}{2\pi i}  \tr \left\{ -L_{3pt}(z)^2 + L_{CM}(z)^2 \right\}.\label{eq:11_I_zbp}
\end{gather}
Therefore, \begin{gather}
   -I_z =
\frac{1}{2}\oint_{\CS_{out}}\frac{dz}{2\pi i}  \tr \left\{ -L_{3pt}(z)^2 + L_{CM}(z)^2   + 2 \left( \begin{array}{cc}
 \frac{\theta_1'(Q-\rho)}{\theta_1(Q-\rho)} & 0 \\ 0 &  -\frac{\theta_1'(Q+\rho)}{\theta_1(Q+\rho)}
 \end{array} \right) L_{CM}(z)\right\}\nonumber\\
 +\frac{1}{2}\oint_{\CS_{out}}\frac{dz}{2\pi i}  \tr\left\{\left( \begin{array}{cc}
 \frac{\theta_1''(Q-\rho)}{\theta_1(Q-\rho)} & 0 \\ 0 &  \frac{\theta_1''(Q+\rho)}{\theta_1(Q+\rho)}
 \end{array} \right)-\mathbb{I}\left(\frac{1}{3}\frac{\theta_1'''}{\theta_1'}+\frac{(2\pi i)^2}{6} \right)\right\} \nonumber \\
 \mathop{=}^{\eqref{eq:Ham_CM}} -2\pi i a^2 + \frac{1}{2\pi i} H_{CM} + \frac{P}{2\pi i}\left( \frac{\theta_{1}'(Q-\rho)}{\theta_{1}(Q-\rho)} +\frac{\theta_{1}'(Q+\rho)}{\theta_{1}(Q+\rho)} \right) + \frac{1}{4\pi i}\left( \frac{\theta_{1}''(Q-\rho)}{\theta_{1}(Q-\rho)} +\frac{\theta_{1}''(Q+\rho)}{\theta_{1}(Q+\rho)}\right) \nonumber \\
 -\left(\frac{1}{6\pi i}\frac{\theta_1'''}{\theta_1'}+\frac{(2\pi i)}{6} \right). \label{eq:I_zF}
 \end{gather}
 To compute the first term in \eqref{eq:I_zF}, we use the explicit form \eqref{eq:3ptcyl}, \eqref{eq:3ptexpts} and recall that the contour $\CS_{out}$ is simply the interval $[\tau,\tau+1]$:
 \begin{equation}
     \oint_{\CS_{out}}\frac{dz}{4\pi i}  \tr L^2_{3pt}(z)=2\pi i a^2.
 \end{equation}
 The second term of \eqref{eq:I_zF} is simply the isomonodromic Hamiltonian, while all the other terms  are constants, that are unaffected by the integration. The term $I_{w}$ in \eqref{eq:11_I_w} vanishes because the $z$-loop is contractible. 
 \begin{gather}
     I_{w} =0. \label{eq:I_wF}
 \end{gather}Finally, we compute $I_{\tau}$:
 \begin{equation}
     \begin{split}
     I_{\tau} &= \oint_{\CP}dw  \oint_{\CS}\frac{dz}{2\pi i}  \frac{1}{1-e^{-2\pi i(z-w)}}\tr\left\{\Yt(z)\Yt(w)^{-1} \partial_{\tau} \left( Y_{CM}(w)\Xi_2(w,z)Y_{CM}(z)^{-1} \right) \right\} \\
     & = I_{\tau}^{(1)} + I_{\tau}^{(2)} + I_{\tau}^{(3)},
     \end{split}
 \end{equation}
 where
 \begin{gather}
     I_{\tau}^{(1)} :=  \oint_{\CP}dw  \oint_{\CS}\frac{dz}{2\pi i}  \frac{1}{1- e^{-2\pi i(z-w)}}\tr\left\{\Yt(w)^{-1} \partial_{\tau} \left( Y_{CM}(w)\right)  \Xi_2(w,z)Y_{CM}(z)^{-1}\Yt(z) \right\},  \\
     I_{\tau}^{(2)} :=  \oint_{\CP}dw  \oint_{\CS}\frac{dz}{2\pi i}  \frac{1}{1- e^{-2\pi i(z-w)}}\tr\left\{\Yt(w)^{-1} Y_{CM}(w) \partial_{\tau} \left( \Xi_2(w,z) \right)  Y_{CM}(z)^{-1}\Yt(z) \right\} \\
     I_{\tau}^{(3)} :=  \oint_{\CP}dw  \oint_{\CS}\frac{dz}{2\pi i}  \frac{1}{1- e^{-2\pi i(z-w)}}\tr\left\{\Yt(w)^{-1} Y_{CM}(w)\Xi_2(w,z) \partial_{\tau} \left( Y_{CM}(z)^{-1} \right)\Yt(z) \right\}.
 \end{gather}
 Expanding $\Xi_2(z,w)$ as in \eqref{Xi_expansion} and using \eqref{eq:ExpExpansion},  $I_{\tau}^{(1)}$ reads,
 \begin{gather}
     I_{\tau}^{(1)} =  \oint_{\CP}dw  \oint_{\CS}\frac{dz}{2\pi i}  \frac{1}{1-e^{-2\pi i(z-w)}}\tr\left\{\Yt(w)^{-1} \partial_{\tau} \left( Y_{CM}(w)\right)  \Xi_2(w,z)Y_{CM}(z)^{-1} \Yt(z)\right\} \nonumber \\
     = -\oint_{\CP}\frac{dw}{2\pi i}  \oint_{\CS}\frac{dz}{2\pi i}\tr\left\{ \frac{\Yt(w)^{-1} \partial_{\tau} \left( Y_{CM}(w)\right)Y_{CM}(z)^{-1} \Yt(z)}{(w-z)^2} \right\} \nonumber\\
+\oint_{\CP}\frac{dw}{2\pi i}  \oint_{\CS}\frac{dz}{2\pi i}\tr\left\{ \frac{\Yt(w)^{-1} \partial_{\tau} \left( Y_{CM}(w)\right)}{w-z}\left[i\pi\mathbb{I}-\left( \begin{array}{cc}
 \frac{\theta_1'(Q-\rho)}{\theta_1(Q-\rho)} & 0 \\ 0 &  \frac{\theta_1'(-Q-\rho)}{\theta_1(-Q-\rho)}
 \end{array} \right) \right] Y_{CM}(z)^{-1}\Yt(z)\right\} \nonumber\\
= -\oint_{\CS} \frac{dz}{2\pi i}\tr\left\{\left[i\pi\mathbb{I}-\left( \begin{array}{cc}
 \frac{\theta_1'(Q-\rho)}{\theta_1(Q-\rho)} & 0 \\ 0 &  \frac{\theta_1'(-Q-\rho)}{\theta_1(-Q-\rho)}
 \end{array} \right)  \right]  M_{CM}\right\}=0,  \label{I_tau1}
   \end{gather}
where $M_{CM}$ is the matrix in equation \eqref{eq:linear_systemCM}. In the last line we use the fact that $z$ lies inside the contour of $w$, and $M_{CM}$ has no residue at the puncture $z=0$. Now computing the integral $I_{\tau}^{(2)}$,
 \begin{gather}
      I_{\tau}^{(2)} = \oint_{\CP}dw  \oint_{\CS}\frac{dz}{2\pi i}  \frac{1}{1-e^{-2\pi i(z-w)}}\tr\left\{\Yt(z)\Yt(w)^{-1} Y_{CM}(w) \partial_{\tau} \left( \Xi_2(w,z) \right)  Y_{CM}(z)^{-1} \right\}=0, \label{I_tau2}
 \end{gather}
because $\partial_\tau\Xi_2(z,w)$ is regular at $w=z$. Since $I_{\tau}^{(1)}$ and $I_{\tau}^{(2)}$ vanish, $I_{\tau} = I_{\tau}^{(3)}$. Finally, we compute the integral $I_{\tau}^{(3)}$ by expanding $\Xi$ as before:
\begin{gather}
    I_{\tau} = I_{\tau}^{(3)} = \oint_{\CP}dw  \oint_{\CS}\frac{dz}{2\pi i}  \frac{1}{1-e^{-2\pi i(z-w)}}\tr\left\{\Yt(w)^{-1} Y_{CM}(w)\Xi_2(w,z) \partial_{\tau} Y_{CM}(z)^{-1} \Yt(z) \right\}\nonumber\\
     = -\oint_{\CP}\frac{dw}{2\pi i}  \oint_{\CS}\frac{dz}{2\pi i}  \tr\left\{\frac{\Yt(w)^{-1} Y_{CM}(w)  \partial_{\tau} \left( Y_{CM}(z)^{-1} \right) \Yt(z)}{(w-z)^2} \right\} \nonumber \\
     - \oint_{\CP}\frac{dw}{2\pi i}  \oint_{\CS}\frac{dz}{2\pi i}  \tr\left\{\frac{\Yt(w)^{-1} Y_{CM}(w)}{w-z} \,   \left[i \pi \mathbb{I} - \left( \begin{array}{cc}
 \frac{\theta_1'(Q-\rho)}{\theta_1(Q-\rho)} & 0 \\ 0 &  \frac{\theta_1'(-Q-\rho)}{\theta_1(-Q-\rho)}
 \end{array} \right) \right]\partial_{\tau} \left( Y_{CM}(z)^{-1} \right) \Yt(z) \right\} \nonumber \\
     =-\oint_{\CS}\frac{dz}{2\pi i}  \tr\left\{\partial_{z} \left(\Yt(z)^{-1} Y_{CM}(z)\right) \partial_{\tau} \left( Y_{CM}(z)^{-1} \right) \Yt(z) \right\} \nonumber \\
     +  \oint_{\CS}dz  \tr\left\{ \left[i\pi \mathbb{I} -\left( \begin{array}{cc}
 \frac{\theta_1'(Q-\rho)}{\theta_1(Q-\rho)} & 0 \\ 0 &  \frac{\theta_1'(-Q-\rho)}{\theta_1(-Q-\rho)}
 \end{array} \right)\right] M_{CM}  \right\}\nonumber \\
 = \oint_{\CS}\frac{dz}{2\pi i}  \tr\left\{\partial_{z} \left(\Yt(z)^{-1} Y_{CM}(z)\right) \partial_{\tau} \left( Y_{CM}(z)^{-1} \right) \Yt(z) \right\}. \label{I_tau3}
 \end{gather}
Again, in the last line of \eqref{I_tau3} we used the fact that $M_{CM}$ in \eqref{linear_system}, is regular at the puncture. Therefore,
 \begin{gather}
     -I_\tau =-\oint_{\CS}dz  \tr\left\{\partial_{z} \left(\Yt(z)^{-1} Y_{CM}(z)\right) \partial_{\tau} \left( Y_{CM}(z)^{-1} \right) \Yt(z) \right\} \nonumber \\
     =-\oint_{\CS}dz \tr\left\{\left(\partial_z\Yt(z)\Yt(z)^{-1}-\partial_z Y_{CM}(z)Y_{CM}(z)^{-1} \right)\partial_\tau Y_{CM}(z)Y_{CM}(z)^{-1}\right\}\nonumber \\
 =-\oint_{\CS}dz \left\{\left(Y_{CM}(z)^{-1} \Yt(z)L_{3pt}(z)(Y_{CM}(z)^{-1}\Yt(z))^{-1}-L_{CM} \right)M_{CM} \right\}. \label{eq:I_tau}
 \end{gather}
To compute the above expression, we study the behavior of $L_{CM},M_{CM},L_{3pt}$  in \eqref{linear_system} and \eqref{eq:3ptcyl} respectively, at $z=0$, using the expansions
  \begin{gather}
     x(2Q,z) = \frac{1}{z} + \frac{\theta_{1}'(2Q)}{\theta_{1}(2Q)} + \mathcal{O}(z), \label{expandx}
 \end{gather}
 and
 \begin{equation}
 \begin{split}
     y(2Q,z) & = 
     \left[ \frac{\theta_{1}''(2Q)}{\theta_{1}(2Q)} - \left(\frac{\theta_{1}'(2Q)}{\theta_{1}(2Q)}\right)^2 \right] + \mathcal{O}(z) \\
     & = \wp(2Q) + \mathcal{O}(z)
     \label{expandy}.
     \end{split}
 \end{equation}
 Substituting \eqref{expandx} and \eqref{expandy} in the Lax matrices one finds (the solutions $Y_{CM}(z)$, $\Yt(z)$ can be simultaneously re-normalized in such a way that their monodromy around $z=0$ is $e^{2\pi i m\sigma_3}$)
 \begin{gather}
     L_{CM} = \frac{Y_{CM}(0)^{-1}m\sigma_3Y_{CM}(0)}{z} - i \frac{m\theta_{1}'(2Q)}{\theta_{1}(2Q)} \sigma_{2} + P \sigma_{3} + \mathcal{O}(z), \nonumber\\ 
     M_{CM} = 
     \wp(2Q) \sigma_{1} + \mathcal{O}(z) , \label{eq:LMLexp} \\
     L_{3pt}=\frac{\Yt(0)^{-1}m\sigma_3\Yt(0) }{z}-2\pi i\left(A_0+\frac{A_1}{2}\right)+\mathcal{O}(z)\nonumber.
 \end{gather}
From equation \eqref{eq:LMLexp}, it follows that 
\begin{gather}
    Y_{CM}(z)^{-1} \Yt(z)L_{3pt}(z)(Y_{CM}(z)^{-1}\Yt(z))^{-1}-L_{CM}\nonumber\\
    =-2\pi iY_{CM}(0)^{-1} \Yt(0)\left(A_0+\frac{A_1}{2}\right)Y_{CM}(0)^{-1} \Yt(0)+ i \frac{m\theta_{1}'(2Q)}{\theta_{1}(2Q)} \sigma_{2} - P \sigma_{3},
\end{gather}
so, the integrand in equation \eqref{eq:I_tau} has no pole, and
 \begin{equation}
 I_\tau=0.\label{eq:I_tauF}
 \end{equation}
We have thus shown that the logarithmic derivative of the tau function $\T^{(1,1)}$ in \eqref{eq:11Tr} is
\begin{equation}
\begin{split}
 2\pi i \partial_{\tau} \log&\det\left[\mathbb{I}- K_{1,1}  \right] \mathop{=}^{\eqref{eq:propdetT11}} 2\pi i\partial_{\tau} \log \T^{(1,1)}    \mathop{=}^{\eqref{eq:11Tr}} -2\pi i\tr_{\cH} \cP_{\oplus} \partial_{\tau} \cP_{\Sigma} \\
 &\mathop{=}^{\eqref{eq:11_Tr}}2\pi i\left( - I_{\tau} - I_{z} - I_{w}\right)  \mathop{=}^{\eqref{eq:I_wF}, \eqref{eq:I_tauF}} - 2\pi i I_{z} \\
    & \mathop{=}^{\eqref{eq:I_zF}}   -(2\pi i)^{2} a^2 + H_{CM} -\left(\frac{1}{3}\frac{\theta_1'''}{\theta_1'}+\frac{(2\pi i)^2}{6} \right)  \\
 & + P\left( \frac{\theta_{1}'(Q-\rho)}{\theta_{1}(Q-\rho)} +\frac{\theta_{1}'(Q+\rho)}{\theta_{1}(Q+\rho)} \right) + \frac{1}{2}\left( \frac{\theta_{1}''(Q-\rho)}{\theta_{1}(Q-\rho)} +\frac{\theta_{1}''(Q+\rho)}{\theta_{1}(Q+\rho)}\right) \\
& \mathop{=}^{\eqref{eq:IsomTau11}}   2\pi i \partial_{\tau} \log \T_{CM}-(2\pi i)^2 a^2-\left(\frac{1}{3}\frac{\theta_1'''}{\theta_1'}+\frac{(2\pi i)^2}{6} \right)    \\
 & + P\left( \frac{\theta_{1}'(Q-\rho)}{\theta_{1}(Q-\rho)} +\frac{\theta_{1}'(Q+\rho)}{\theta_{1}(Q+\rho)} \right) + \frac{1}{2}\left( \frac{\theta_{1}''(Q-\rho)}{\theta_{1}(Q-\rho)} +\frac{\theta_{1}''(Q+\rho)}{\theta_{1}(Q+\rho)}\right) \\
 & =  2\pi i \partial_{\tau} \log \T_{CM}-(2\pi i)^{2} a^2 -\frac{(2\pi i)^2}{6} + 2\pi i \frac{d}{d\tau} \log \left( \frac{\theta_{1}(Q-\rho) \theta_{1}(Q+\rho)}{\eta(\tau)^2}\right).
 \end{split}\label{proof:thm1}
\end{equation}
In the last line we used the heat equation for $\theta_1$
\begin{equation}
    4\pi i\partial_\tau\theta_1(Q\pm\rho)=\theta_1''(Q\pm\rho),\label{eq:heat_theta1}
\end{equation}
as well as the fact that $P=2\pi i\frac{dQ}{d\tau}$, leading to
\begin{equation}
\begin{split}
    4\pi i\frac{d}{d\tau}\log\theta_1(Q\pm\rho) & =\frac{1}{\theta_1(Q\pm\rho)}
4\pi i\frac{d}{d\tau}\theta_1(Q\pm\rho)\\
&=4\pi i\frac{dQ}{d\tau}\frac{\theta_1'(Q\pm\rho)}{\theta_1(Q\pm\rho)}+\frac{1}{\theta_1(Q\pm\rho)}\partial_\tau\theta_1(Q\pm\rho)\\
&=2P\frac{\theta_1'(Q\pm\rho)}{\theta_1(Q\pm\rho)}+\frac{\theta_1''(Q\pm\rho)}{\theta_1(Q\pm\rho)},
\end{split}
\end{equation}
and the expression for Dedekind eta function
\begin{align}
    \eta(\tau)=\left(\frac{\theta_1'}{2\pi}\right)^{1/3}, && -4\pi i\frac{d}{d\tau}\log\eta(\tau)=-\frac{1}{3\theta_1'} 4\pi i\frac{d\theta_1'}{d\tau}=-\frac{1}{3}\frac{\theta_1'''}{\theta_1'}. \label{eq:eta_id}
\end{align}
Integrating \eqref{proof:thm1} on both sides, we obtain \eqref{eq:prop1} where the explicit form of the kernel $K_{1,1}$ is in \eqref{eq:kerker_11}.
\end{proof}
\begin{remark}\label{rmk:zerostau11} 
Due to the factor $\frac{\theta_{1}(Q+\rho)\theta_1(Q-\rho)}{\eta(\tau)^2}$ in \eqref{eq:prop1} between the isomonodromic tau function $\T_{CM}$ and the determinant tau function $\T^{(1,1)}$, we have the following statement:
 \begin{equation}
     \T^{(1,1)}\vert_{\rho=\pm Q}=0, \label{eq:remark2}
 \end{equation}
 i.e. the zero locus of the Fredholm determinant in $\rho$ computes the solution to the equation \eqref{eq:EllPainleve}. This is an isomonodromic version of Krichever's solution of the isospectral elliptic Calogero-Moser model \cite{Krichever1980,Bonelli:2019boe,Bonelli:2019yjd}, and justifies the introduction of the extra parameter $\rho$.
 \end{remark}
 Further comments are in order:

\begin{itemize}
\item[1.] We see that, in contrast to the spherical case, now there are two different tau functions, \(\mathcal{T}_{CM}\) and \(\mathcal{T}^{(1,1)}\).
It is usually supposed that the object called 'tau function' is related to free fermions, has a determinant representation, and satisfies some bilinear relations.
It turns out that only \(\mathcal{T}^{(1,1)}\) has such properties, in particular, it was shown in \cite{BGG} that equation \eqref{eq:EllPainleve} is equivalent to some bilinear relations on the two \(\rho\)-independent parts of \(\mathcal{T}^{(1,1)}\).
These bilinear relations are the consequences of the blow-up relations for the theory with adjoint matter (for other examples of such equations see \cite{Gu:2019pqj}). The
free-fermionic nature of \(\mathcal{T}^{(1,1)}\) was shown in \cite{Bonelli:2019boe}.
Instead, \(\mathcal{T}_{CM}\) has one property, which \(\mathcal{T}^{(1,1)}\) does not have: its derivative gives the Hamiltonian.

\item[2.] A consequence of the determinant expression \eqref{eq:prop1} for $\T_{CM}$ is the following relation between the solution $Q$ to the nonautonomous elliptic Calogero-Moser equation and the determinant (see equations 3.56 and F.3, \cite{Bonelli:2019boe}):
   \begin{gather}
        \frac{\theta_{3}(2Q(\tau)\vert 2\tau)}{\theta_{2}(2Q(\tau)\vert 2\tau)} = i e^{3 i \pi \tau/2} \frac{\det\left( \mathbb{I}- K_{1,1}\vert_{\rho = \frac{1}{4} + \frac{\tau}{2}} \right)}{\det\left( \mathbb{I}- K_{1,1}\vert_{\rho = \frac{1}{4}} \right)}.
    \end{gather}
    \end{itemize}

\section{Generalization to the $n$-punctured torus}\label{sec:GL}
We now generalize the discussion of the previous section to the $GL(N)$ linear system \eqref{lax_general} on a torus with $n$ punctures, using the expressions derived in \cite{Levin:2013kca} for the matrices $L,M_{z_k},M_{\tau}$. In this case the matrix elements $L_{ij}$ of the Lax matrix $L(z)$ are
\begin{equation}\label{eq:nptL}
\begin{split}
    L_{ij}(z) & =\delta_{ij}\left[P_i+\sum_{k=1}^n\frac{\theta_1'(z-z_k)}{\theta_1(z-z_k)}\left(S^{(k)}_{ii}+\Lambda_{k} \right) \right]\\
    &+(1-\delta_{ij})\sum_{k=1}^n\frac{\theta_1(z-z_k+Q_i-Q_j)\theta_1'(0)}{\theta_1(z-z_k)\theta_1(Q_i-Q_j)}S_{ij}^{(k)}
\end{split}
\end{equation}
while the matrix elements of the $M$-matrices \eqref{lax_general} are
\eqs{\label{eq:nptMk}
    (M_{z_k})_{ij}(z)=-\delta_{ij}\frac{\theta_1'(z-z_k)}{\theta_1(z-z_k)}\left(S_{ii}^{(k)}+\Lambda_{k}\right)- (1-\delta_{ij})\frac{\theta_1(z-z_k+Q_i-Q_j)\theta_1'(0)}{\theta_1(z-z_k)\theta_1(Q_i-Q_j)}S_{ij}^{(k)},}
\eq{\label{eq:nptMt}(M_\tau)_{ij}(z)=\frac{1}{2}\delta_{ij}\sum_{k=1}^n\frac{\theta_1''(z-z_k)}{\theta_1(z-z_k)}\left(S_{ii}^{(k)}+\Lambda_{k}\right) +\sum_{k=1}^ny(Q_j-Q_i,z-z_k)S_{ij}^{(k)},
}
where the function $y(u,z)$ is defined in \eqref{eq:xyP}. The dynamical variables\footnote{In the interest of brevity, we omit writing the $\tau$, $z_{1}... z_{n}$ dependence of the functions $L(z), M(z), Y(z)$ and the dynamical variables $Q_{i}$'s, $P_{i}$'s.} $Q_1,\dots,Q_N$, $P_1,\dots,P_N$ satisfy $\sum_iQ_i=\sum_iP_i=0$ and are canonically conjugated, and the matrices $S^{(k)}$ satisfy the Kirillov-Kostant Poisson bracket
\begin{align}
    \left\{Q_i,P_j \right\}=\delta_{ij}, && \left\{S^{(k)}_a,S^{(m)}_b \right\}=\delta^{km}f{_{ab}}^cS^{(k)}_c,
\end{align}
where we defined $S^{(k)}:=S^{(k)}_at^a$ in terms of a set of generators $t_a$ of $\mathfrak{sl}_N$, and $f{_{ab}}^{c}$ are the $\mathfrak{sl}_N$ structure constants. The residues take value in $\mathfrak{gl}(N)$ due to the $U(1)$ factors $\Lambda_{k}$.
\begin{figure}[h]
\begin{center}
\begin{tikzpicture}[scale=0.9]

\draw[thick,decoration={markings, mark=at position 0.75 with {\arrow{>}}}, postaction={decorate}] (-5,1.5) circle[x radius=0.5, y radius=0.2];
\draw(-6,0.5) to[out=0,in=270] (-5.5,1.5);
\draw(-4,0.5) to[out=180,in=270] (-4.5,1.5);
\draw(-6,-0.5) to[out=0,in=180] (-4,-0.5);
\draw[thick, red,decoration={markings, mark=at position 0.6 with {\arrow{>}}}, postaction={decorate}]  (-4,0.5) to[out=10,in=-10] (-4,-0.5);
\draw[thick,dashed, red] (-4,0.5) to[out=220,in=-220] (-4,-0.5);
\draw[thick,red,decoration={markings, mark=at position 0 with {\arrow{>}}}, postaction={decorate}] (-6,0) circle[y radius=0.5, x radius=0.2];

\draw[thick,blue,decoration={markings, mark=at position 0 with {\arrow{<}}}, postaction={decorate}] (-3.3,0) circle[y radius=0.5, x radius =0.2];
\draw (-3.3,-0.5) to (-2.7,-0.5);
\draw (-3.3,0.5) to (-2.7,0.5);
\draw[thick, blue,decoration={markings, mark=at position 0.6 with {\arrow{<}}}, postaction={decorate}]  (-2.7,0.5) to[out=10,in=-10] (-2.7,-0.5);
\draw[dashed, blue] (-2.7,0.5) to[out=220,in=-220] (-2.7,-0.5);

\draw[thick,decoration={markings, mark=at position 0.75 with {\arrow{>}}}, postaction={decorate}] (-1,1.5) circle[x radius=0.5, y radius=0.2];
\draw(-2,0.5) to[out=0,in=270] (-1.5,1.5);
\draw(0,0.5) to[out=180,in=270] (-0.5,1.5);
\draw(-2,-0.5) to[out=0,in=180] (0,-0.5);
\draw[thick,red,decoration={markings, mark=at position 0 with {\arrow{>}}}, postaction={decorate}] (-2,0) circle[y radius=0.5, x radius=0.2];
\draw[thick, red,decoration={markings, mark=at position 0.6 with {\arrow{>}}}, postaction={decorate}]  (0,0.5) to[out=10,in=-10] (0,-0.5);
\draw[thick,dashed, red] (0,0.5) to[out=220,in=-220] (0,-0.5);

\draw[thick,blue,decoration={markings, mark=at position 0 with {\arrow{<}}}, postaction={decorate}] (0.7,0) circle[y radius=0.5, x radius =0.2];
\draw (0.7,-0.5) to (1.3,-0.5);
\draw (0.7,0.5) to (1.3,0.5);
\draw[thick, blue,decoration={markings, mark=at position 0.6 with {\arrow{<}}}, postaction={decorate}]  (1.3,0.5) to[out=10,in=-10] (1.3,-0.5);
\draw[dashed, blue] (1.3,0.5) to[out=220,in=-220] (1.3,-0.5);

\node at (2.3,0) {{\Large\dots}};

\draw[thick,blue,decoration={markings, mark=at position 0 with {\arrow{<}}}, postaction={decorate}] (3,0) circle[y radius=0.5, x radius =0.2];
\draw (3,-0.5) to (3.6,-0.5);
\draw (3,0.5) to (3.6,0.5);
\draw[thick, blue,decoration={markings, mark=at position 0.6 with {\arrow{<}}}, postaction={decorate}]  (3.6,0.5) to[out=10,in=-10] (3.6,-0.5);
\draw[dashed, blue] (3.6,0.5) to[out=220,in=-220] (3.6,-0.5);

\draw[thick,decoration={markings, mark=at position 0.75 with {\arrow{>}}}, postaction={decorate}] (5.3,1.5) circle[x radius=0.5, y radius=0.2];
\draw(4.3,0.5) to[out=0,in=270] (4.8,1.5);
\draw(6.3,0.5) to[out=180,in=270] (5.8,1.5);
\draw(4.3,-0.5) to[out=0,in=180] (6.3,-0.5);
\draw[thick,red,decoration={markings, mark=at position 0 with {\arrow{>}}}, postaction={decorate}] (4.3,0) circle[y radius=0.5, x radius=0.2];
\draw[thick, red,decoration={markings, mark=at position 0.6 with {\arrow{>}}}, postaction={decorate}]  (6.3,0.5) to[out=10,in=-10] (6.3,-0.5);
\draw[thick,dashed, red] (6.3,0.5) to[out=220,in=-220] (6.3,-0.5);

\draw[thick, blue,decoration={markings, mark=at position 0.6 with {\arrow{<}}}, postaction={decorate}]  (-6.7,0.5) to[out=10,in=-10] (-6.7,-0.5);
\draw[dashed, blue] (-6.7,0.5) to[out=220,in=-220] (-6.7,-0.5);

\draw[thick,blue,decoration={markings, mark=at position 0 with {\arrow{<}}}, postaction={decorate}] (7,0) circle[y radius=0.5, x radius =0.2];

\draw[thick] (-6.7,0.5) to[out=180,in=90] (-8,-0.5) to[out=270,in=180] (0.15,-3.5) to[out=0,in=270] (8.3,-0.5) to[out=90,in=0] (7,0.5);

\draw[thick] (-6.7,-0.5) to[out=220,in=170] (-5,-2.2) to[out=-10,in=180] (0.15,-2.5) to[out=0,in=190] (5.3,-2.2) to[out=10,in=-30] (7,-0.5);


\node at (-5,0.2) {{\Large$\mathscr{T}^{[1]}$}};
\node at (-1,0.2) {{\Large$\mathscr{T}^{[2]}$}};
\node at (5.3,0.2) {{\Large$\mathscr{T}^{[n]}$}};

\node at (-6,-0.9) {{\footnotesize${\color{red} \underline{\mathcal{C}}_{in}^{[1]}}$}};
\node at (-4,-0.9) {{\footnotesize${\color{red} \underline{\mathcal{C}}_{out}^{[1]}}$}};

\node at (-3.4,0.9) {{\footnotesize${\color{blue} \overline{\mathcal{C}}_{out}^{[1]}}$}};
\node at (-2.5,0.9) {{\footnotesize${\color{blue} \overline{\mathcal{C}}_{in}^{[2]}}$}};

\node at (-2,-0.9) {{\footnotesize${\color{red} \underline{\mathcal{C}}_{in}^{[2]}}$}};
\node at (0,-0.9) {{\footnotesize${\color{red} \underline{\mathcal{C}}_{out}^{[2]}}$}};

\node at (0.5,0.9) {{\footnotesize${\color{blue} \overline{\mathcal{C}}_{out}^{[2]}}$}};
\node at (1.4,0.9) {{\footnotesize${\color{blue} \overline{\mathcal{C}}_{in}^{[3]}}$}};

\node at (2.9,0.9) {{\footnotesize${\color{blue} \overline{\mathcal{C}}_{out}^{[n-1]}}$}};
\node at (3.8,0.9) {{\footnotesize${\color{blue} \overline{\mathcal{C}}_{in}^{[n]}}$}};

\node at (4.3,-0.9) {{\footnotesize${\color{red} \underline{\mathcal{C}}_{in}^{[n]}}$}};
\node at (6.3,-0.9) {\footnotesize{${\color{red} \underline{\mathcal{C}}_{out}^{[n]}}$}};

\node at (7,0.9) {{\footnotesize${\color{blue} \overline{\mathcal{C}}_{out}^{[n]}}$}};
\node at (-7,0.9) {{\footnotesize${\color{blue} \overline{\mathcal{C}}_{in}^{[n+1]}=\overline{\mathcal{C}}_{in}^{[1]} }$}};

\node at (-3,-0.9) {{\large$\mathscr{A}^{[1]}$}};
\node at (1,-0.9) {{\large$\mathscr{A}^{[2]}$}};
\node at (3.2,-0.9) {{\large$\mathscr{A}^{[n-1]}$}};
\node at (0.15,-3) {{\large$\mathscr{A}^{[n]}$}};

\end{tikzpicture}
\end{center}
\caption{Pants decomposition for the $n$-punctured torus}\label{fig:nptTorusPants}
\end{figure}
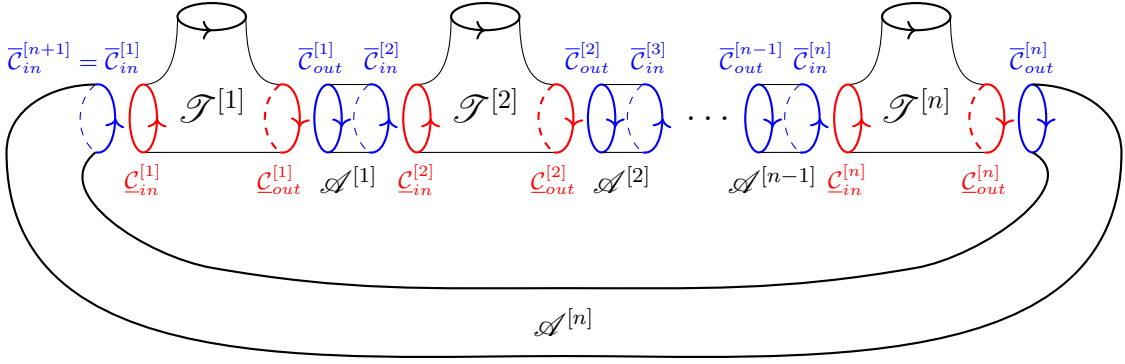
\begin{notation}\label{not:bold}
 Given an N-tuple of parameters $(\xi_1,\dots\xi_N)$, and a function $g(\xi_i)$, $i=1,\dots,N$ of these parameters, we define
 \begin{equation}
     g(\bs\xi):=\diag\left(f(\xi_1),\dots,g(\xi_N)\right).
 \end{equation}
 In particular, when $g(\xi_i)=\xi_i$, this is
 \begin{equation}
     \bs\xi=\diag\left(\xi_1,\dots,\xi_N \right).
 \end{equation}
\end{notation}
\begin{remark}\label{rmk:GLNSLN}
The generic isomonodromic problem on genus one surfaces is formulated in \cite{Levin:2013kca} under the requirement that the matrices $S^{(k)}$, parametrizing the $\mathfrak{sl}_N$ residues at the punctures $z_k$, satisfy
\begin{equation}\label{cons:CONSTRAINT}
    \sum_{k=1}^nS_{ii}^{(k)}=0.
\end{equation}
For consistency of the construction, \eqref{cons:CONSTRAINT} will be imposed on the $\mathfrak{sl}_N$ component of the residues, $S^{(k)}$.
\end{remark}
The matrices $L,M_{z_k},M_{\tau}$ are not single-valued on the torus, but rather under the shift $z\rightarrow z+\tau$ behave as (using notation \ref{not:bold})
\begin{gather}
    L(z+\tau)=e^{-2\pi i\bs Q}L(z)e^{2\pi i\bs Q}-2\pi i\sum_{k=1}^n\Lambda_{k} \nonumber \\
    M_{z_k}(z+\tau)=e^{-2\pi i\bs Q}M_{z_k}(z)e^{2\pi i\bs Q}+2\pi i\sum_{k=1}^n\Lambda_{k}, \label{eq:LMMTransform} \\
        M_\tau(z+\tau)=e^{-2\pi i\bs Q}\left(M_\tau(z)+L(z) \right)e^{2\pi\bs Q}-2\pi i\bs P-(2\pi i)^2\frac{1}{2}\sum_{k=1}^n\Lambda_{k} \nonumber, 
\end{gather}
so that the solution of the linear system \eqref{linear_system} will transform as follows:
\begin{equation}
   \mathcal{Y}(z+\tau)=M_Be^{-2\pi i\sum_{j=1}^n\left(z-z_j+\frac{\tau}{2}+\frac12\right)\Lambda_j}\mathcal{Y}(z)e^{2\pi i\bs Q}, \label{eq:zdep_MB}
\end{equation}
where $M_B\in SL(N)$. The pants decomposition corresponding to the $n$-punctured torus consists of $n$ trinions, as shown in Figure \ref{fig:nptTorusPants}, with each trinion  $\mathscr{T}^{[k]}$ associated to its own three-point problem. 
\begin{gather}
    \partial_z\widetilde{\mathcal{Y}}^{[k]}(z)=\widetilde{\mathcal{Y}}^{[k]}(z) L_{3pt}^{[k]}(z), \nonumber\\ L_{3pt}^{[k]}(z)=-2\pi i A_0^{[k]}-2\pi i\frac{A_1^{[k]}}{1-e^{2\pi iz}},\label{eq:3ptk}
\end{gather}
where
\begin{align}
    A_0^{[k]}\sim\bs\sigma_k, && A_1^{[k]}\sim\bs\mu_k,
\end{align}
\begin{align}
    \bs \sigma_{k} = {\bs a_k} - \sum_{j=0}^{k-1} \Lambda_j \mathbb{I}, && \bs \mu_{k} = {\bs m_{k}} + \Lambda_{k} \mathbb{I} \label{eq:sigma_mu_def}
\end{align}
for $k =1, \dots n$.
As in the 1-point case, we choose \(\widetilde{\mathcal{Y}}^{[k]}(z)\) in such a way that
\begin{equation*}
\widetilde{\mathcal{Y}}^{[k]}(z)^{-1}\mathcal{Y}(z)
\end{equation*}
is regular and single-valued around \(z=z_k\) and has no monodromies around two closest A-cycles.

In \eqref{eq:sigma_mu_def} we introduced a $U(1)$ parameter $\Lambda_0$ shifting the monodromy exponent $\bs\sigma_1$ around $\cC_{in}^{[1]}$, whose significance will become apparent in sections \ref{subsec:Minor_CM} and \ref{subsec:Minor_GL}\footnote{From the point of view of the dynamical system, the monodromy exponents on $\cC_{in}^{[1]}$ have the role of initial conditions, so that it is natural that $\Lambda_0$ doesn't appear in the Lax matrix, contrary to $\Lambda_1,\dots,\Lambda_N$, which are residues at the punctures.}. The monodromy exponents $\bs m_k$, $\bs a_k$ parametrize the $SL(N)$ component of the monodromy, and the $\bs a_k$'s satisfy $\bs a_{n+1} = \bs a_{1}$. 
In terms of the original problem on the torus, the monodromy exponents $\bs \sigma_{k}$, $\bs \mu_{k}$ in equation \eqref{eq:3ptk} are defined by the conjugacy class of the monodromies around the punctures $\left\lbrace z_k\right\rbrace_{k=1}^{n}$, and around the circles $\cC_{in}$, $\cC_{out}$ being glued in the pants decomposition (see Figure \ref{fig:nptTorusPants}), which are respectively 
\begin{align}
    M_k\sim e^{2\pi i\,\bs \mu_k}, && M_{\mathcal{C}_{in}^{[k]}}=M_{\mathcal{C}_{out}^{[k-1]} }^{-1}=G_k^{-1} e^{2\pi i\, \bs \sigma_k}G_k, \label{eq:1nmonodromy}
\end{align}
for $k= 1,\dots n $, and
\begin{align}
    M_{\cC_{out}^{[n]}}  =M_B^{-1} e^{-2\pi i (\bs \sigma_{1} - \sum_{j=1}^{n} \Lambda_{j} \mathbb{I}) }M_B. 
    \label{eq:monodromies1_n}
\end{align} 
The matrix $G_k$ is the matrix that diagonalizes $M_{\cC_{in}^{[k]}}=M_{\cC_{out}}^{[k-1]}$, while $G_{n+1}:=M_B$ is the matrix that diagonalizes $M_{\cC_{out}}^{[n]}$ as in the one-punctured case, and we fixed $G_1=\mathbb{I}$. The total Hilbert space $\cH$ is decomposed into a direct sum of spaces $\cH^{[k]}$ corresponding to each pair of pants:
\eqs{\cH := \bigoplus_{k=1}^{n} \cH^{[k]} = \cH_{+} \oplus \cH_{-}, }
where
\eqs{\cH_{\pm}:=\bigoplus_{k=1}^{n} \left( \cH^{[k]}_{in, \mp} \oplus \cH^{[k]}_{out, \pm} \right),}
\begin{definition}\label{def:YcalYnpoint}
 Corresponding to the solutions $\mathcal{Y}(z)$, $\widetilde{\mathcal{Y}}^{[k]}(z)$ of the linear problems \eqref{lax_general}, \eqref{eq:3ptk} respectively, we define two matrix-valued functions: $Y(z)$ with diagonal monodromies around the boundary circles $\cC^{[k]}_{in}$ and $\cC^{[k]}_{out}$, and $\Yt^{[k]}(z)$ with diagonal monodromies around $\cC^{[k]}_{in}$ and $\cC^{[k]}_{out}$ (see Figure \ref{fig:nptTorusPants}), by the following equations:
 \begin{align}
    Y(z)\vert_{\cC^{[1]}_{in}}:=\mathcal{Y}(z)\vert_{\cC^{[1]}_{in}}\in\cH^{[1]}_{in}, && Y(z)\vert_{\cC^{[n]}_{out}}:=M_B^{-1}\mathcal{Y}(z)\vert_{\cC^{[n]}_{out}}\in\cH^{[n]}_{out}.
\end{align}
 \begin{align}
    \Yt^{[k]}(z)\vert_{\cC^{[k]}_{in}}\equiv \Yt^{[k]}_{in}(z):=G_{k}^{-1}\widetilde{\mathcal{Y}}^{[k]}(z)\vert_{\cC^{[k]}_{in}}\in\cH^{[k]}_{in}, \\ \Yt^{[k]}(z)\vert_{\cC^{[k]}_{out}}\equiv \Yt^{[k]}_{out}(z):=G_{k+1}^{-1}\widetilde{\mathcal{Y}}^{[k]}(z)\vert_{\cC^{[k]}_{out}}\in\cH^{[k]}_{out},
\end{align}
with $G_{1}= \mathbb{I}$, and $G_{n+1}= M_{B}$.
\end{definition}
The functions $f^{[k]}(z) \in \cH^{[k]}$ are decomposed as
\eqs{f^{[k]}(z)= \left(\begin{array}{c}
     f^{[k]}_{in,-}  \\
     f^{[k]}_{out,+}
\end{array}  \right) \oplus \left(\begin{array}{c}
     f^{[k]}_{in,+}  \\
     f^{[k]}_{out,-}
\end{array}  \right).}
The generalization of definition \ref{def:cP_oplus} to the $n$-punctured case is as follows:
\begin{gather}\label{eq:Pplusn}
    \cP_{\oplus}:= \bigoplus_{k=1}^{n} \cP^{[k]}_{\oplus}
\end{gather}
where $\cP^{[k]}$ is the Plemelj operator 
given by the solution to the three-point problem \eqref{eq:3ptk} in the pants decomposition,
\begin{equation}\label{pk}
\begin{split}
    \left(\cP_{\oplus}^{[k]} f^{[k]}\right)(z) & = \int_{\cin^{[k]} \cup\cout^{[k]}} dw \frac{\Yt^{[k]}(z) \Yt^{[k]} (w){^{-1}}}{1-e^{-2\pi i(z-w)}} f^{[k]}(w) \\
    & := \int_{\cC^{[k]}} dw \frac{\Yt^{[k]}(z) \Yt^{[k]} (w){^{-1}}}{1-e^{-2\pi i(z-w)}} f^{[k]}(w).
    \end{split}
\end{equation}
Equivalently,
\eqs{
  \mc P^{[k]}_{\oplus} :  \left(\begin{array}{c}
     f^{[k]}_{in,-}  \\
     f^{[k]}_{out,+}
\end{array}  \right) \oplus \left(\begin{array}{c}
     f^{[k]}_{in,+}  \\
     f^{[k]}_{out,-}
\end{array}  \right) \mapsto
  \begin{pmatrix}f^{[k]}_{in,-}\\f^{[k]}_{out,+}  \end{pmatrix}
  \oplus\begin{pmatrix}{\sf a^{[k]}}&{\sf b^{[k]}}\\{\sf c^{[k]}}&{\sf d^{[k]}}\end{pmatrix}\begin{pmatrix}f^{[k]}_{in,-}\\f^{[k]}_{out,+}\end{pmatrix},}
where
\begin{equation}\begin{gathered}\begin{split}
\label{eq:1nkernels}
  ({\sfa^{[k]}} g)(z)=&\oint_{\cin^{[k]}}dw\frac{\Yt_{in}^{[k]}(z) \Yt_{in}^{[k]}(w)^{-1}-\mathbb{I}}{1-e^{-2\pi i(z-w)}}g_{in}(w),   && z\in \cin^{[k]},\\
  ({\sfb^{[k]}} g)(z)=&\oint_{\cout^{[k]}}dw\frac{ \Yt_{in}^{[k]}(z) \Yt_{out}^{[k]}(w)^{-1}}{1-e^{-2\pi i(z-w)}}g_{out}(w), && z\in \cin^{[k]},\\
  ({\sfc^{[k]}} g)(z)=&\oint_{\cin^{[k]}}dw\frac{ \Yt_{out}^{[k]}(z) \Yt_{in}^{[k]}(w)^{-1}}{1-e^{-2\pi i(z-w)}}g_{in}(w), && z\in \cout^{[k]},\\
  ({\sfd^{[k]}} g)(z)=&\oint_{\cout^{[k]}} dw\frac{ \Yt_{out}^{[k]}(z) \Yt_{out}^{[k]}(w)^{-1}-\mathbb{I}}{1-e^{-2\pi i(z-w)}}g_{out}(w), && z\in \cout^{[k]}.
\end{split}\end{gathered}\end{equation}
The functions $\Yt_{in}^{[k]},\Yt_{out}^{[k]}$ are the local solutions of the $k$-th three-point problem around $\mp i\infty$, respectively, defined in Definition \ref{def:YcalYnpoint}. In the case of a semi-degenerate system (i.e. a linear system with a single independent local monodromy exponent at $z=0$ instead of $N$) these solutions are described by generalized hypergeometric functions $_NF_{N-1}$ (see eq. 19 in \cite{Gavrylenko:2020gjb}).
Similar to \eqref{eq:keroplus}, $ \cP_{\oplus}^2 = \cP_{\oplus}$, and
\begin{equation}
\textrm{ker}~\cP_{\oplus} = \cH_{-}.
\end{equation}

Generalizing definition \ref{def:cP_Sigma}, we now introduce the Plemelj operator described by the solution to the $n$-point linear system \eqref{lax_general},
\eqs{\label{eq:PSigman}
    \left(\cP_{\Sigma_{1,n}}f\right)(z) = \oint_{\cC_{\Sigma}} \frac{dw}{2\pi i} Y(z) \Xi_{N}(z,w) Y (w){^{-1}} f(w), }
where 
\eqs{\cC_{\Sigma} := \bigcup_{k=1}^{n} \cout^{[k]} \cup \cin^{[k+1]}, && \cin^{[n+1]} := \cin^{[1]},}
and
\eq{\label{eq:Xi_N_exp}
\Xi_{N}(z,w)=\diag\left( \frac{\theta_1(z-w+Q_1-\widetilde{\rho})\theta_1'(0)}{\theta_1(z-w)\theta_1(Q_1-\widetilde{\rho})},\dots,\frac{\theta_1(z-w+Q_N-\widetilde{\rho})\theta_1'(0)}{\theta_1(z-w)\theta_1(Q_N-\widetilde{\rho})}\right),
}
where 
\begin{equation}\label{eq:tilderho} \widetilde{\rho}:= \rho -\sum_{j=1}^{n} \Lambda_{j} \left(z_{j}- \frac{\tau}{2}-\frac12 \right),
\end{equation}
and as before $\rho$ is an arbitrary parameter, and $\Xi_N$ transforms as
\begin{align}\label{eq:TauShiftXiN}
    \Xi_N(z+\tau,w)=e^{-2\pi i\bs Q+2\pi i\widetilde{\rho}}\, \Xi_N(z,w), && \Xi_N(x,w+\tau)=\Xi_N(z,w) e^{2\pi i \bs Q-2\pi i\widetilde{\rho}}.
\end{align}
The shift of the parameter $\rho$ in $\eqref{eq:tilderho}$ makes the monodromies of the Cauchy kernel time-independent (see equation \eqref{eq:zdep_MB}), and the following is true:
\eq{\cH_{\mathscr{A}} \subseteq \textrm{ker}\cP_{\Sigma_{1,n}},}
where $\mathscr{A}:= \bigcup_{k=1}^{n} \mathscr{A}^{[k]}$ in Figure \ref{fig:nptTorusPants}, and the space of functions are defined by the equation \eqref{eq:A14}. It is straightforward to check that $    \cP_{\Sigma_{1,n}}^2 = \cP_{\Sigma_{1,n}}$, and one can further prove:
    \eq{
        \cP_{\oplus}\cP_{\Sigma_{1,n}} = \cP_{\Sigma_{1,n}} ,\quad \cP_{\Sigma_{1,n}}\cP_{\oplus}= \cP_{\oplus}.}
The space of functions on $\mathscr{T}:=\bigcup_{k=1}^{n} \mathscr{T}^{[k]}$ in Figure \ref{fig:nptTorusPants} is
    \begin{gather}
   \cH_{\mathscr{T}}:= \textrm{im}~ \cP_{\oplus} = \textrm{im}~ \cP_{\Sigma_{1,n}} .
\end{gather}
\begin{definition}
The determinant tau function $\T^{(1,n)}$ is defined in terms of the Plemelj operators in equations \eqref{eq:Pplusn}, \eqref{eq:PSigman}, as 
\eq{\label{def:T1n} \T^{(1,n)} := \det_{\cH_{+}}\left[\cP_{\Sigma_{1,n}}^{-1} \cP_{\oplus} \right].}
\end{definition}
We now proceed to formulate the generalization of proposition \ref{prop:CM_tau} to the present case. 

\subsection{Block-determinant representation of the tau function}\label{subsec:Fred_GL}
\begin{proposition}\label{thm:GL_det}
The tau function $\T^{(1,n)}$ in \eqref{def:T1n} is the Fredholm determinant of a block operator acting on $L^{2}(S^{1}) \otimes \left(\bigoplus_{k=1}^{2n} \mathbb{C}^N \right)$:
 \eq{\label{eq:TAU1n} \T^{(1,n)}(\tau,z_{1},...,z_{n})= \det\left[ \mathbb{I} - K_{1,n}\right],}
 where
 \eq{{\Large K_{1,n}} = \adjincludegraphics[width=0.8\linewidth, valign=c]{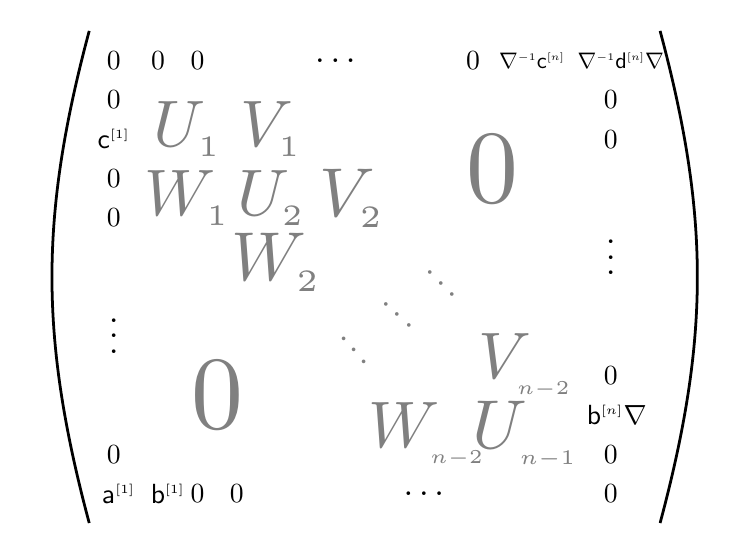},\label{eq:Kthm1}}
and 
\eqs{
    U_{k} = \left( \begin{array}{cc}
       0  & \sfa^{[k+1]}  \\
       \sfd^{[k]} & 0 
    \end{array} \right), && V_{k} = \left( \begin{array}{cc}
       \sfb^{[k+1]}  & 0 \\
       0  & 0
    \end{array}  \right), &&  W_{k} = \left( \begin{array}{cc}
        0 & 0  \\
        0 & \sfc^{[k+1]}
    \end{array} \right). \label{eq:UVW}}
    The operators $\sfa^{[k]}, \sfb^{[k]}, \sfc^{[k]}, \sfd^{[k]}$ defined in \eqref{eq:1nkernels}, 
    and  $\nabla$ is the shift operator defined in \eqref{eq:GLnabla}.
\end{proposition}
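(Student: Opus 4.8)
The plan is to run the argument of Proposition \ref{prop:CM_tau} one more time, promoting the single gluing of the one-punctured torus to the cyclic chain of $n$ gluings displayed in Figure \ref{fig:nptTorusPants}. Starting from the definition \eqref{def:T1n}, I would set $F:=\cP_{\Sigma_{1,n}}^{-1}\cP_{\oplus}f$ for $f\in\cH_+$, so that $\cP_{\Sigma_{1,n}}F=\cP_{\oplus}f$ with $F\in\cH_+$. Using that any projector obeys $x-\cP x\in\textrm{ker}\,\cP$, together with $\textrm{ker}\,\cP_{\Sigma_{1,n}}=\cH_{\mathscr{A}}$, I can write $F=A+\cP_{\oplus}f$ with $A\in\cH_{\mathscr{A}}$, exactly as in \eqref{eq:F_11}. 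The whole computation then reduces to imposing the conditions that characterize $\cH_{\mathscr{A}}$ and eliminating the auxiliary function $A$ in favour of $f$.

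The structural simplification I would exploit is that $\cP_{\oplus}=\bigoplus_{k=1}^{n}\cP_{\oplus}^{[k]}$ is block-diagonal over trinions, so that its components $\sfa^{[k]},\sfb^{[k]},\sfc^{[k]},\sfd^{[k]}$ of \eqref{eq:1nkernels} act only within the $k$-th pair of pants; all coupling between distinct trinions enters solely through the gluing conditions defining $\cH_{\mathscr{A}}$ in \eqref{eq:A14}. Concretely, the identification of $\cout^{[k]}$ with $\cin^{[k+1]}$ imposes $A^{[k+1]}_{in,\pm}=\nabla_{k}^{-1}A^{[k]}_{out,\pm}$, where the shift $\nabla_k$ is trivial for the intermediate cuts $\mathscr{A}^{[k]}$, $k=1,\dots,n-1$ (the two legs already lie on the same A-cycle), while the wraparound $\mathscr{A}^{[n]}$ joining trinion $n$ back to trinion $1$ carries the full B-cycle shift $\nabla$ of \eqref{eq:GLnabla}, encoding the $U(1)$ monodromy of $\Xi_N$ recorded in \eqref{eq:TauShiftXiN} and the factor $M_B=G_{n+1}$ of Definition \ref{def:YcalYnpoint}. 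Since $F\in\cH_+$, the $\cH_-$ part of $F=A+\cP_{\oplus}f$ forces $A^{[k]}_{in,+}=-\sfa^{[k]}f^{[k]}_{in,-}-\sfb^{[k]}f^{[k]}_{out,+}$ and $A^{[k]}_{out,-}=-\sfc^{[k]}f^{[k]}_{in,-}-\sfd^{[k]}f^{[k]}_{out,+}$, the direct analogue of \eqref{eq:11AHminus}.

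Substituting these into the $\cH_+$ part $F^{[k]}_{in,-}=A^{[k]}_{in,-}+f^{[k]}_{in,-}$, $F^{[k]}_{out,+}=A^{[k]}_{out,+}+f^{[k]}_{out,+}$ and re-expressing $A^{[k]}_{in,-}$, $A^{[k]}_{out,+}$ through the gluing map in terms of the neighbouring trinions yields $F=(\mathbb I-K_{1,n})f$, and the nearest-neighbour blocks $U_k,V_k,W_k$ of \eqref{eq:UVW} are exactly the coefficients produced by this elimination: $\sfa^{[k+1]},\sfb^{[k+1]}$ feed trinion $k$'s outgoing data from trinion $k+1$, while $\sfd^{[k]}$ (and the matching $\sfc$) feed trinion $k+1$'s incoming data from trinion $k$. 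I expect the main obstacle to be the bookkeeping of the cyclic structure rather than any hard analysis: where the single gluing of Proposition \ref{prop:CM_tau} collapsed, after conjugation by $\diag(1,\nabla^{-1})$ in \eqref{eq:nabla_conj11}, to a kernel on one circle, here the $n$ gluings close into a loop, so the eliminated operator is block-cyclic with $\nabla$-insertions confined to the corner block that wraps from trinion $n$ to trinion $1$. I would organize the elimination so that each trinion contributes a $2\times2$ block acting on $(f^{[k]}_{in,-},f^{[k]}_{out,+})$, assemble the couplings into the banded-plus-corner pattern of $K_{1,n}$ in \eqref{eq:Kthm1}, and finally collect the shift operators into the corner blocks by a conjugation generalizing \eqref{eq:nabla_conj11}, after which $\T^{(1,n)}=\det[\mathbb I-K_{1,n}]$ follows.
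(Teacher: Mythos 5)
Your proposal is correct and follows essentially the same route as the paper's proof: the same reduction $F=A+\cP_{\oplus}f$ with $A\in\cH_{\mathscr{A}}$, the same gluing conditions (trivial identifications on the intermediate cuts, the shift $\nabla$ only on the wraparound from trinion $n$ to trinion $1$, matching \eqref{eq:con2}), the same elimination of $A$ via the $\cH_-$ equations \eqref{eq:A14}, and the same final conjugation by $\diag(1,\dots,\nabla^{-1})$ generalizing \eqref{eq:nabla_conj11}. No gaps to flag.
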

\begin{proof}
The proof goes along the same lines as that of Proposition \ref{prop:CM_tau}. Recalling the definition of the tau function in \eqref{def:T1n} and of the Plemelj operators in \eqref{pk}, \eqref{eq:PSigman}, we compute the action of $\cP_{\Sigma_{1,n},+}^{-1} \cP_{\oplus,+}$ on a function $f\in \cH_{+}$:
\begin{align}
F:=\mc P_{\Sigma_{1,n},+}^{-1}\mc P_{\oplus,+}f 
 \quad \Rightarrow \quad \mc P_{\Sigma_{1,n}}F=\mc P_{\oplus}f \, , \qquad\quad F \in \cH_{+}.
\end{align}
Now we use that for any projector $\mc P$ acting on a vector $x$, one has $x-\mc P x\in \textrm{ker } \mc P$, and that\footnote{As in the previous section, when $\left(\mathbb{I}- K_{1,n}\right)$ is invertible, $\cH = \cH_{\mathscr{T}} \oplus \cH_{A}$, and therefore $\textrm{ker } \cP_{\Sigma_{1,n}}=\cH_\mathscr{A}$.} $ \textrm{ker } \cP_{\Sigma_{1,n}}=\cH_\mathscr{A}$: 
\begin{align} \label{eq:F_1n}
    F=( F-\mc P_{\Sigma_{1,n}}  F) + \mc P_{\Sigma_{1,n}}  F= A+\mc P_\oplus  f, &&  A:=  F-\mc P_{\Sigma_{1,n}}  F\in \mc H_{\mathscr{A}}.
\end{align}
The orthogonal decomposition of $ A$ is
\begin{gather}\label{eq:AinHA}
    A= \bigoplus_{k=1}^{n} A^{[k]} =\left( \begin{array}{c}
         A^{[k]}_{in,-} \\
         A^{[k]}_{out,+} 
    \end{array}\right) \oplus \left( \begin{array}{c}
         A^{[k]}_{in,+} \\
         A^{[k]}_{out,-} 
    \end{array}\right).
\end{gather}
The $z$-dependent B-cycle monodromy in \eqref{eq:zdep_MB} implies that the monodromies around $\cC_{in}^{[1]}$ and $\cC_{out}^{[n]}$ (see Figure \ref{fig:nptTorusPants}) are given by \eqref{eq:monodromies1_n}, prompting the following expression for the shift operator $\nabla:\cH^{[1]}_{in} \rightarrow \cH^{[n]}_{out}$ 
\begin{align}
    \nabla g(z)=e^{2\pi i\left(\rho-(z-\tau) \sum_{j=1}^{n} \Lambda_{j}\right)}g(z-\tau), \label{eq:GLnabla}
\end{align}
in order to 'glue' the boundary spaces on $\cC_{in}$, $\cC_{out}$. The factor $e^{2\pi i z \sum_{j=1}^{n} \Lambda_{j}}$ in the above definition of $\nabla$ leads to the following action of $\nabla^{-1}$:
\begin{equation}
    \nabla^{-1}h(z)=e^{-2\pi i\left(\rho-z\sum_{j=1}^{n} \Lambda_{j}\right)}h(z+\tau). \label{eq:GLnablainv}
\end{equation}

Identifying the boundaries
$\cC^{[n]}_{out}$ with $\cC^{[1]}_{in}$, and $\cC^{[k]}_{out}$ with $\cC^{[k+1]}_{in}$ for $k = 1...n-1$, produces the torus from the pants decomposition as in Figure \ref{fig:nptTorusPants}, and translates to the following constraints on $A$ in \eqref{eq:AinHA}:
\begin{align}
    A^{[1]}_{in, \pm} = \nabla^{-1} A^{[n]}_{out, \pm}; &&
    A^{[k]}_{out,\pm} = A^{[k+1]}_{in,\pm},\quad\quad k = 1, \ldots, n-1, \label{eq:con2}
\end{align}
where the translation operator $\nabla:\cH^{[1]}_{in}\rightarrow \cH^{[n]}_{out}$ is defined as in \eqref{eq:GLnabla}.
Component-wise, equation \eqref{eq:F_1n} reads
\begin{equation} \label{eq:Fplus}
\begin{split}
    F & = \left( \begin{array}{c}
         F_{in,-} \\
         F_{out,+} 
    \end{array}\right) \oplus \left( \begin{array}{c}
         0 \\
         0
    \end{array}\right) \\
    & = \bigoplus_{k = 1}^{n} \left( \begin{array}{c}
         A^{[k]}_{in,-} \\
         A^{[k]}_{out,+} 
    \end{array}\right) \oplus \left( \begin{array}{c}
         A^{[k]}_{in,+} \\
         A^{[k]}_{out,-} \end{array} \right) + \left( \begin{array}{c}
         f^{[k]}_{in,-} \\
         f^{[k]}_{out,+} 
    \end{array}\right) \oplus \left(\begin{array}{cc}
       \sfa^{[k]}  & \sfb^{[k]}  \\
       \sfc^{[k]}  & \sfd^{[k]}
    \end{array} \right) \left( \begin{array}{c}
         f^{[k]}_{in,-} \\
         f^{[k]}_{out,+} 
    \end{array}\right).
\end{split}
\end{equation}
Imposing the condition that the $\mathcal{H}_{-}$ component of $F$ is zero, and using the constraints in \eqref{eq:con2} we get
\eqs{
    A^{[1]}_{in,+} = -\sfa^{[1]}f^{[1]}_{in,-}  - \sfb^{[1]}f^{[1]}_{out,+} = \nabla^{-1} A^{[n]}_{out, +},
    \\
    A^{[k]}_{in, +} = -\sfa^{[k]}f^{[k]}_{in,-}  - \sfb^{[k]}f^{[k]}_{out,+} = A^{[k-1]}_{out, +}, \quad \textrm{for} \,\, k=2...n,
    \\
    A^{[k]}_{out, - } = -\sfc^{[k]} f^{[k]}_{in,-} - \sfd^{[k]} f^{[k]}_{out,+} =A^{[k+1]}_{in, -}, \quad \textrm{for} \,\, k=1...n-1,
    \\
    A^{[n]}_{out, - } = -\sfc^{[n]} f^{[n]}_{in,-} - \sfd^{[n]} f^{[n]}_{out,+} = \nabla A^{[1]}_{in,-}.
    \label{eq:A14}
}
Substituting \eqref{eq:A14} in \eqref{eq:Fplus},
\begin{gather}
  F= \bigoplus_{k=1}^{n}  \left( \begin{array}{c}
        A^{[k]}_{in,-} \\
         A^{[k]}_{out,+} 
    \end{array}\right) + \left( \begin{array}{c}
         f^{[k]}_{in,-} \\
         f^{[k]}_{out,+} 
    \end{array}\right)  
    =\bigoplus_{k=1}^{n}\left( \begin{array}{c}
         f^{[k]}_{in,-} \\
         f^{[k]}_{out,+} 
    \end{array}\right) \nonumber\\ -\left(\begin{array}{cccccccccccccc}
        0 &0 & 0 & 0 & 0 & 0 & 0 & 0 &\ldots & 0 & \nabla^{-1} \sfc^{[n]} &\nabla^{-1} \sfd^{[n]}    \\
        0 & 0 & \sfa^{[2]} & \sfb^{[2]} & 0 & 0& 0 & 0 & \ldots & 0 & 0 & 0 \\
        \sfc^{[1]}& \sfd^{[1]} & 0& 0& 0 & 0& 0 & 0 & \ldots &0 & 0 & 0 \\ 
        0 & 0 & 0 & 0 & \sfa^{[3]} & \sfb^{[3]} & 0 & 0 &  \ldots  &0 & 0 & 0 \\
        0 & 0 & \sfc^{[2]} & \sfd^{[2]} & 0 & 0 & 0 & 0 &\ldots  &0 & 0 & 0 \\
        0 & 0 & 0& 0 & 0 & 0 & \sfa^{[4]} & \sfb^{[4]} & \ldots  &0 & 0 & 0 \\
        \vdots & \vdots& \vdots &\vdots & & & & \ddots & & \vdots &\vdots &  \vdots\\
        0 & 0 & 0 & 0 & \ldots & 0 & 0& 0 & 0 & 0&  \sfa^{[n]} & \sfb^{[n]} \\
        0 & 0 & 0 & 0 & \ldots & 0 & 0 & 0 & \sfc^{[n-1]} & \sfd^{[n-1]} & 0 & 0\\
       \nabla \sfa^{[1]} &  \nabla \sfb^{[1]} & 0 & 0  &\ldots & 0 & 0 & 0 & 0 & 0  & 0 & 0\\
    \end{array}  \right) \left( \begin{array}{c}
         f^{[1]}_{in, -} \\
         f^{[1]}_{out,+} \\
         f^{[2]}_{in, -} \\
         f^{[2]}_{out,+} \\
         f^{[3]}_{in, -} \\
         f^{[3]}_{out,+} \\
         \vdots \\
         f^{[n]}_{in, -} \\
         f^{[n]}_{out,+} \\
    \end{array} \right) \nonumber \\
    := \left(\mathbb{I}- \widehat{K}_{1,n}\right) f. \label{eq:AHplus}
\end{gather}
Similar to \eqref{eq:nabla_conj11}, we conjugate $\widehat{K}_{1,n}$ with the diagonal operator ${\rm \diag}\left(1,1,...,\nabla^{-1}  \right)$
\eqs{K_{1,n}:={\rm diag}(1,...,\nabla^{-1})\, \widehat{K}_{1,n}\, {\rm diag}(1,...,\nabla) }
obtaining equation \eqref{eq:Kthm1}.
\end{proof}
\begin{remark} It is straightforward to recover \eqref{eq:TAU11} from
\begin{figure}[H]
\centering
\includegraphics[width=12cm]{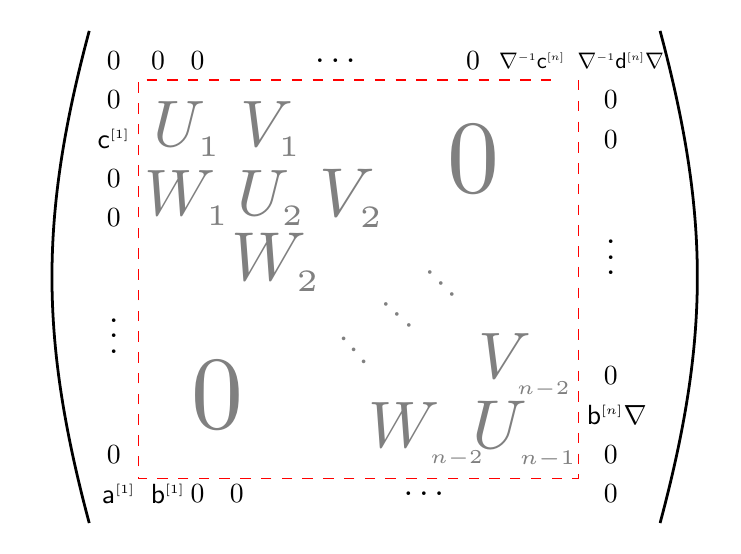}
\caption{Kernel \label{kernel}}
\end{figure}
Moreover, the block form of the tau function $\T^{(1,n)}$ includes naturally an $(n-1)\times (n-1)$ sub-block identical to the tau function appearing in pg. 18 of \cite{Gavrylenko:2016zlf} for the $n+2$-punctured sphere, as emphasised in Figure \eqref{kernel}. This is a consequence of the fact that if we cut the tube that joins the first and last trinion in Figure \ref{fig:nptTorusPants}, (i.e. if we take the limit $\tau\rightarrow+i\infty$), we obtain a Fuchsian problem for an $n+2$-punctured sphere:
\begin{equation}
    \lim_{\tau\rightarrow+i\infty}\T^{(1,n)}\propto\T^{(0,n+2)}.
\end{equation}
\end{remark}
\subsection{Relation to the Hamiltonians}\label{subsec:Ham_GL}
\begin{theorem}\label{thm:GL_Ham} 
The  isomonodromic tau function $\T_{H}$ in \eqref{eq:IsomHam} is related to the Fredholm determinant of the operator $K_{1,n}$ in \eqref{eq:Kthm1} as
    \begin{equation}
    \begin{split}\T_{H}(\tau) &= \det\left[\mathbb{I} - K_{1,n} \right] e^{i\pi\tau\tr\left(\bs \sigma_1^2+\frac{\mathbb{I}}{6} \right)} e^{-i \pi N \widetilde{\rho}} \prod_{i=0}^N\frac{\eta(\tau)}{\theta_1\left(Q_i-\widetilde{\rho} \right)}\prod_{k=1}^n e^{-i\pi z_k\left(\tr\bs \sigma_{k+1}^2-\tr\bs \sigma_{k}^2\right)}\Upsilon_{1,n}, \end{split}\label{eq:Thm2}
    \end{equation}    
where $\Upsilon_{1,n}$ is an arbitrary function of the monodromy data of the linear system \eqref{lax_general}, $Q_{i} \equiv Q_{i}(\tau, z_{1},...,z_{n})$ are the Calogero-like dynamical variables in the linear system \eqref{eq:nptL}, $ \bs \sigma_{k}=\bs a_k+\sum_{j<k}\Lambda_{j}$ are the monodromy exponents defined in \eqref{eq:1nmonodromy}, and $\bs a_{n+1}=\bs a_1$,
\begin{equation}
   \widetilde{\rho} = \rho - \sum_{k=1}^{n}\Lambda_{k} \left(z_{k} - \frac{\tau}{2} -\frac12 \right),
\end{equation}
and $\rho$ is an arbitrary parameter.
\end{theorem}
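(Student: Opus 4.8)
The plan is to reproduce the strategy of the proof of Theorem~\ref{prop:CM_Ham}, now carried out simultaneously for all $n+1$ isomonodromic times $\tau,z_1,\dots,z_n$. Starting from the determinant representation $\T^{(1,n)}=\det[\mathbb{I}-K_{1,n}]$ established in Proposition~\ref{thm:GL_det} together with the definition $\T^{(1,n)}=\det_{\cH_+}[\cP_{\Sigma_{1,n}}^{-1}\cP_\oplus]$, I would first observe that $\cP_\oplus$ is assembled entirely out of the local three-point solutions $\Yt^{[k]}$ of \eqref{eq:3ptk}, whose data are the $\tau$- and $z_k$-independent monodromy exponents $\bs\sigma_k,\bs\mu_k$. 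Hence $\cP_\oplus$ carries no dependence on the isomonodromic times, and the logarithmic derivatives collapse to
\[
\partial_\tau\log\T^{(1,n)}=-\tr_{\cH}\,\cP_\oplus\,\partial_\tau\cP_{\Sigma_{1,n}},\qquad
\partial_{z_k}\log\T^{(1,n)}=-\tr_{\cH}\,\cP_\oplus\,\partial_{z_k}\cP_{\Sigma_{1,n}}.
\]

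Next I would define these derivatives with the same care as in Section~\ref{subsec:Ham_CM}. For the $\tau$-flow the contour $\cC_{out}^{[n]}$ is glued to $\cC_{in}^{[1]}$ through the shift $\nabla$ of \eqref{eq:GLnabla}, so $\cH$ is itself $\tau$-dependent; conjugating by $\diag(1,\dots,\nabla^{-1})$ to a time-independent $\cP_{\Sigma_{1,n}}'$ and proceeding as in \eqref{eq:3} produces a kernel for $\partial_\tau\cP_{\Sigma_{1,n}}$ that acquires the extra $\partial_z,\partial_w$ shifts on $\cC_{out}^{[n]}$, exactly as in \eqref{eq:11xizeta}. The $z_k$-flows are simpler, since the A-cycle contours do not move with the punctures: here the dependence enters only through $\tilde\rho$ in $\Xi_N$ (recall \eqref{eq:tilderho}) and through $\partial_{z_k}Y=YM_{z_k}$ coming from \eqref{lax_general}.

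I would then evaluate the traces using the near-diagonal expansions of $\Xi_N$ (the $N\times N$ analogue of \eqref{Xi_expansion}) and of the Cauchy kernel \eqref{eq:ExpExpansion}. As in the one-punctured computation, the analogues of $I_w$ and $I_\tau$ vanish --- by contractibility of the inner loop and by regularity of the combination $Y^{-1}\Yt^{[k]}L_{3pt}^{[k]}(Y^{-1}\Yt^{[k]})^{-1}-L$ at each puncture --- leaving an ``$I_z$''-type contribution. Summing the surviving $\oint_{\cC_{out}^{[k]}}\tfrac12\tr L^2$ pieces over the trinions assembles the A-cycle integral $H_\tau$ of \eqref{eq:IsomHam}, while for the $z_k$-flow the enclosed puncture converts the trace into the residue $\res_{z=z_k}\tfrac12\tr L^2=H_k$, accompanied by the constant jump $\tr\bs\sigma_{k+1}^2-\tr\bs\sigma_k^2$ of the local exponent across $\mathscr{T}^{[k]}$; the remaining pieces are explicit combinations of $\theta_1(Q_i-\tilde\rho)$ and of $\theta_1',\theta_1'',\theta_1'''$.

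Finally I would integrate. Exactly as in the passage \eqref{proof:thm1}, the heat equation \eqref{eq:heat_theta1} and the eta identity \eqref{eq:eta_id} recombine the $\theta_1''$ and $\theta_1'''$ terms into $\partial_\tau\log\prod_i\theta_1(Q_i-\tilde\rho)/\eta(\tau)$, and the $\bs\sigma_1^2$ terms into the factor $e^{i\pi\tau\tr(\bs\sigma_1^2+\mathbb{I}/6)}$; integrating the $z_k$-relations produces $\prod_k e^{-i\pi z_k(\tr\bs\sigma_{k+1}^2-\tr\bs\sigma_k^2)}$ and the $e^{-i\pi N\tilde\rho}$ prefactor. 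Simultaneous integration over all $n+1$ times is consistent because the Hamiltonians \eqref{eq:IsomHam} Poisson-commute, so the correction one-form is closed; the residual integration constant is the monodromy-dependent $\Upsilon_{1,n}$. I expect the main obstacle to be the bookkeeping of the $U(1)$ residues $\Lambda_j$ and the $z_k$-dependence of $\tilde\rho$: one must verify, using the twists \eqref{eq:LMMTransform}, that the $z_k$-derivative of $\tilde\rho$ in $\Xi_N$ combines with the puncture-residue computation to yield precisely the stated exponents and the $e^{-i\pi N\tilde\rho}$ factor, with every $U(1)$ contribution consistently tracked.
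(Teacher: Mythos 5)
Your proposal follows essentially the same route as the paper's proof: the variational formula $\delta\log\T^{(1,n)}=-\tr\,\cP_\oplus\,\delta\cP_{\Sigma_{1,n}}$, the $\nabla$-conjugation to define $\partial_\tau\cP_{\Sigma_{1,n}}$ on time-independent spaces, the vanishing of the $I_w$- and $I_\tau$-type integrals (contractible loop, regularity of $M_\tau$ and $M_{z_k}$ at the punctures), the survival of an $I_z$-type term, the conversion of $-\oint\tr L\,M_{z_k}$ into the residue Hamiltonian $H_{z_k}$, and the recombination of the $\theta_1',\theta_1'',\theta_1'''$ terms into total derivatives via the heat equation, \eqref{eq:Sii}, and \eqref{eq:eta_id}.

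One point in your write-up is internally inconsistent and, taken literally, would give the wrong answer: you first state (correctly) that the extra $\partial_z,\partial_w$ shifts in $\partial_\tau\cP_{\Sigma_{1,n}}$ appear only on $\cC_{out}^{[n]}$, but then claim that $H_\tau$ is obtained by \emph{summing} $\oint_{\cC_{out}^{[k]}}\tfrac12\tr L^2$ over all trinions. In the paper the $\tau$-variation localizes entirely on the last trinion: since only the identification $\cC_{out}^{[n]}\sim\cC_{in}^{[1]}$ involves the shift by $\tau$, the integrals $I_z$ and $I_w$ are supported on $\CS_{out}^{[n]}$ and $\CP_{out}^{[n]}$ alone, and that single contour is already a representative of the A-cycle, so $I_z$ by itself produces $H_\tau/2\pi i$ plus the theta/eta corrections. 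Each $\cC_{out}^{[k]}$ is a closed loop homotopic to the A-cycle, so summing such integrals over $k=1,\dots,n$ would overcount the Hamiltonian $n$ times (up to residue shifts) rather than assemble it. With that localization corrected, the rest of your argument — including your closing concern about tracking the $\Lambda_j$'s and the $z_k$-dependence of $\widetilde\rho$, which is exactly what the paper handles through \eqref{eq:zkfactor}, \eqref{eq:Mtheta1} and \eqref{eq:theta_eta} — matches the paper's proof.
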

\begin{proof}
Let us recall equation \eqref{def:T1n}:
\eqs{ \T^{(1,n)} = \det_{\cH_{+}}\left[ \cP_{\Sigma_{1,n}}^{-1} \cP_{\oplus} \right],} where the operators $\cP_{\oplus}$, $\cP_{\Sigma_{1,n}}$ are defined in \eqref{pk} and \eqref{eq:PSigman} respectively. The logarithmic derivative of the tau function $\T^{(1,n)}$, has two main components: the derivatives with respect to the modular parameter $\tau$, and the position of the singularities $\lbrace z_{k}\rbrace_{k=1}^{n}$:
\begin{gather}
 \delta \log\T^{(1,n)} =2\pi i d\tau \, \partial_{\tau} \log\T^{(1,n)} + \sum_{k=1}^n dz_{k} \, \partial_{z_{k}} \log\T^{(1,n)}. \label{eq:deltaTau}
\end{gather}
Computation of this derivative can be done exactly in the same way as in \cite[page 20]{Gavrylenko:2016zlf}:
\eqs{ \label{eq:1nTr}
\delta \log\T^{(1,n)} 
= - 2\pi i  \tr_{\cH} \left[\cP_{\oplus} \partial_{\tau} \cP_{\Sigma_{1,n}} \right]\, d\tau- \sum_{l=1}^{n}\tr_{{\mc H}^{[l]}} \left[ \cP_{\oplus}^{[l]} \partial_{z_{k}} \cP_{\Sigma_{1,n}}. \right] \, dz_{k}
}
The computation for the first term in \eqref{eq:1nTr} is the same as in the proof for Theorem \ref{prop:CM_Ham} in section \ref{subsec:Ham_CM}: the $\tau$-derivative is given by
\begin{equation}\label{eq:1nI_tau}
   -  \tr_{\cH} \left[\cP_{\oplus} \partial_{\tau} \cP_{\Sigma_{1,n}} \right]=\sum_{l=1}^n -I_\tau^{(l)}-I_w-I_z,
\end{equation}
where
\begin{gather}
    I_\tau^{(l)}=\oint_{\CP^{[l]}}dw\oint_{\CS^{[l]}}\frac{dz}{2\pi i} \frac{1}{1-e^{-2\pi i(z-w)}}\tr\left[\Yt^{[l]}(z)\Yt^{[l]}(w)^{-1}\partial_\tau\left(Y(w)\Xi_{N}(w,z)Y(z)^{-1} \right) \right], \nonumber \\
    I_z=\oint_{\CP^{[n]}}dw \oint_{\CS_{out}^{[n]} }\frac{dz}{2\pi i} \frac{1}{1-e^{-2\pi i(z-w)}}\tr\left[\Yt^{[n]}(z)\Yt^{[n]}(w)^{-1}\partial_z\left(Y(w)\Xi_{N}(w,z)Y(z)^{-1} \right) \right], \nonumber  \\
    I_w=\oint_{\CP_{out}^{[n]}}dw \oint_{\CS^{[n]} }\frac{dz}{2\pi i} \frac{1}{1-e^{-2\pi i(z-w)}}\tr\left[\Yt^{[n]}(z)\Yt^{[n]}(w)^{-1}\partial_w\left(Y(w)\Xi_{N}(w,z)Y(z)^{-1} \right) \right].
\end{gather}
Note that the contours of $I_{z}$, $I_{w}$ involve only the final trinion, because the identification $z\sim z+\tau$ glues $\mc C_{out}^{[n]}$ to $\mc C_{in}^{[1]}$, as in Figure \ref{fig:nptTorusPants}. Like in the case of one puncture, $I_\tau=\sum_lI_\tau^{(l)}$ vanishes because $M_\tau$ in \eqref{eq:nptMt} has zero residue at the punctures, while $I_w$ vanishes because the $z$-loop is contractible. Using the notation \ref{not:bold}, 
\begin{equation}
    \sum_i \frac{\theta_1'(Q_i-\widetilde{\rho})}{\theta_1(Q_i-\widetilde{\rho})}(L)_{ii} \equiv\tr\left[\frac{\theta_1'(\bs Q-\widetilde{\rho})}{\theta_1(\bs Q-\widetilde{\rho})}L \right],
\end{equation}
we are then left with the following expression  (see \eqref{eq:I_zF} for comparison) for the first term in \eqref{eq:1nTr}:
\begin{gather}
     -  \tr_{\cH} \left[\cP_{\oplus} \partial_{\tau} \cP_{\Sigma_{1,n}} \right]= - I_z=\frac{1}{2} \oint_{\CS^{[n]}_{out}} \frac{dz}{2\pi i}  \tr\left[-L_{3pt}^{[n]}(z)^2+  L^2(z)+2 i\pi\tr L_{3pt}^{[n]}\right] \nonumber\\
     + \frac{1}{2}\oint_{\CS^{[n]}_{out}} \frac{dz}{2\pi i}\left[2 \frac{\theta_{1}'(\bs Q-\widetilde{\rho})}{\theta_{1}(\bs Q-\widetilde{\rho})} L(z) + \frac{\theta_{1}''(\bs Q-\widetilde{\rho})}{\theta_{1}(\bs Q-\widetilde{\rho})} -\mathbb{I}\left( \frac{1}{3} \frac{\theta_{1}'''}{\theta_{1}}+\frac{(2\pi i)^2}{6} \right)\right] \nonumber \\
    =\frac{H_\tau}{2\pi i}+\frac{d}{d\tau}\log\left(e^{-i\pi\tau \tr\left(\bs \sigma_1^2+\mathbb{I}/6\right)}e^{i\pi \tau N\sum_{j=0}^{n}\Lambda_{j}}\prod_{i=1}^N\frac{\theta_1(Q_i-\widetilde{\rho})}{\eta(\tau)} \right)
    \\
    =\frac{H_\tau}{2\pi i}+\frac{d}{d\tau}\log\left(e^{-i\pi\tau \tr\left(\bs \sigma_1^2+\mathbb{I}/6\right)}e^{i\pi N\tilde{\rho}}\prod_{i=1}^N\frac{\theta_1(Q_i-\widetilde{\rho})}{\eta(\tau)} \right).\label{eq:1nTerm1}
\end{gather}
In the last line we used 
\begin{equation}
    \sum_{j=0}^N\Lambda_j=\frac{1}{2}\sum_{j=1}^N\Lambda_j,
\end{equation}
so that
\begin{equation}
    i\pi\tr L_{3pt}^{[n]}=\frac{N}{2}(2\pi i)^2\sum_{j=0}^n\Lambda_j\mathop{=}^{\eqref{eq:tilderho}}2\pi i\partial_\tau\log\left(e^{i\pi N\widetilde{\rho}} \right).
    \end{equation}
Let us now compute the second term in \eqref{eq:1nTr}:
\begin{gather}
   - \sum_{l=1}^{n} \tr_{\cH^{[l]}} \left[ \cP_{\oplus}^{[l]} \partial_{z_{k}} \cP_{\Sigma_{1,n}} \right] \nonumber\\
   =  -\sum_{l=1}^{n} \oint_{\CP^{[l]}}dw  \oint_{\CS^{[l]}} \frac{dz}{2\pi i} \frac{1}{1-e^{-2\pi i(z-w)}} \tr\left[\Yt^{[l]}(z) \Yt^{[l]}(w)^{-1} \partial_{z_{k}} \left(  Y(w) \Xi_{N}(w,z) Y(z)^{-1} \right)   \right] \nonumber \\
    = \sum_{l=1}^{n} \left(I_{z_k}^{(l,1)} + I_{z_k}^{(l,2)} + I_{z_{k}}^{(l,3)}\right), \label{eq:z_k}
\end{gather} 
where
\begin{gather}
    I_{z_{k}}^{(l,1)} :=- \oint_{\CP^{[l]}}dw  \oint_{\CS^{[l]}} \frac{dz}{2\pi i} \frac{1}{1-e^{-2\pi i(z-w)}} \tr\left[\Yt^{[l]}(z) \Yt^{[l]}(w)^{-1} \partial_{z_{k}}Y(w) \Xi_{N}(w,z) Y(z)^{-1}  \right]  =0 \label{eq:I_zk1}
\end{gather}
since the $z$-loop is contractible, and 
\begin{gather}
     I_{z_{k}}^{(l,2)} :=- \oint_{\CP^{[l]}}dw  \oint_{\CS^{[l]}} \frac{dz}{2\pi i} \frac{1}{1-e^{-2\pi i(z-w)}} \tr\left[\Yt^{[l]}(z) \Yt^{[l]}(w)^{-1} Y(w) \partial_{z_{k}}\Xi_{N}(w,z) Y(z)^{-1}  \right]  =0
\end{gather}
since $\partial_{z_k}\Xi_N(w,z)$ is regular in $w \sim z$.
The term $I_{z_{k}}^{(l,3)}$ is computed by expanding $\Xi_{N}$ for $w\sim z$ as in \eqref{eq:Xi_N_exp}, and using \eqref{eq:ExpExpansion} : 
\begin{gather}
    I_{z_{k}}^{(l,3)} := -\oint_{\CP^{[l]}}\frac{dw}{2\pi i}  \oint_{\CS^{[l]}} \frac{dz}{2\pi i} \frac{1}{1-e^{-2\pi i(z-w)}} \tr\left[\Yt^{[l]}(z) \Yt^{[l]}(w)^{-1}  Y(w) \Xi_{N}(w,z) \partial_{z_{k}}Y(z)^{-1}    \right] \nonumber \\
    = \oint_{\CP^{[l]}}\frac{dw}{2\pi i}  \oint_{\CS^{[l]}} \frac{dz}{2\pi i}   \tr\left[\Yt^{[l]}(w)^{-1}  Y(w) \frac{1}{(w-z)^2} \partial_{z_{k}}Y(z)^{-1} \Yt^{[l]}(z)    \right] \nonumber \\
    \oint_{\CP^{[l]}}\frac{dw}{2\pi i}  \oint_{\CS^{[l]}} \frac{dz}{2\pi i}  \tr\left[\frac{\Yt^{[l]}(w)^{-1}  Y(w) }{w-z} \left(\frac{\theta_{1}'(\bs Q-\widetilde{\rho})}{\theta_{1}(\bs Q-\widetilde{\rho})} -i\pi\right) \partial_{z_{k}}Y(z)^{-1} \Yt^{[l]}(z)    \right] \nonumber \\
=\oint_{\CS^{[l]}}\frac{dz}{2\pi i} \tr \left\{\left(Y(z)^{-1}\Yt^{[l]}(z)L_{3pt}^{[l]}(z)(Y(z)^{-1}\Yt^{[l]}(z))^{-1}-L(z) \right)M_{z_k}(z)\right\}\nonumber\\
 -\oint_{\CS^{[k]}}\frac{dz}{2\pi i} \tr\left\{\left(\frac{\theta_1'(\bs Q-\widetilde{\rho})}{\theta_1(\bs Q-\widetilde{\rho})}-i\pi\right)M_{z_k}(z) \right\}.\label{eq:I_zk2}
\end{gather}
Note that \eqref{eq:I_zk2} can be different from zero only for $l=k$, because the integrand is regular for $l\ne k$. To compute the first and second term, we use the regular parts $L(z)_{reg}$ and $L_{3pt}^{[k]}(z)_{reg}$ of $L(z)$ (eq. \eqref{eq:nptL}) and $L_{3pt}^{[k]}$ (eq. \eqref{eq:3ptk}), as well as the explicit expression \eqref{eq:nptMk}: for $M_{z_k}$:
\begin{gather}
    \oint_{\CS^{[k]}}\frac{dz}{2\pi i} \tr \left\{\left(Y(z)^{-1}\Yt^{[k]}(z)L_{3pt}^{[k]}(z)_{reg}(Y(z)^{-1}\Yt^{[k]}(z))^{-1}\right)M_{z_k}(z)\right\} \nonumber\\
    =2\pi i\tr\left(A_1^{[k]}A_0^{[k]} \right)+ i\pi \tr {A_1^{[k]}}^{2}=i\pi\left(\tr\bs \sigma_{k+1}^2-\tr\bs \sigma_k^2\right)=\frac{d}{dz_k}\log\left(e^{i\pi z_k(\tr\bs \sigma_{k+1}^2-\tr\bs \sigma_{k}^2)} \right), \label{eq:zkfactor}
\end{gather}
where we used the identity
\begin{equation}
    \tr\left(A_0^{[k]}A_1^{[k]} \right)=\frac{1}{2}\tr\left({A_\infty^{[k]}}^2-{A_0^{[k]}}^2-{A_1^{[k]}}^2 \right)=\frac{1}{2}\left(\tr\bs \sigma_{k+1}^2-\tr\bs \sigma_k^2-\tr\bs \mu_k^2\right).
\end{equation}
To compute the second term in \eqref{eq:I_zk2}, we note that $M_{z_k}$ in \eqref{eq:nptMk} is simply the singular part at $z_k$ of $L$ in \eqref{eq:nptL} with a negative sign, so that
\begin{equation}\label{eq:HkLM}
   -\oint_{\CS^{[k]}}\tr L(z)M_{z_k}(z)\frac{dz}{2\pi i}=\oint_{\CS^{[k]}}\frac{1}{2}\tr L^2(z)\frac{dz}{2\pi i}=\tr\left(S_kL(z_k)_{reg} \right)= H_{z_k}
\end{equation}
The last term in \eqref{eq:I_zk2}:
\begin{equation}
\begin{split}
    -\oint_{\CS^{[k]}}\frac{dz}{2\pi i} &\tr\left\{\left(\frac{\theta_1'(\bs Q-\widetilde{\rho})}{\theta_1(\bs Q-\widetilde{\rho})}-i\pi\right)M_{z_k}(z) \right\} =\sum_{i=1}^N\left(S_{ii}^{(k)}+\Lambda_{k}\right)\left(\frac{\theta_1'(Q_i-\widetilde{\rho})}{\theta_1(Q_i-\widetilde{\rho})}-i\pi \right)\\
    & =\sum_{i=1}^N\left(S_{ii}^{(k)}+\Lambda_{k}\right)\frac{\theta_1'(Q_i-\widetilde{\rho})}{\theta_1(Q_i-\widetilde{\rho})}-i\pi N\Lambda_{k}
\end{split}\label{eq:Mtheta1}
\end{equation}
since $\tr S^{(k)} =0$. To simplify \eqref{eq:Mtheta1} further, let us first substitute \eqref{eq:nptL}, \eqref{eq:nptMk} in \eqref{eq:HkLM}:
\begin{equation}
    H_{z_k}=\sum_{i=1}^NS_{ii}^{(k)}P_i+\left(P\text{-independent part}\right),
\end{equation}
which implies, together with the canonical Poisson bracket $\{Q_i,P_j\}=\delta_{ij}$, that
\begin{equation}\label{eq:Sii}
    \frac{dQ_i}{dz_k}=\frac{\partial H_{z_k}}{\partial P_i}=S_{ii}^{(k)}\end{equation}
Then \eqref{eq:Mtheta1} becomes
\begin{equation}
    \begin{split}
   \oint_{\CS^{[k]}}\frac{dz}{2\pi i} \tr\left\{\frac{\theta_1'(\bs Q-\widetilde{\rho})}{\theta_1(\bs Q-\widetilde{\rho})}M_{z_k}(z) \right\} 
   & = \sum_{i=1}^N\left(S_{ii}^{(k)}+\Lambda_{k}\right)\frac{\theta_1'(Q_i-\widetilde{\rho})}{\theta_1(Q_i-\widetilde{\rho})}   -i\pi N\Lambda_{k}\\
   &\mathop{=}^{\eqref{eq:tilderho}}\sum_{i=1}^N\left(\frac{dQ_i}{dz_k}-\frac{d\widetilde{\rho}}{dz_k}\right)\frac{\theta_1'(Q_i-\widetilde{\rho})}{\theta_1(Q_i-\widetilde{\rho})}-i\pi N\Lambda_{k}  \\ &=\frac{d}{dz_k}\log\left(e^{-i\pi Nz_k \Lambda_{k}}\prod_{i=1}^N\theta_1(Q_i-\widetilde{\rho} )\right)\\
   & \mathop{=}^{\eqref{eq:tilderho}}\frac{d}{dz_k}\log\left(e^{i\pi N \widetilde{\rho}}\prod_{i=1}^N\theta_1(Q_i-\widetilde{\rho} )\right). \end{split}\label{eq:theta_eta}
\end{equation}
Substituting \eqref{eq:zkfactor}, \eqref{eq:HkLM}, \eqref{eq:theta_eta} back in \eqref{eq:I_zk2},
\begin{gather} 
   I_{z_{k}}^{(l,3)} =\delta_k^l\left[  \frac{d}{dz_k}\log\left(e^{i\pi z_k\left(\tr\bs \sigma_{k+1}^2-\tr\bs \sigma_{k}^2\right)}e^{i\pi N\widetilde{\rho}} \right) + H_{z_k} + \frac{d}{dz_k}\log\prod_{i=1}^N\theta_1(Q_i-\widetilde{\rho})\right]. \label{I_zk2F}
\end{gather}
Putting it all together \eqref{eq:1nTr}:
\begin{equation}
    \begin{split}
    \delta\log\det\left[\mathbb{I}-K_{1,n}\right] &\mathop{=}^{\eqref{eq:TAU1n}} \delta \log \T^{(1,n)}  = - 2\pi i  \tr_{\cH} \left[\cP_{\oplus} \partial_{\tau} \cP_{\Sigma_{1,n}} \right]\, d\tau- \sum_{l=1}^{n}\tr_{{\mc H}^{[l]}} \left[ \cP_{\oplus}^{[l]} \partial_{z_{k}} \cP_{\Sigma_{1,n}} \right] \, dz_{k}  \\
    &\mathop{=}^{\eqref{eq:1nI_tau},\eqref{eq:z_k}} -2\pi i\left(\sum_{l=1}^nI_\tau^{(l)}+I_w+I_z \right) d\tau +\sum_{l=1}^{n} \left(I_{z_k}^{(l,1)} + I_{z_k}^{(l,2)} +I_{z_k}^{(l,3)} \right) dz_{k}  \\
    &\mathop{=}^{\eqref{eq:1nTerm1}, \eqref{eq:I_zk1}, \eqref{I_zk2F}} H_\tau d\tau + \sum_{k=1}^{n} H_{z_k} dz_{k} \\ 
    &+2\pi i\frac{d}{d\tau}\log\left(e^{-i\pi \tau\tr\left(\bs \sigma_1^2+\frac{\mathbb{I}}{6}\right)} e^{i \pi N \widetilde{\rho}} \prod_{i=1}^N\frac{\theta_1(Q_i-\widetilde{\rho})}{\eta(\tau)} \right) d\tau \\ &+\sum_{k=1}^{n}\left[\frac{d}{dz_k}\log\left(e^{i\pi z_k\left(\tr\bs \sigma_{k+1}^2-\tr\bs \sigma_{k}^2\right)}e^{i\pi N\widetilde{\rho}} \right)  + \frac{d}{dz_k}\log\prod_{i=1}^N\theta_1(Q_i-\widetilde{\rho})\right] dz_{k} \\
    & \mathop{=}^{\eqref{eq:IsomHam}}\delta\log\T_H\\
    &+\delta\log\left[e^{-i\pi\tau\tr\left(\bs \sigma_1^2+\frac{\mathbb{I}}{6}\right)}e^{i\pi N\widetilde{\rho}} \prod_{i=1}^N \frac{\theta_1(Q_i-\widetilde{\rho})}{\eta(\tau)} \prod_{k=1}^n e^{i\pi z_k\left(\tr\bs \sigma_{k+1}^2-\tr\bs \sigma_{k}^2\right)}\right] 
    \end{split}\label{eq:Tau_1n}
\end{equation} 
Integrating \eqref{eq:Tau_1n} and substituting \eqref{eq:tilderho}, we obtain \eqref{eq:Thm2}. 
\end{proof}

\section{Charged partitions and Nekrasov functions}\label{sec:Minors}

In this section, we expand the Fredholm determinant \eqref{eq:TAU1n} in terms of its principal minors labeled by random partitions, and show that the resulting combinatorial expression takes the form of a Fourier series of Nekrasov functions, known as dual Nekrasov-Okounkov partition function \cite{Nekrasov:2003rj} in the self-dual Omega-background, for a class of four-dimensional $\mathcal{N}=2$ supersymmetric gauge theories called circular quiver gauge theories. These are gauge theories with multiple $SU(N)$ gauge groups, each of which is coupled to two matter hypermultiplets in the bifundamental representation, and their partition functions are equal to free fermion conformal blocks on the torus.

\subsection{Minor expansion}\label{subsec:Minor_Pants}
The Hilbert space $L^2(S^1)$ admits a natural orthonormal basis of Fourier modes. We now compute the minor expansion of the Fredholm determinant \eqref{eq:TAU1n} in this particular basis. 
The kernels of the operators $\sfa^{[k]},\sfb^{[k]},\sfc^{[k]},\sfd^{[k]}$ in \eqref{eq:1nkernels} read:
\eqs{\label{eq:abcdYt}
    \sfa^{[k]} (z,w) = \frac{\mathbb{I}-\Yt^{[k]}(z) \Yt^{[k]}(w)^{-1}}{1-e^{-2\pi i(z-w)}}, && z,w\in \cin^{[k]}, \\
    \sfb^{[k]}(z,w)\, = \frac{\Yt^{[k]}(z) \Yt^{[k]}(w)^{-1}}{1-e^{-2\pi i(z-w)}}, && z\in \cin^{[k]}, \, w\in \cout^{[k]} \\
    \sfc^{[k]}(z,w) = -\frac{\Yt^{[k]}(z) \Yt^{[k]}(w)^{-1}}{1-e^{-2\pi i(z-w)}}, && z\in \cout^{[k]}, \, w\in \cin^{[k]} \\
    \sfd^{[k]}(z,w) = \frac{\Yt^{[k]}(z) \Yt^{[k]}(w)^{-1}-\mathbb{I} }{1-e^{-2\pi i(z-w)}}, && z,w\in \cout^{[k]}.}
Since the solution $\Yt^{[k]}$ to the $k$-th three-point problem defined in \eqref{eq:3ptk} is multivalued on $\cC_{in},\cC_{out}$, with monodromy determined by $\bs \sigma_{k},\bs \sigma_{k+1}$ respectively as in equation \eqref{eq:1nmonodromy}, the  matrix elements of the kernels in \eqref{eq:abcdYt} have the following (twisted) Fourier series representation:
\begin{equation}
    \begin{split}
\sfa_{\alpha, \beta}^{[k]}(z,w) = \sum_{-r,s\in\mathbb{Z}'_+} \sfa^{-r;\alpha}_{s;\beta} e^{2\pi iz\left(\frac{1}{2}-r+{ \sigma}_{k}^{(\alpha)}\right)} e^{2\pi iw\left(-\frac{1}{2}-s- { \sigma}_{k}^{(\beta)}\right)}, \\
    \sfb_{\alpha, \beta}^{[k]}(z,w) = \sum_{r,s\in\mathbb{Z}'_+} \sfb^{-r;\alpha}_{-s;\beta} e^{2\pi iz\left(\frac{1}{2}-r+{ \sigma}_{k}^{(\alpha)}\right)} e^{2\pi i w\left(-\frac{1}{2}+s- { \sigma}_{k+1}^{(\beta)}\right)}, \\
    \sfc_{\alpha, \beta}^{[k]}(z,w) = \sum_{r,s\in\mathbb{Z}'_+} \sfc^{r;\alpha}_{s;\beta} e^{2\pi iz\left(\frac{1}{2}+r+{ \sigma}_{k+1}^{(\alpha)}\right)} e^{2\pi iw\left(-\frac{1}{2}-s- { \sigma}_{k}^{(\beta)}\right)}, \\
    \sfd_{\alpha, \beta}^{[k]}(z,w) = \sum_{r,s\in\mathbb{Z}'_+} \sfd^{r;\alpha}_{-s;\beta} e^{2\pi iz\left(\frac{1}{2}+r+{ \sigma}_{k+1}^{(\alpha)}\right)} e^{2\pi iw\left(-\frac{1}{2}+s- { \sigma}_{k+1}^{(\beta)}\right)},\end{split}\label{eq:ker_abcd}\end{equation}
with $\alpha,\beta=1,\dots,N$, and $\mathbb{Z}'_+$ denoting the set of positive half-integers. 
The Fourier coefficients $\sfa^{-r;\alpha}_{s;\beta},\sfb^{-r;\alpha}_{-s;\beta},\sfc^{r;\alpha}_{s;\beta},\sfd^{r;\alpha}_{-s;\beta}$
were computed in \cite{Gavrylenko:2016zlf}, but we will not need their explicit form. A submatrix of either $\sfa^{[k]},\sfb^{[k]},\sfc^{[k]},\sfd^{[k]} $, of size $i\times j$, is denoted by two unordered sets $\{(r,\alpha)_1,\dots,(r,\alpha)_i\} \in 2^{\mathbb{Z}'_{+} \times \lbrace 1,...,N \rbrace}$ and $\{(s,\beta)_1,\dots,(s,\beta)_j\} \in  2^{\mathbb{Z}'_{+} \times \lbrace 1,...,N \rbrace}$ where $r,s$ are the Fourier indices in the expansion \eqref{eq:ker_abcd}, and $\alpha,\beta$ are the matrix ("color") indices. Such sets comprised of positive (negative) Fourier modes will be denoted by $I$ ($J$). Minors of $K$ will then be denoted by collections of such sets $\vec{I}:=\{I_1,\dots,I_n\} $, $\vec{J}:=\{J_1,\dots,J_n\}$, and a generic minor $K_{\vec{I},\vec{J}}$ has the form:
\begin{gather}
    K_{\vec{I},\vec{J}}= \left(\begin{array}{cccccccccc}
        0 &0  & 0 & 0 &  \ldots & 0& \left(\nabla^{-1} \sfc^{[n]}\right)^{J_{1}}_{J_{n}} &\left(\nabla^{-1} \sfd^{[n]}\nabla \right)^{J_{1}}_{I_{1}}  \\
        0 & 0 & \left(\sfa^{[2]} \right)^{I_{2}}_{J_{2}}& \left(\sfb^{[2]}\right)^{I_{2}}_{I_{3}} &   \ldots & 0& 0 & 0 \\
       \left( \sfc^{[1]}\right)^{J_{2}}_{J_{1}} & \left(\sfd^{[1]}\right)^{J_{2}}_{I_{2}} & 0& 0&   \ldots&0& 0 & 0 \\ 
        0 & 0 & 0 & 0  & \ldots &0 & 0 & 0  \\
        0 & 0 & \left(\sfc^{[2]}\right)^{J_{3}}_{J_{2}} & \left(\sfd^{[2]}\right)^{J_{3}}_{I_{3}}  & \ldots & 0 & 0 & 0 \\
        \vdots  &\vdots & &  \ddots & & \vdots &\vdots & \vdots\\
        0 & 0  & \ldots & 0  & 0 & 0&  \left(\sfa^{[n]}\right)^{I_{n}}_{J_{n}} & \left(\sfb^{[n]} \nabla \right)^{I_{n}}_{I_{1}} \\
        0 & 0 & \ldots & 0  & \left(\sfc^{[n-1]} \right)^{J_{n}}_{J_{n-1}}& \left(\sfd^{[n-1]}\right)^{J_{n}}_{I_{n}} & 0 & 0\\
        \left(\sfa^{[1]}\right)^{I_{1}}_{J_{1}} & \left( \sfb^{[1]}\right)^{I_{1}}_{I_{2}} & \ldots  & 0  & 0 & 0& 0 &0\\
    \end{array}  \right). \label{eq:kijiji}
\end{gather}
A combinatorial interpretation in terms of Maya diagrams and charged partitions proves vital in expressing the minors as Nekrasov functions:
the multi-indices $(I,J)$ can be viewed as the positions $\sf h(m^{(\alpha)})$ and $\sf p(m^{(\alpha)})$ of 'holes' and 'particles'  respectively, of a coloured Maya diagram $\sf m^{(\alpha)}$, where $\alpha=1,...,N$, see figure \ref{fig:young}. Each particle (hole) carries a positive (negative) unit charge, so that the total charge associated to every Maya diagram is 
\begin{equation}
    \sf Q({\sf m}^{(\alpha)}) := \vert p(m^{(\alpha)}) \vert - \vert h(m^{(\alpha)})\vert.
\end{equation}
Using the notation
\begin{align}\label{eq:charge_maya}
    \vec{\sf m}:=\left(\sf{m}^{(1)},\dots,\sf{m}^{(N)} \right), && \vec{\sf Q}:=\left(\sf{Q}^{(1)},\dots,\sf{Q}^{(N)} \right),
\end{align}
the total charge is
\eq{\label{eq:totalchargeMaya}
    \sfQ:=\sum_{\alpha=1}^N\sfQ^{(\alpha)},
    }
and it is the same for every $N$-tuple of coloured Maya diagrams appearing in our expansions. Each Maya diagram determines uniquely a charged Young diagram $\left(\sf Y,\sf Q\right)$ as exemplified in figure \ref{fig:young}. Consequently, the minors can be labeled by $N$-tuples of charged partitions $\left(\vec{\sfY},\vec{\sfQ}\right)$. 
\begin{definition} With the labels in terms of partitions $\sf Y$ and charges $\sf Q$, let us define the trinion partition function by the following expression:
\begin{equation}
\begin{split}
   Z^{\sfY_{k},\sfQ_{k}}_{\sfY_{k+1},\sfQ_{k+1}}\left(\mathscr{T}^{[k]} \right)&= Z^{I_{k}, J_{k}}_{I_{k+1}, J_{k+1}}\left(\mathscr{T}^{[k]} \right)  
   := (-1)^{\vert I_{k+1} \vert} \det\left( \begin{array}{cc}\left( \sfa^{[k]} \right)^{I_{k}}_{J_{k}} &\left( \sfb^{[k]} \right)^{I_{k}}_{I_{k+1}} \\ & \\
    \left( \sfc^{[k]} \right)^{J_{k+1}}_{J_{k}} & \left( \sfd^{[k]} \right)^{J_{k+1}}_{I_{k+1}}\end{array} \right),
\end{split}\label{eq:YQ_Z}
\end{equation} 
 where $k=1,\dots, n$, and $I_{n+1}= I_{1}, J_{n+1}=J_{1}$. $\mathscr{T}^{[k]}$ is the $k$-th trinion in the pants decomposition in figure \ref{fig:nptTorusPants}.
\end{definition}
\begin{figure}[H]
\centering
\includegraphics[width=12.5cm]{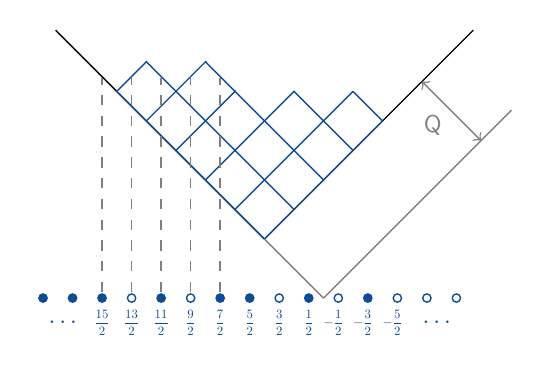}
\caption{Young and Maya diagrams \label{fig:young}}
\end{figure}
Note that the determinant in \eqref{eq:YQ_Z} is non zero for $\vert I_{k+1} \vert + \vert J_{k} \vert = \vert I_{k} \vert + \vert J_{k+1} \vert $, which in turn implies that all the Maya diagrams carry the same charge $\sf Q$.
\begin{proposition}\label{thm:Minors}
The determinant tau function $\T^{(1,n)}$ in \eqref{eq:Tau_1n} has the following minor expansion in terms of the trinion partition functions in \eqref{eq:YQ_Z}:
\begin{gather}\label{eq:MinorExp}
    \T^{(1,n)}=\sum_{\vec{\sfQ}_1,\dots \vec{\sfQ}_n}\sum_{\vec{\sfY }_1,\dots \vec{\sfY }_n\in\mathbb{Y}^N} e^{2\pi i\tau\left[\frac{1}{2} \left({\vec {\sf Q} }+ {\vec \sigma}_{1}  \right)^2 - \frac{1}{2} {\vec \sigma}_{1}^2 + \vert \vec{\sf Y}\vert\right]  - 2\pi i \left(\rho - \frac{\tau}{2}+\tau\sum_{j=1}^{n}\Lambda_j \right) {\sf Q} }\prod_{k=1}^nZ^{\vec{\sfY }_{k},\vec{\sfQ}_{k} }_{\vec{\sfY }_{k+1},\vec{\sfQ}_{k+1} }(\mathscr{T }^{[k]})
    \end{gather}
where the $Z^{\vec{\sfY }_{k},\vec{\sfQ}_{k} }_{\vec{\sfY }_{k+1},\vec{\sfQ}_{k+1} }(\mathscr{T }^{[k]})$ is the determinant defined in \eqref{eq:YQ_Z}, $\vec{\sigma}_1:=\left(\sigma_1^{(1)},\dots,\sigma_1^{(N)} \right)$\footnote{Note that here, differently from \eqref{eq:1nmonodromy} where we collected the monodromy exponents $\sigma_i^{(1)},\dots,\sigma_i^{(N)}$ into diagonal matrices denoted by $\bs \sigma_i$, we organize them into vectors $\vec{a}_i$, since they are summed with the charges $\vec{\sfQ}_i$, that are vectors in the root lattice $\mathbb{Z}^N$ of $\mathfrak{gl}_N$.} is the vector of monodromy exponents along the A-cycle of the torus, with modular parameter $\tau$.
\end{proposition}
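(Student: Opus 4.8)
The plan is to apply the standard Fredholm (von~Koch) minor expansion of $\det[\mathbb{I}-K_{1,n}]$ in the orthonormal twisted-Fourier basis of $L^2(S^1)\otimes\bigl(\bigoplus_{k=1}^{2n}\mathbb{C}^N\bigr)$ underlying \eqref{eq:ker_abcd}, and then to show that every principal minor factorizes over the $n$ trinions of the pants decomposition of Figure~\ref{fig:nptTorusPants} into the block determinants $Z^{I_k,J_k}_{I_{k+1},J_{k+1}}(\mathscr{T}^{[k]})$ of \eqref{eq:YQ_Z}. Concretely, I would write $\det[\mathbb{I}-K_{1,n}]=\sum_{\vec I,\vec J}(-1)^{|\vec I|+|\vec J|}\det K_{\vec I,\vec J}$, with $K_{\vec I,\vec J}$ the principal submatrix \eqref{eq:kijiji} cut out by the positive and negative Fourier modes collected in $\vec I=\{I_1,\dots,I_n\}$ and $\vec J=\{J_1,\dots,J_n\}$. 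The opening step is the dictionary of Figure~\ref{fig:young}: each pair $(I_k,J_k)$ is read as the particle/hole data of an $N$-tuple of coloured Maya diagrams $\vec{\sf m}$, hence as an $N$-tuple of charged partitions $(\vec{\sfY}_k,\vec{\sfQ}_k)$, so that the sum over index sets becomes the double sum over $\vec{\sfQ}_k$ and $\vec{\sfY}_k\in\mathbb{Y}^N$ appearing in \eqref{eq:MinorExp}.

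Second, I would establish the factorization. The matrix \eqref{eq:kijiji} has a cyclic band structure in which the data of trinion $k$ couples only to its neighbours through the shared gluing index sets $I_k$ (living on $\cC_{in}^{[k]}=\cC_{out}^{[k-1]}$) and $J_k$; this is precisely the direct-sum structure $\cP_{\oplus}=\bigoplus_k\cP_{\oplus}^{[k]}$ of \eqref{eq:Pplusn} reflected at the level of minors. Reordering rows and columns so as to group the two boundaries of each trinion together turns the cyclic determinant into $\prod_{k=1}^n\det\bigl(\sfa^{[k]},\sfb^{[k]};\sfc^{[k]},\sfd^{[k]}\bigr)$ evaluated on the gluing indices, which is exactly $\prod_{k=1}^n Z^{I_k,J_k}_{I_{k+1},J_{k+1}}(\mathscr{T}^{[k]})$ once the signs $(-1)^{|I_{k+1}|}$ are absorbed into \eqref{eq:YQ_Z}. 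This is the genus-one analogue of the sphere factorization of \cite{Gavrylenko:2016zlf}, as already anticipated by the embedded $(n-1)\times(n-1)$ sub-block highlighted after Proposition~\ref{thm:GL_det}. The nonvanishing condition $|I_{k+1}|+|J_k|=|I_k|+|J_{k+1}|$ noted below \eqref{eq:YQ_Z} then propagates around the cycle and forces all coloured Maya diagrams to share a single total charge $\sfQ$ as in \eqref{eq:totalchargeMaya}.

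Third, I would extract the $\tau$- and $\rho$-dependent prefactor from the shift operator $\nabla$ sitting in the corner blocks $\nabla^{-1}\sfc^{[n]}$, $\nabla^{-1}\sfd^{[n]}\nabla$ and $\sfb^{[n]}\nabla$ of \eqref{eq:kijiji}, which implement the wrap-around gluing $\cC_{out}^{[n]}\sim\cC_{in}^{[1]}$. Acting by $\nabla$ of \eqref{eq:GLnabla} on a twisted mode $e^{2\pi i z(\pm\frac12\pm r+\sigma_1^{(\alpha)})}$ multiplies it by the scalar $e^{2\pi i\rho}\,e^{-2\pi i\tau(\pm\frac12\pm r+\sigma_1^{(\alpha)})}$ together with the $\sum_j\Lambda_j$ contribution of \eqref{eq:tilderho}. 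Collecting these scalars over all modes of $I_1,J_1$ that pass through the gluing, and using the standard free-fermion identity that the total Fourier energy of a coloured charged partition equals $\tfrac12(\vec{\sfQ}+\vec\sigma_1)^2-\tfrac12\vec\sigma_1^2+|\vec{\sfY}|$, reproduces the exponential weight $e^{2\pi i\tau[\frac12(\vec{\sfQ}+\vec\sigma_1)^2-\frac12\vec\sigma_1^2+|\vec{\sfY}|]-2\pi i(\rho-\frac\tau2+\tau\sum_{j=1}^n\Lambda_j)\sfQ}$ of \eqref{eq:MinorExp}, the linear $-\frac\tau2\sfQ$ piece coming from the half-integer offset $\pm\frac12$ of the modes and the $\rho$ and $\Lambda_j$ pieces from the two factors in $\nabla$.

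The main obstacle will be step three: keeping exact track of how $\nabla$ redistributes both the twist $\vec\sigma_1$ and the $\Lambda_j$-shift across the cyclic gluing, and reorganizing the resulting sum of Fourier exponents into the clean quadratic-in-charge form without spurious signs or index-dependent remainders. In particular one must check that the twist entering the quadratic term is exactly the A-cycle exponent $\vec\sigma_1$ on $\cC_{in}^{[1]}$ with $\vec a_{n+1}=\vec a_1$, and that the linear-in-$\sfQ$ term carries precisely the $-\frac\tau2+\tau\sum_j\Lambda_j$ shift dictated by \eqref{eq:tilderho}; the factorization of step two is a purely combinatorial rearrangement that follows the sphere computation, whereas the $\nabla$-bookkeeping is the genuinely new genus-one ingredient. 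Once this is settled, summing the factorized minors over all $\vec{\sfQ}$ and all $\vec{\sfY}\in\mathbb{Y}^N$ yields \eqref{eq:MinorExp}.
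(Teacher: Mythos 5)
Your proposal follows essentially the same route as the paper's proof: the von Koch expansion of $\det[\mathbb{I}-K_{1,n}]$ whose principal minors factorize over the trinions into the block determinants \eqref{eq:YQ_Z} (this is read off from the cyclic band structure \eqref{eq:kijiji}, with the signs absorbed as you say), followed by stripping the scalar eigenvalues of $\nabla$, $\nabla^{-1}$ off the corner blocks and converting $\sum_l r_l+\sum_k s_k$ and $\#r-\#s$ into $\tfrac{\sfQ^2}{2}+|\sfY|$ and $\sfQ$ via the Maya/Young dictionary of \eqref{eq:srQY}, which is exactly your ``free-fermion energy'' identity. The bookkeeping you flag as the main obstacle is precisely what the paper carries out in \eqref{eq:Zn_block}, with the same outcome for the quadratic, linear-in-$\sfQ$, $\rho$- and $\Lambda_j$-dependent terms, so the proposal is correct and matches the paper's argument.
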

\begin{proof}
From \eqref{eq:kijiji}, we can read off the minor expansion of the tau function \eqref{eq:TAU1n} in terms of the trinion partition functions in \eqref{eq:YQ_Z}:
\begin{equation}\label{eq:TAU_IJ}
    \T^{(1,n)} = \sum_{(\vec{I},\vec{J})} \prod_{k=1}^{n-1} Z^{I_{k}, J_{k}}_{I_{k+1}, J_{k+1}}\left(\mathscr{T}^{[k]} \right)   \times (-1)^{|I_1|} \det\left( \begin{array}{cc}\left( \sfa^{[n]} \right)^{I_{n}}_{J_{n}} &\left( \sfb^{[n]} \right)^{I_{n}}_{I_{1}} \\ & \\
    \left( \sfc^{[n]} \right)^{J_{1}}_{J_{n}} & \left( \sfd^{[n]} \right)^{J_{1}}_{I_{1}}\end{array} \right).  
    \end{equation}
Additionally, we can write the last factor in \eqref{eq:TAU_IJ} as follows
\begin{gather}
     \det\left( \begin{array}{cc}\left( \sfa^{[n]} \right)^{I_{n}}_{J_{n}} &\left( \sfb^{[n]} \nabla \right)^{I_{n}}_{I_{1}} \\  \\
    \left( \nabla^{-1} \sfc^{[n]} \right)^{J_{1}}_{J_{n}} & \left(\nabla^{-1} \sfd^{[n]} \nabla \right)^{J_{1}}_{I_{1}}\end{array} \right) \nonumber \\
    \mathop{=}^{\eqref{eq:GLnabla},\eqref{eq:GLnablainv}} \left(\prod_{(r,\alpha)\in J_{1}}  e^{-2\pi i \rho} e^{2\pi i \tau\left( \frac{1}{2}+ r+\sigma_{1}^{(\alpha)}-\sum_{j=1}^{n}\Lambda_j \right)}\right) \det\left( \begin{array}{cc}\left( \sfa^{[n]} \right)^{I_{n}}_{J_{n}} &\left( \sfb^{[n]} \right)^{I_{n}}_{I_{1}} \\  \\
    \left( \sfc^{[n]} \right)^{J_{1}}_{J_{n}} & \left(\sfd^{[n]} \right)^{J_{1}}_{I_{1}}\end{array} \right) \nonumber\\
\times \left(\prod_{(s,\beta)\in I_{1}} e^{2\pi i \rho}e^{2\pi i \tau\left(-\frac{1}{2} + s - \sigma_{1}^{(\beta)} +\sum_{j=1}^{n} \Lambda_{j} \right)} \right) \nonumber \\
    = \det\left( \begin{array}{cc}\left( \sfa^{[n]} \right)^{I_{n}}_{J_{n}} &\left( \sfb^{[n]}  \right)^{I_{n}}_{I_{1}} \\  \\
    \left(  \sfc^{[n]} \right)^{J_{1}}_{J_{n}} & \left( \sfd^{[n]}  \right)^{J_{1}}_{I_{1}}\end{array} \right) ~\textrm{exp}\left\{\sum_{(r,\alpha)\in J_{1}} \left[-2\pi i \left(\rho-\frac{\tau}{2}+\tau\sum_{j=1}^{n}\Lambda_j \right) +2\pi i \tau\left(r + \sigma_{1}^{(\alpha)} \right)\right]\right\} \nonumber\\
    \times\exp\left\{\sum_{(s,\beta)\in I_{1}} \left[2\pi i \left(\rho-\frac{\tau}{2}+\tau\sum_{j=1}^{n}\Lambda_j \right) + 2\pi i \tau\left(s - \sigma_{1}^{(\beta)}\right) \right] \right\} \nonumber \\
    = \det\left( \begin{array}{cc}\left( \sfa^{[n]} \right)^{I_{n}}_{J_{n}} &\left( \sfb^{[n]}  \right)^{I_{n}}_{I_{1}} \\  \\
    \left( \sfc^{[n]} \right)^{J_{1}}_{J_{n}} & \left( \sfd^{[n]}  \right)^{J_{1}}_{I_{1}}\end{array} \right) e^{2\pi i\tau\left[\frac{1}{2} \left({\vec {\sf Q} }+ {\vec \sigma}_{1}  \right)^2 - \frac{1}{2} {\vec \sigma}_{1}^2 + \vert \vec{\sf Y}\vert\right]  - 2\pi i \left(\rho - \frac{\tau}{2}+\tau\sum_{j=1}^{n}\Lambda_j \right) {\sf Q} }.\label{eq:Zn_block}
\end{gather}
In the second line of \eqref{eq:Zn_block}, we used the fact that if $\bs\sigma_1$ is the monodromy exponent on $\cC_{in}^{[1]}$, then the monodromy exponent on $\cC_{out}$ is $\bs\sigma_1-\sum_{j=1}^n\Lambda_j$ 
Since $s\in I_{1}$, the hole positions in the corresponding Maya diagram $\sf{m}$ are  ${\sf h(m)}= \left\{ -s_{1},...,-s_{k} \right\}$, and since $r\in J_{1}$, the particle positions are ${\sf p(m)} =\left\{ r_1,...,r_l  \right\} $. 
\begin{figure}[H]
\centering
\includegraphics[width=10cm]{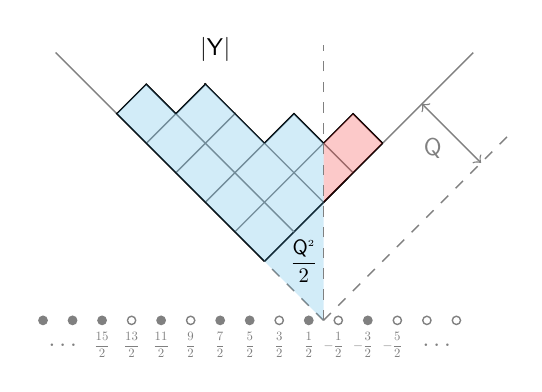}
\caption{Pictorial proof of \eqref{eq:srQY} \label{fig:maya}}
\end{figure}
To obtain the last line in \eqref{eq:Zn_block}, we use the following equalities:
\begin{gather}
\label{eq:srQY}
\sum_{l} r_{l} + \sum_{k} s_{k} = \frac{\sf Q(m)^2}{2} + \vert \vec{\sf Y} \vert , \qquad \#r-\#s= \sf Q(m), 
\end{gather}
which can be read off from Figure \ref{fig:maya} noting that the $r$'s and $s$'s are to the left and right sides of the axis respectively.
As an example, in the Figure \ref{fig:maya}, ${\sf p(m)}=\left\{ \frac{13}{2}, \frac{9}{2}, \frac{3}{2} \right\}$, ${\sf h(m)}=\left\{ \frac{-3}{2} \right\}$. $\vert Y \vert$ is the $\#$boxes in the Young diagram which in the present example is 12. The charge $\sf Q(m)=2$. $
\sum r$ is the blue area and $\sum s$ is the red area in the Figure \ref{fig:maya}. Equations \eqref{eq:TAU_IJ}, \eqref{eq:Zn_block} imply \eqref{eq:MinorExp}.
\end{proof}
Although the determinant tau function $\T^{(1,n)}$ in \eqref{eq:TAU1n} admits the expansion \eqref{eq:MinorExp}, the trinion partition functions \eqref{eq:YQ_Z} are known explicitly in terms of Nekrasov functions only in the case where the Lax matrix residues are of rank-1. We denote the determinant tau function for a generic Fuchsian system on the torus with rank-1 residues,i.e. residues of the form $\bs\mu_k=\left(\mu_1,0,\dots,0 \right)$, and monodromy exponents around $\cC_{in}^{[k]}$ given by $\vec{\sigma}_k$ by $\widetilde{\T}^{(1,n)}:=\det\left[\mathbb{I}-\widetilde{K}_{1,n} \right]$. Using the expressions for $Z^{\sfY _{k},\sfQ_{k} }_{\sfY_{k+1},\sfQ_{k+1} }(\mathscr{T }^{[k]})$ computed in \cite{Gavrylenko:2016zlf,Gavrylenko:2018fsm} for the rank-1 case, we obtain\footnote{Time-independent term \(\frac12\sum_{j=1}^{n}\Lambda_j\) comes from the ratios of the asymptotics of \(U(1)\) corrections to solutions of the 3-pt problems, given explicitly by $\left(\sin\pi(z-z_k)\right)^{\Lambda_k}$.}
\begin{equation}
    \begin{split}
    \widetilde{\T}^{(1,n)}&=\sum_{\vec{\sfQ}_1,\dots \vec{\sfQ}_n}\sum_{\vec{\sfY }_1,\dots \vec{\sfY }_n}e^{-2\pi i\sfQ\left(\rho-\frac{\tau}{2}+\tau\sum_{j=1}^{n}\Lambda_j+\frac12\sum_{j=1}^{n}\Lambda_j \right)}e^{2\pi i\tau\left[\frac{1}{2}(\vec\sfQ_1+\vec{\sigma}_1)^2-\frac{1}{2}\vec{\sigma}_1^2+\vert\vec{\sfY }_n\vert \right]} \\ & \times
   \prod_{j=1}^{n}e^{-2\pi i(z_{j}-z_{j-1})\left[\frac{1}{2}(\vec{\sigma}_{j}+\vec{\sfQ}_{j})^2-\frac{1}{2}\vec\sigma_j^2+\vert\vec{\sfY}_{j}\vert\right]} \\
   & \times \prod_{k=1}^ne^{2\pi i\vec{\sfQ}_i\cdot\vec{\nu}_i}\frac{Z_{pert}\left(\vec{\sigma}_{k}+\vec{\sfQ}_{k},\vec{\sigma}_{k+1}+\vec{\sfQ}_{k+1} \right)}{Z_{pert}\left(\vec{\sigma}_k,\vec{\sigma}_{k+1} \right)}Z_{inst}\left(\vec{\sigma}_{k}+\vec{\sfQ}_{k},\vec{\sigma}_{k+1}+\vec{\sfQ}_{k+1}\vert\vec{\sfY}_k,\vec{\sfY}_{k+1}\right)\\
   & =\det\left[\mathbb{I}-\widetilde{K}_{1,n} \right]
\end{split}  \label{eq:T1nNek}
\end{equation} 
where we set $z_{0}:=z_{n}$, the Fourier series parameters $\vec{\nu}_i$ were defined in \cite{Gavrylenko:2016zlf,Gavrylenko:2018fsm} in terms of the normalization of the three-point solution, and we have used introduced the functions
\begin{equation}\label{eq:Zpertk}
    Z_{pert}\left(\vec{\sigma},\vec{\mu}\right):= \prod_{\alpha,\beta=1}^N\frac{ G\left(1+ \sigma^{(\alpha)}- \mu^{(\beta)}\right)}{G\left(1+\sigma^{(\alpha)}-\sigma^{(\beta)}\right)},
\end{equation}
$G(x)$ being the Barnes' G-function, and
{\begin{gather}
    Z_{inst}\left(\vec{\sigma},\vec{\mu}\vert\vec{\sfY},\vec{\sf{W}}\right)
    :=\prod_{\alpha,\beta=1}^N\frac{Z_{\textrm{\sf bif}} \left( \sigma^{(\alpha)}- \mu^{(\beta)} \vert \sfY^{(\alpha)}, \sf{W}^{(\beta)} \right)}{Z_{\textrm{\sf bif}} \left( \sigma^{(\alpha)}- \sigma^{(\beta)}  \vert \sfY^{(\alpha)}, \sfY^{(\beta)} \right)}, \label{eq:Zinstdef}
\end{gather} }
with
\begin{equation}\label{eq:Zbif}
    Z_{\sf bif}\left(x|\sfY',\sfY \right):=\prod_{\Box\in\sfY}\left(x+1+a_{\sfY'}(\Box)+l_{\sfY}(\Box) \right)\prod_{\Box'\in\sfY'} \left(x-1-a_{\sfY}(\Box')-l_{\sfY'}(\Box') \right).
\end{equation}
In the above equations, $\vec{\sigma},\vec{\mu}\in\mathbb{C}^N$, $\vec{\sfY},\vec{\sf{W}}\in\mathbb{Y}^N$, where $a_{\sfY}(\Box)$ and $l_{\sfY}(\Box)$ denote respectively the arm and leg length of the box $\Box$ in the Young diagram $\sfY$, as in figure \ref{fig:armleg}. 
\begin{remark}
In \eqref{eq:T1nNek}, the expression
\begin{equation}
    Z^D:=e^{2\pi i\tau\vec{\sigma}_1^2}\det\left[\mathbb{I}-\widetilde{K}_{1,n} \right] \label{eq:ZDual}
\end{equation}
is the Nekrasov-Okounkov dual partition function \cite{Nekrasov:2003rj} of a circular quiver $\mathcal{N}=2$, $SU(N)$ gauge theory. By the AGT correspondence \cite{Alday:2009aq}, $Z^D$ is equal to a conformal block of $N$ free fermions on the torus, as in \cite{Bonelli:2019yjd}. Consequently, we expect $Z^D$ in \eqref{eq:ZDual} to satisfy appropriate bilinear equations, along the lines of \cite{Bershtein:2014yia,Bershtein:2016uov}.
\end{remark}
\begin{figure}
    \centering
\begin{tikzpicture}[scale=2.5]
\draw[ultra thick,darkspringgreen,pattern color=darkspringgreen,pattern=north east lines] (0.4,-0.4) rectangle (0.8,-0.8) (0.6,-0.3) node{$s$};
\draw[clip](0,0) -- ++(2,0) -- ++(0,-0.4)-- ++(-0.4,0)-- ++(0,-0.4)-- ++(-0.4,0)-- ++(0,-0.8)--(0,-1.6) --cycle;
\draw[scale=0.4](0,-1.6*3) grid (5,5);
\draw[red](1.2,-0.45) node[fill=white](){ $a(s)$} ;
\draw[ultra thick,red,<->] (0.8,-0.6)--(1.6,-0.6) ;
\draw[blue] (0.8,-1.2) node[fill=white](){ $l(s)$};
\draw[ultra thick,blue,<->] (0.55,-0.8) to (0.55,-1.6)  ;
\end{tikzpicture}
    \caption{Arm and leg length}
    \label{fig:armleg}
\end{figure}
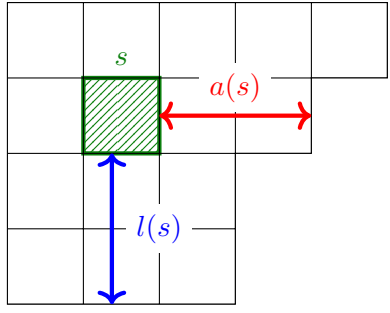
Our next goal is to relate the explicit expression \eqref{eq:T1nNek} for the tau function $\widetilde{\T}^{(1,n)}$ of a linear system on the torus with rank-1 residues, to the tau function $\T_H$ of an isomonodromic problem, where the residues are generic and satisfy the constrain \eqref{cons:CONSTRAINT}. With the observation that any $SL(2)$ matrix can be reduced to rank-1 by a scalar transformation, we will do this for the cases of the 2-particle nonautonomous Calogero-Moser system and of the elliptic Garnier system, which is the restriction to $N=2$, $\Lambda^{(j)}=0$, $j=1,\dots,n$ of the linear system \eqref{lax_general}.

\subsection{Reduction to rank-1 residues: the case of 2-particle nonautonomous Calogero-Moser system}\label{subsec:Minor_CM}

With the above considerations in mind, we formulate the tau function of the equation \eqref{eq:EllPainleve} in terms of the dual Nekrasov-Okounkov partition function \eqref{eq:T1nNek} for the $\mathcal{N}=2^*$ gauge theory\footnote{This is the $SU(2)$, $\mathcal{N}=2$ Super Yang-Mills theory with one massive adjoint hypermultiplet.}: the Lax matrix $L_{CM}$ in \eqref{eq:linear_systemCM} behaves as follows around the puncture $z=0$
\begin{equation}
        L_{CM}(z)=\left(\begin{array}{cc}
        P & mx(2Q,z) \\
        mx(-2Q,z) & -P
    \end{array}\right)= \frac{m\sigma_1}{z}+\mathcal{O}(1),
\end{equation}
so that it has rank-2 residue. To make it rank-1, we perform the scalar gauge transformation
\begin{align} \label{eq:11_Transformation}
    L_{CM}(z)\rightarrow \widetilde{L}_{CM} := L_{CM}(z)-\lambda_{CM}(z)^{-1}\partial_z \lambda_{CM}(z)\mathbb{I}_2, && \lambda_{CM}(z)=\theta_1(z)^m, \end{align}
after which the Lax matrix and its behavior around the puncture become
\begin{equation}\label{eq:gaugetransfLax_11}
        \widetilde{L}_{CM}(z)=\left(\begin{array}{cc}
        P-m\frac{\theta_1'(z)}{\theta_1(z)} & mx(2Q,z) \\
        mx(-2Q,z) & -P-m\frac{\theta_1'(z)}{\theta_1(z)}
    \end{array}\right)= \frac{m}{z}\left(\begin{array}{cc}
        -1 & 1 \\
        1 & -1
    \end{array} \right)+ \mathcal{O}(1).
\end{equation}
As a consequence of \eqref{eq:11_Transformation}, the monodromies will be dressed by additional scalar factors that we denote by $g_B(z),g_1$ for the B-cycle and for the monodromy around the puncture respectively. The absence of a factor $g_A$ for the A-cycle, as well as the expression for $g_B(z)$, are determined by the periodicity of theta functions:
\begin{gather}
 \lambda_{CM}(z+\tau)=\theta_1(z+\tau)^{m}=e^{-2\pi i\left(z+\frac{\tau+1}{2}\right)m}\lambda_{CM}(z) := g_{B}(z) \lambda_{CM}(z), \label{eq:LambdaCMB} \\
    \lambda_{CM}(z+1)=\theta_1(z+1)^{m}=e^{i\pi m}\lambda_{CM}(z):=g_A\lambda_{CM}(z).
\end{gather}
The $z$-dependence of the factor $g_B(z)$ leads to a nontrivial factor $g_1$ for the monodromy around $z=0$:
\begin{equation}\label{eq:LambdaCM1}
    g_1= e^{-2\pi im}.
\end{equation} 
The Hamiltonian tau function $\widetilde{\T}_{CM}$ associated to the gauge-transformed Lax matrix \eqref{eq:gaugetransfLax_11} is:
\begin{equation}\label{eq:tildeTCM}
    2\pi i\partial_\tau\log\widetilde{\T}_{CM} := \frac{1}{2} \oint_{A}dz \tr \widetilde{L}_{CM}^{2}.
\end{equation}    
\begin{proposition}
The tau function \eqref{eq:prop1} of the equation \eqref{eq:EllPainleve} is related to the tau function \eqref{eq:tildeTCM} of the rank-1 Lax matrix in \eqref{eq:11_Transformation} as
\begin{equation} \label{eq:prop_tildetau_tau}
\T_{CM}(\tau) =\widetilde{\T}_{CM}(\tau) \left(\eta(\tau)e^{\frac{i\pi\tau}{6}}\right)^{-2m^2},
\end{equation}
where $m$ is the monodromy exponent at the puncture and $\tau$ is the isomonodromic time.
\end{proposition}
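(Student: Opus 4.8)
The plan is to compare the two Hamiltonian tau functions directly through their defining relations $2\pi i\,\partial_\tau\log\T_{CM}=\tfrac12\oint_A dz\,\tr L_{CM}^2$ and $2\pi i\,\partial_\tau\log\widetilde{\T}_{CM}=\tfrac12\oint_A dz\,\tr\widetilde{L}_{CM}^2$, and to show that their logarithmic derivatives differ by exactly $2\pi i\,\partial_\tau\log\big[(\eta(\tau)e^{i\pi\tau/6})^{2m^2}\big]$. First I would substitute the scalar gauge transformation \eqref{eq:11_Transformation}, i.e. $\widetilde{L}_{CM}(z)=L_{CM}(z)-m\frac{\theta_1'(z)}{\theta_1(z)}\mathbb{I}_2$, into $\tr\widetilde{L}_{CM}^2$. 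Expanding the square gives
\[
\tr\widetilde{L}_{CM}^2=\tr L_{CM}^2-2m\frac{\theta_1'(z)}{\theta_1(z)}\tr L_{CM}+2m^2\left(\frac{\theta_1'(z)}{\theta_1(z)}\right)^2,
\]
and since $L_{CM}$ is traceless the cross term drops out. Hence the two generating Hamiltonians differ only by a scalar:
\[
2\pi i\,\partial_\tau\log\widetilde{\T}_{CM}-2\pi i\,\partial_\tau\log\T_{CM}=m^2\oint_A dz\left(\frac{\theta_1'(z)}{\theta_1(z)}\right)^2 .
\]

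The core of the argument is then to evaluate this A-cycle integral (taken along a horizontal segment at $\operatorname{Im}z\neq 0$, so the contour avoids the puncture). I would use the identity $\left(\frac{\theta_1'}{\theta_1}\right)^2=\frac{\theta_1''(z)}{\theta_1(z)}-\partial_z\!\left(\frac{\theta_1'(z)}{\theta_1(z)}\right)$; the total-derivative piece integrates to zero around the closed A-cycle because $\frac{\theta_1'(z)}{\theta_1(z)}$ has period $1$ (numerator and denominator both change sign under $z\mapsto z+1$), so that $\oint_A(\frac{\theta_1'}{\theta_1})^2 dz=\oint_A\frac{\theta_1''(z)}{\theta_1(z)}dz$. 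For the remaining integral I would invoke the heat equation \eqref{eq:heat_theta1}, $\theta_1''(z)=4\pi i\,\partial_\tau\theta_1(z)$, to write $\frac{\theta_1''(z)}{\theta_1(z)}=4\pi i\,\partial_\tau\log\theta_1(z)$, and then integrate the product (Fourier) expansion of $\log\theta_1$ over one horizontal period: every factor $\log(1-q^n e^{\pm2\pi iz})$ contributes zero since only nonconstant Fourier modes appear, while the prefactor $2q^{1/8}\sin\pi z$ and the $\prod_n(1-q^n)$ produce the surviving term. This yields
\[
\oint_A dz\left(\frac{\theta_1'(z)}{\theta_1(z)}\right)^2=\frac13\frac{\theta_1'''}{\theta_1'}+\frac{(2\pi i)^2}{6}=4\pi i\,\partial_\tau\log\eta(\tau)+\frac{(2\pi i)^2}{6},
\]
where the last equality uses \eqref{eq:eta_id}.

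I expect this A-cycle integral to be the main obstacle: handling the double pole of the integrand at the puncture $z=0$ and fixing the additive constant $\frac{(2\pi i)^2}{6}$ precisely requires the careful term-by-term integration described above, rather than a naive residue evaluation. Everything else is bookkeeping, because the tracelessness of $L_{CM}$ makes the reduction of the two Hamiltonians to a single scalar integral immediate.

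Finally, since $4\pi i\,\partial_\tau\log\eta+\frac{(2\pi i)^2}{6}=2\pi i\cdot 2\,\partial_\tau\log\big(\eta(\tau)e^{i\pi\tau/6}\big)$, the difference of Hamiltonians equals $m^2$ times this, i.e. $2\pi i\,\partial_\tau\log\big[(\eta e^{i\pi\tau/6})^{2m^2}\big]$. The relation thus becomes $\partial_\tau\log\widetilde{\T}_{CM}=\partial_\tau\log\!\big[\T_{CM}(\eta e^{i\pi\tau/6})^{2m^2}\big]$. Integrating in $\tau$ and fixing the $\tau$-independent integration constant to unity via the normalization convention for the tau functions gives $\T_{CM}(\tau)=\widetilde{\T}_{CM}(\tau)\big(\eta(\tau)e^{i\pi\tau/6}\big)^{-2m^2}$, as claimed.
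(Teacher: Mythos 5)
Your proof is correct, and its skeleton---scalar gauge transformation, tracelessness of $L_{CM}$ killing the cross term, reduction to the scalar integral $m^2\oint_A\left(\theta_1'/\theta_1\right)^2dz$, then integration in $\tau$ with the constant set to one---is the same as the paper's proof in Section \ref{subsec:Minor_CM}. Where you genuinely diverge is in the evaluation of that key integral. The paper (equations \eqref{eq:tildeCM_int}--\eqref{eq:I2epsilon}) integrates the \emph{cube} $\left(\theta_1'(z)/\theta_1(z)\right)^3$ around the $i\epsilon$-shifted boundary of the fundamental domain: the quasi-periodicity $\theta_1'/\theta_1\mapsto\theta_1'/\theta_1-2\pi i$ under $z\mapsto z+\tau$ reduces the boundary integral to a combination of the square and first-power integrals, while the same quantity is computed as $2\pi i$ times the residue at $z=\tau$, namely $\theta_1'''/\theta_1'+3(2\pi i)^2$; solving then gives $\tfrac13\,\theta_1'''/\theta_1'+\tfrac{(2\pi i)^2}{6}$. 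You instead use $\left(\theta_1'/\theta_1\right)^2=\theta_1''/\theta_1-\partial_z\left(\theta_1'/\theta_1\right)$, discard the exact part by $1$-periodicity, convert $\theta_1''/\theta_1$ into $4\pi i\,\partial_\tau\log\theta_1$ via the heat equation \eqref{eq:heat_theta1}, and integrate the triple-product expansion term by term, the constant $\tfrac{(2\pi i)^2}{6}$ emerging from the mismatch between the $q^{1/8}$ prefactor of $\theta_1$ and the $q^{1/24}$ of $\eta$. Both routes are valid and give identical answers; yours is arguably more direct and makes the combination $\eta(\tau)e^{i\pi\tau/6}$ appear transparently, but it imports the Jacobi triple product (never invoked in the paper) and requires the geometric-series expansion to converge on the contour (true for $0<\operatorname{Im}z<\operatorname{Im}\tau$, which your choice of horizontal contour guarantees), whereas the paper's residue/quasi-periodicity device is exactly the one it reuses for the cross terms $I_{kl}$ in the $n$-point generalization (Proposition \ref{prop:THTHtilde}, equation \eqref{eq:nRHS}), keeping the arguments uniform across the paper.
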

\begin{proof}
We begin with the equation \eqref{eq:tildeTCM}:
\begin{equation}
\begin{split}
    2\pi i\partial_\tau\log\widetilde{\T}_{CM} & = \frac{1}{2} \oint_{A}dz \tr \widetilde{L}_{CM}^{2} \mathop{=}^{\eqref{eq:11_Transformation}}\frac{1}{2} \oint_{A}dz \tr L_{CM}^{2}+ \oint_{A}  dz\left(\lambda_{CM}^{-1}(z)\partial_z \lambda_{CM}(z) \right)^2\\
    &=2\pi i\partial_\tau\log\T_{CM}+m^2\int_0^1dz\left(\frac{\theta_1'(z)}{\theta_1(z)} \right)^2.
\end{split}\label{eq:tildeCM}
\end{equation}
 \begin{figure}[H]
\centering
\begin{tikzpicture}[scale = 4]
\tikzmath{\eps = 0.1;}
\begin{scope}
\draw[ultra thick,blue,decoration={markings, mark=at position 0.5 with {\arrow{>}}}, postaction={decorate}] (0,0) to (1,0);
\draw[thick] (0.72,0.5) to (1.72,0.5);
\draw[thick] (0,0) to (0.72,0.5);
\draw[thick] (1,0) to (1.72,0.5);
\draw[ultra thick,red,decoration={markings, mark=at position 0.15 with {\arrow{>}}}, postaction={decorate}] (0,0+\eps) to (1,0+\eps) to (1.72,0.5+\eps) to (0.72,0.5+\eps) to (0,0.0+\eps);

\node at (0.5,-0.1) {{\color{blue}$A$}};
\node at (1.22,0.6+\eps) {{\color{red}$\cC_\varepsilon$}};

\node at (-0,-0.06) {$0$};
\node at (1,-0.06) {$1$};
\node at (0.74,0.43) {$\tau$};
\node at (-\eps,\eps+0.02) {$i\varepsilon$};
\end{scope}
\end{tikzpicture}
\caption{Contour of integration}
\label{fig:ModContour_11}
\end{figure}
To compute the last term in \eqref{eq:tildeCM}, consider the following integral over the deformed contour $\cC_{\varepsilon}$ as in Figure \ref{fig:ModContour_11}
\begin{equation} \label{eq:tildeCM_int}\begin{split}
   2\pi i~\res_{z=\tau}\left(\frac{\theta_1'(z)}{\theta_1(z)} \right)^3 & =  \oint_{\mathcal{C}_\varepsilon}\left(\frac{\theta_1'(z)}{\theta_1(z)} \right)^3dz \\  &=\left[\int_{i\epsilon}^{1+i\epsilon}+\int_{1+i\epsilon}^{1+i\epsilon+\tau}+\int_{1+i\epsilon+\tau}^{i\epsilon+\tau}+\int_{i\epsilon+\tau}^{i\epsilon} \right]\left(\frac{\theta_1'(z)}{\theta_1(z)} \right)^3dz  \\
    & =\int_{i\epsilon}^{1+i\epsilon}\left[\left(\frac{\theta_1'(z)}{\theta_1(z)} \right)^3-\left(\frac{\theta_1'(z)}{\theta_1(z)}-2\pi i \right)^3 \right]dz  \\
 & = 6\pi i\int_{i\epsilon}^{1+i\epsilon}dz \left(\frac{\theta_1'(z)}{\theta_1(z)} \right)^2-3(2\pi i)^2\int_{i\epsilon}^{1+i\epsilon}\frac{\theta_1'(z)}{\theta_1(z)}+(2\pi i)^3 \\
 & =6\pi i\int_{i\epsilon}^{1+i\epsilon}dz \left(\frac{\theta_1'(z)}{\theta_1(z)} \right)^2+\frac{5}{2}(2\pi i)^3.
\end{split}
\end{equation}
To obtain the last line we use that
\begin{equation}\label{eq:tildeCM_1}
\int_{i\epsilon}^{1+i\epsilon}\frac{\theta_1'(z)}{\theta_1(z)}dz=\left[\log\theta_1(z)\right]^{1+i\epsilon}_{i\epsilon}=-i\pi.
\end{equation}
The residue on the left hand side of \eqref{eq:tildeCM_int} is computed shifting $z$ by $\tau$ and expanding around $0$:
\begin{gather}
    \res_{z=\tau}\left(\frac{\theta_1'(z)}{\theta_1(z)} \right)^3 = \res_{z=0}\left(\frac{\theta_1'(z+\tau)}{\theta_1(z+\tau)} \right)^3,
\end{gather}
and
\begin{equation}
    \left(\frac{\theta_1'(z+\tau)}{\theta_1(z+\tau)}\right)^3=\left(\frac{\theta_1'(z)}{\theta_1(z)}-2\pi i\right)^3=\frac{1}{z^3}-\frac{6\pi i}{z^2}+\frac{1}{z}\left(\frac{\theta_1'''}{\theta_1'}+3(2\pi i)^2\right)+\mathcal{O}(1).
\end{equation}
Therefore,
\begin{equation} \label{eq:tildeCM_res}
\res_{z=\tau}~\left(\frac{\theta_1'(z)}{\theta_1(z)} \right)^3=\frac{\theta_1'''}{\theta_1}+3(2\pi i)^2.
\end{equation}
Substituting \eqref{eq:tildeCM_res} in \eqref{eq:tildeCM_int}, and taking the limit $\epsilon\rightarrow0$ we get
\begin{equation}\begin{split}
   \frac{2\pi i }{m^2}\partial_{\tau} \log\left(  \frac{\widetilde{\T}_{CM}}{\T_{CM}}\right) &
  \mathop{=}^{\eqref{eq:tildeCM}}\lim_{\epsilon\rightarrow 0} \int_{i\epsilon}^{1+i\epsilon}dz \left(\frac{\theta_1'(z)}{\theta_1(z)} \right)^2\\
   &=\frac{1}{3}\frac{\theta_1'''}{\theta_1'}+\frac{1}{6}(2\pi i)^2\\ & \mathop{=}^{\eqref{eq:eta_id}}2\pi i\partial_\tau\log\eta(\tau)^2e^{i\pi\tau/3}.\label{eq:I2epsilon}
\end{split}\end{equation}
Therefore,
\begin{equation} \label{eq:tildetau_tau}
\eta(\tau)^{-2m^2}e^{-i\pi\tau m^2/3}\widetilde{\T}_{CM}=\T_{CM},
\end{equation}
having set the integration constant to $1$ without any loss of generality.
\end{proof}
\begin{theorem}\label{prop:Minor_CM}
The isomonodromic tau function $\T_{CM}$ admits the following combinatorial expansion:
\begin{equation}\begin{gathered}
\T_{CM}(\tau)  =\frac{\left(\eta(\tau)e^{-i\pi\tau/12} \right)^{2(1-m^2)}}{\theta_1\left(Q(\tau)+\rho-\frac{m(\tau+1)}{2}\right)\theta_1\left(Q(\tau)-\rho+\frac{m(\tau+1)}{2}\right)}e^{-2\pi i\left[\rho-\frac{\tau}{2}\left(m+\frac{1}{2}\right)- \frac{m}{2}\right]}\\
\times
\sum_{\vec{\sfQ}}\sum_{\vec{\sfY}\in\mathbb{Y}^2}e^{2\pi i\tau\left[\frac{1}{2}(\vec{\sfQ}+\vec{a})^2+|\vec{\sfY}|\right]}e^{2\pi i\left[\vec{\sfQ}\cdot\vec{\nu}-\sfQ\left(\rho-\frac{m(\tau+1)}{2}-\frac{\tau}{2}\right)\right]} \\
\times\frac{Z_{pert}\left(\vec{a}+\vec{\sfQ},\vec{a}+\vec{\sfQ}+m \right)}{Z_{pert}\left(\vec{a},\vec{a}+m\right)}Z_{inst}\left(\vec{a}+\vec{\sfQ},\vec{a}+\vec{\sfQ}+m\vert\vec{\sfY},\vec{\sfY} \right)\widetilde{\Upsilon}_{1,1} ,\label{eq:PropTCM}
\end{gathered}\end{equation}
where the functions $Z_{inst}$, $Z_{pert}$ are defined in \eqref{eq:Zinstdef}, \eqref{eq:Zpertk} respectively, $\vec{a}=\left(a,-a \right)$, with $a$ the local monodromy exponent around the A-cycle of the torus, $m$ is the monodromy exponent at the puncture $z=0$, $\rho$ is an arbitrary parameter, $Q$ is the solution of the equations of motion of the 2-particle nonautonomous Calogero-Moser system \eqref{eq:EllPainleve}, $\vec{\sfQ}$ is the vector of charges \eqref{eq:charge_maya}, $\sfQ$ is the total $U(1)$ charge \eqref{eq:totalchargeMaya}, and $\widetilde{\Upsilon}_{1,1}$ is an integration constant depending on monodromy data.
\end{theorem}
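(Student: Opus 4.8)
The strategy is to assemble the claimed expansion from three results already proved: the relation \eqref{eq:prop_tildetau_tau} between $\T_{CM}$ and the rank-$1$ Hamiltonian tau function $\widetilde{\T}_{CM}$, the Hamiltonian-to-determinant dictionary of Theorem \ref{thm:GL_Ham}, and the explicit Nekrasov expansion \eqref{eq:T1nNek} of the rank-$1$ determinant tau function $\det[\mathbb{I}-\widetilde{K}_{1,n}]$. First I would start from $\T_{CM}=\widetilde{\T}_{CM}\,\bigl(\eta(\tau)e^{i\pi\tau/6}\bigr)^{-2m^2}$ and rewrite $\widetilde{\T}_{CM}$ through Theorem \ref{thm:GL_Ham} specialized to $n=1$, $N=2$. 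Since $z_1=0$ the product $\prod_k e^{-i\pi z_k(\cdots)}$ in \eqref{eq:Thm2} is trivial and $\tr\bs\sigma_1^2=2a^2$, so the surviving prefactors multiplying $\det[\mathbb{I}-\widetilde{K}_{1,1}]=\widetilde{\T}^{(1,1)}$ are $e^{2\pi i\tau(a^2+\frac16)}$, $e^{-2\pi i\widetilde{\rho}}$ and $\eta^2\big/\prod_i\theta_1(Q_i-\widetilde{\rho})$.

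Next I would fix the monodromy data of the gauge-transformed system from \eqref{eq:gaugetransfLax_11}--\eqref{eq:LambdaCM1}: the A-cycle exponent is $\vec a=(a,-a)$, so $\vec\sigma_1=\vec a$, while the rank-$1$ residue in \eqref{eq:gaugetransfLax_11} carries a nontrivial $U(1)$ part that fixes $\Lambda_1$ and, through \eqref{eq:tilderho} with $z_1=0$, the shifted parameter $\widetilde{\rho}=\rho-\tfrac{m(\tau+1)}{2}$. Using $Q_1=Q$, $Q_2=-Q$ together with the oddness of $\theta_1$, the product $\prod_i\theta_1(Q_i-\widetilde{\rho})$ collapses (up to a sign absorbed into $\widetilde{\Upsilon}_{1,1}$) to $\theta_1\!\bigl(Q-\rho+\tfrac{m(\tau+1)}{2}\bigr)\theta_1\!\bigl(Q+\rho-\tfrac{m(\tau+1)}{2}\bigr)$, reproducing the denominator of \eqref{eq:PropTCM}.

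Then I would insert \eqref{eq:T1nNek} at $n=1$, $N=2$ for $\widetilde{\T}^{(1,1)}$, producing the sum over charged partitions $(\vec\sfQ,\vec\sfY)$ with summand built from $Z_{pert}$ and $Z_{inst}$; the second monodromy argument is $\vec a+\vec\sfQ+m$ because encircling the puncture shifts the bifundamental mass by the adjoint mass $m$ of the $\mathcal{N}=2^*$ theory, and the factor $e^{2\pi i\vec\sfQ\cdot\vec\nu}$ records the normalization of the three-point solution. It remains to gather the $\vec\sfQ,\vec\sfY$-independent contributions: the $\eta$ powers combine as $\eta^{2}\cdot\eta^{-2m^2}=\eta^{2(1-m^2)}$; the term $-\tfrac12\vec\sigma_1^2=-a^2$ pulled out of the exponent of \eqref{eq:T1nNek} merges with $e^{2\pi i\tau(a^2+\frac16)}$ and the gauge factor $e^{-i\pi\tau m^2/3}$ to build the $\tau$-exponential accompanying $\eta$ in \eqref{eq:PropTCM}; and the constant and charge-linear pieces $e^{-2\pi i\widetilde{\rho}}$ and $e^{-2\pi i\sfQ(\cdots)}$ reorganize into the two exponential prefactors displayed there.

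The main obstacle will be the disciplined bookkeeping of the $U(1)$ data introduced by the rank-$1$ reduction: one must track the scalar monodromy factors $g_B(z),g_1$ of \eqref{eq:LambdaCMB}--\eqref{eq:LambdaCM1}, the value and sign of $\Lambda_1$, the half-integer shifts $\tfrac\tau2,\tfrac12$ hidden in \eqref{eq:tilderho}, and the $e^{\pm i\pi\tau/12}$-type terms, and verify that every contribution independent of $\vec\sfQ,\vec\sfY$ collapses exactly to the prefactor $\bigl(\eta(\tau)e^{-i\pi\tau/12}\bigr)^{2(1-m^2)}$, the theta-function denominator, and the exponential $e^{-2\pi i[\rho-\frac\tau2(m+\frac12)-\frac m2]}$ of \eqref{eq:PropTCM}, with any residual purely monodromy-dependent scalar absorbed into $\widetilde{\Upsilon}_{1,1}$. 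Once this matching is checked, the remaining $\vec\sfQ,\vec\sfY$-dependent sum is manifestly that of \eqref{eq:PropTCM}.
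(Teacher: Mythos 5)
Your strategy is the paper's own: combine \eqref{eq:prop_tildetau_tau}, Theorem \ref{thm:GL_Ham} at $n=1$, $N=2$, and the Nekrasov expansion \eqref{eq:T1nNek}. However, there is a genuine error in the step where you ``fix the monodromy data'': you assert $\vec{\sigma}_1=\vec{a}=(a,-a)$, i.e.\ $\Lambda_0=0$. This is wrong. The scalar gauge transformation $\lambda_{CM}(z)=\theta_1(z)^m$ dresses the A-cycle monodromy as well: $\lambda_{CM}(z+1)=e^{i\pi m}\lambda_{CM}(z)$ (the factor $g_A$ in \eqref{eq:LambdaCMB} and below), so the transformed solution $\widetilde{\mathcal{Y}}=\mathcal{Y}\,\lambda_{CM}^{-1}$ has A-cycle monodromy $e^{-i\pi m}e^{2\pi i a\sigma_3}$, whence $\vec{\sigma}_1=\bigl(a-\tfrac{m}{2},\,-a-\tfrac{m}{2}\bigr)$ and $\Lambda_0=\tfrac{m}{2}$, as in \eqref{eq:sigma12} and \eqref{eq:Lambda01}. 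This value of $\Lambda_0$ is moreover forced by the identity $\Lambda_0=-\tfrac12\sum_{j\geq 1}\Lambda_j$ used in the proof of Theorem \ref{thm:GL_Ham}, so \eqref{eq:Thm2} cannot legitimately be invoked with your $\vec{\sigma}_1$.

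The mistake is not an absorbable normalization, because the $U(1)$ shift of $\vec{\sigma}_1$ cancels in $Z_{pert}$ and $Z_{inst}$ (they depend only on differences of exponents, so your arguments $\vec{a}+\vec{\sfQ}$ and $\vec{a}+\vec{\sfQ}+m$ are indeed correct) but survives in exactly the two places where your bookkeeping must close. First, $\tr\bs\sigma_1^2=2a^2+\tfrac{m^2}{2}$ rather than $2a^2$, so the prefactor $e^{i\pi\tau\tr(\bs\sigma_1^2+\mathbb{I}/6)}$ of \eqref{eq:Thm2} carries an extra $e^{i\pi\tau m^2/2}$; without it the factor $\bigl(\eta(\tau)e^{-i\pi\tau/12}\bigr)^{2(1-m^2)}$ of \eqref{eq:PropTCM} is not reproduced, and since the discrepancy is $\tau$-dependent it cannot be pushed into the constant $\widetilde{\Upsilon}_{1,1}$. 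Second, in \eqref{eq:T1nNek} one has $\tfrac12(\vec{\sfQ}+\vec{\sigma}_1)^2-\tfrac12\vec{\sigma}_1^2=\tfrac12(\vec{\sfQ}+\vec{a})^2-a^2-\tfrac{m}{2}\sfQ$: the charge-linear piece $e^{-i\pi\tau m\sfQ}$, combined with the factor $e^{-2\pi i\sfQ(\rho-\frac{\tau}{2}-\tau m-\frac{m}{2})}$ coming from $\sum_{j\geq 1}\Lambda_j=-m$, is precisely what produces $e^{-2\pi i\sfQ(\rho-\frac{m(\tau+1)}{2}-\frac{\tau}{2})}$ in \eqref{eq:PropTCM}. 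With $\vec{\sigma}_1=\vec{a}$ this piece is absent and your charge coupling comes out as $e^{-2\pi i\sfQ(\rho-\tau m-\frac{m}{2}-\frac{\tau}{2})}$, off by $e^{i\pi\tau m\sfQ}$ in each charge sector, i.e.\ a genuinely different function of $\tau$. Once you correct $\vec{\sigma}_1$ and $\Lambda_0$, the remaining steps of your outline go through exactly as in the paper.
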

\begin{proof}
The linear system \eqref{eq:gaugetransfLax_11} is the specialisation of \eqref{eq:nptL} to the case $n=1$ (with the puncture at 0), $N=2$. The corresponding monodromy exponents $\vec{\sigma}_{1}$, $\vec{\sigma}_{2}$, and the $U(1)$ shifts $\Lambda_{0}$, $\Lambda_{1}$ in \eqref{eq:sigma_mu_def}, and the parameter $\widetilde{\rho}$ in \eqref{eq:tilderho}, for the present case are
\begin{align}
\vec{\sigma}_1=\left(a-\frac{m}{2},-a-\frac{m}{2} \right), && \vec{\sigma}_2=\left(a+\frac{m}{2},-a+\frac{m}{2} \right), \label{eq:sigma12} \end{align}\begin{align}  \Lambda_{0}= \frac{m}{2}, && \Lambda_1=-m, && \widetilde{\rho} = \rho -\frac{m (\tau+1)}{2}. \label{eq:Lambda01}
\end{align}
Theorem \ref{thm:GL_Ham} then implies that the tau function $\widetilde{\T}_{CM}$ in \eqref{eq:prop_tildetau_tau} can be written as a Fredholm determinant of an operator we call $\widetilde{K}_{1,1}$ whose minor expansion has an interpretation through Nekrasov functions as in \eqref{eq:T1nNek}, of the tau-function in \eqref{eq:Thm2}. Therefore,
\begin{equation}
\begin{split}
    \T_{CM} & \mathop{=}^{\eqref{eq:prop_tildetau_tau}}\eta(\tau)^{-2m^2}e^{-i\pi\tau m^2/3}\widetilde{\T}_{CM} \\ 
    &\mathop{=}^{\eqref{eq:Thm2}} \eta(\tau)^{-2m^2}e^{-i\pi\tau m^2/3}  e^{i\pi\tau\tr\left(\bs \sigma_1^2+\frac{\mathbb{I}}{6} \right)} e^{-2 \pi i \widetilde{\rho}}\frac{\eta(\tau)^2}{\theta_1(Q+\widetilde{\rho})\theta_1(Q-\widetilde{\rho})} \det\left[\mathbb{I} - \widetilde{K}_{1,1} \right]\widetilde{\Upsilon}_{1,1},\\
   &\mathop{=}^{\eqref{eq:sigma12},\eqref{eq:Lambda01}} \frac{\left(\eta(\tau)e^{-\frac{i\pi\tau}{12}}\right)^{2-2m^2}e^{2\pi i\tau a^2}}{\theta_1(Q+\rho-\frac{m(\tau+1)}{2})\theta_1(Q-\rho+\frac{m(\tau+1)}{2})}e^{-2\pi i\left[\rho-\frac{\tau}{2}\left(m+\frac{1}{2}\right)- \frac{m}{2}\right]} \det\left[\mathbb{I}-\widetilde{K}_{1,1} \right]\widetilde{\Upsilon}_{1,1} \\
    & \mathop{=}^{\eqref{eq:T1nNek}} \frac{\left(\eta(\tau)e^{-\frac{i\pi\tau}{12}}\right)^{2-2m^2} e^{-2\pi i\left[\rho-\frac{\tau}{2}\left(m+\frac{1}{2}\right)- \frac{m}{2}\right]}}{\theta_1\left(Q+\rho-\frac{m(\tau+1)}{2}\right)\theta_1\left(Q-\rho+\frac{m(\tau+1)}{2}\right)}
\sum_{\vec{\sfQ}}\sum_{\vec{\sfY}\in\mathbb{Y}^2}e^{2\pi i\tau\left[\frac{1}{2}(\vec{\sfQ}+\vec{a})^2+|\vec{\sfY}|\right]}\\
&\times e^{2\pi i\left[\vec{\sfQ}\cdot\vec{\nu}-\sfQ\left(\rho-\frac{m(\tau+1)}{2}-\frac{\tau}{2}\right)\right]} 
\frac{Z_{pert}\left(\vec{a}+\vec{\sf Q},\vec{a}+\vec{\sf Q}+m \right)}{Z_{pert}\left(\vec{a},\vec{a}+m\right)}Z_{inst}\left(\vec{a}+\vec{\sf Q},\vec{a}+\vec{\sf Q}+m\vert\vec{\sf Y},\vec{\sf Y} \right)\widetilde{\Upsilon}_{1,1}. \\
\end{split}
\end{equation}
\end{proof}
\begin{remark}
Equation \eqref{eq:PropTCM} coincides with equations (3.48) (4.10) in \cite{Bonelli:2019boe}, obtained by CFT methods. To compare the two expressions, one has to set $\sigma=0$ and send $\rho+\frac{1}{2}+\frac{\tau}{2}\rightarrow-\rho+\frac{m(\tau+1)}{2}$ in the expressions of \cite{Bonelli:2019boe}. 
\end{remark}
\subsection{Elliptic Garnier system and Nekrasov functions}\label{subsec:Minor_GL}
For the $N\times N$ case, it is in general only possible, with a scalar gauge transformation, to reduce the rank of the residues to $N-1$, which means that the minors can be written in terms of Nekrasov functions only in the case of semi-degenerate residues, as in \cite{Gavrylenko:2018ckn,Gavrylenko:2018fsm}.\footnote{In the context of class S theories \cite{Gaiotto:2009we,Gaiotto:2009hg} these are called \textit{minimal punctures}. The $A_{N-1}$ six-dimensional theory compactified on a torus with $n$ minimal punctures  gives rise to a four-dimensional circular quiver gauge theory.}
Therefore, we restrict the Lax matrix in \eqref{eq:nptL} to $N=2$, which can always be reduced to rank-1 by the scalar gauge transformation
\begin{equation}\label{eq:transf_n}
    \lambda(z)=\prod_{k=1}^{n}\theta_1 (z-z_k)^{m_k}
\end{equation}
where $m_k$ is the local monodromy exponent at the puncture $z_k$.
The new Lax matrix is
\begin{equation}\label{eq:rk1npt}
    \widetilde{L} := L - \lambda(z)^{-1} \partial_{z} \lambda(z) \mathbb{I}_{2}=L(z)-\sum_{k=1}^n m_k\frac{\theta_1'(z-z_k)}{\theta_1(z-z_k)}.
\end{equation}
The $U(1)$ factors around the punctures are given by
\begin{equation}
    g_k=e^{-2\pi im_k},
\end{equation}
while $g_A,g_B(z)$, are induced as before by the periodicity of theta functions:
\begin{equation}
\begin{split}
\lambda(z+\tau)&=\prod_{k=1}^n\theta_1(z-z_k+\tau)^{m_k}\\
    &=e^{-2\pi i\sum_{k=1}^n\left(z-z_k+\frac{\tau+1}{2}\right)m_k}\lambda(z):=g_B(z)\lambda(z) ,
    \end{split}
\end{equation}
\begin{align}
    \lambda(z+1)=\prod_{k=1}^n\theta_1(z-z_k+1)^{m_k}=e^{i\pi m}\lambda(z) :=g_A\lambda(z),
\end{align}
where we defined 
\begin{equation}
    m:=\sum_{j=1}^nm_j.\label{eqq:Defm}
\end{equation}
Again, we want to find the relation between the isomonodromic tau function $\T_H$ of the $SL(2)$ elliptic Garnier system \cite{Korotkin:1999xx,Takasaki:2001fr,Levin:2013kca}, and the $GL(2)$ tau function $\widetilde{\T}_H$ for the system with rank-1 residues obtained from the scalar gauge transformation \eqref{eq:transf_n}, defined by
\begin{align}
    2\pi i\partial_\tau\log\widetilde{\T}_H=\oint_Adz\frac{1}{2}\tr\widetilde{L}(z)^2, && \partial_{z_k}\log\widetilde{\T}_H=\res_{z_k}\frac{1}{2}\tr\widetilde{L}(z)^2. \label{eq:tildeTH}
\end{align}
\begin{proposition}\label{prop:THTHtilde}
The tau function $\widetilde{\T}_H$ \eqref{eq:tildeTH} of the rank-1 system is related to the tau function $\T_H$ \eqref{eq:IsomHam} of the Garnier system (whose Lax matrix is \eqref{eq:nptL} restricted to $N=2$, $\Lambda_i=0$ for $i=1,\dots,n$) as
\begin{gather}\label{eq:Rank1Garnier}
    \T_{H}(\tau) =  \widetilde{\T}_{H}(\tau) \prod_{k=1}^n\left(\eta(\tau) e^{\frac{i\pi\tau}{12}} \right)^{-2m_k^2}\prod_{l\ne k}\left(\frac{\theta_1(z_k-z_l)}{\eta(\tau)e^{-\frac{i \pi\tau}{3}}}\right)^{-m_km_l},
\end{gather}
where $m_k$ is the local monodromy exponent at the puncture $z_k$, $k=1,\dots, n$, and $\tau$ is the modular parameter of the torus.
\end{proposition}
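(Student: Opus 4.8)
The plan is to follow the same route as the analogous Calogero--Moser statement \eqref{eq:prop_tildetau_tau}: compare the two Hamiltonian tau functions through their defining logarithmic derivatives and isolate the contribution of the scalar factor $\lambda(z)$ in \eqref{eq:transf_n}. Since the $SL(2)$ Lax matrix $L$ is traceless and $\tr\mathbb{I}_2=2$, substituting \eqref{eq:rk1npt} gives $\tfrac12\tr\widetilde{L}(z)^2=\tfrac12\tr L(z)^2+\bigl(\partial_z\log\lambda(z)\bigr)^2$, so from \eqref{eq:tildeTH} and \eqref{eq:IsomHam} one obtains the two relations
\[
2\pi i\,\partial_\tau\log\frac{\widetilde{\T}_{H}}{\T_{H}}=\oint_A\bigl(\partial_z\log\lambda\bigr)^2\,dz,\qquad \partial_{z_k}\log\frac{\widetilde{\T}_{H}}{\T_{H}}=\res_{z=z_k}\bigl(\partial_z\log\lambda\bigr)^2.
\]
Writing $\partial_z\log\lambda(z)=\sum_{k}m_k\,\theta_1'(z-z_k)/\theta_1(z-z_k)$ and squaring, the computation splits into ``diagonal'' terms proportional to $m_k^2$ and ``off-diagonal'' terms proportional to $m_km_l$ with $k\neq l$.

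The residue relation is the easy half. Near $z=z_k$ only the $k$-th summand is singular, with $\theta_1'(z-z_k)/\theta_1(z-z_k)=1/(z-z_k)+O(z-z_k)$, so $\res_{z=z_k}(\partial_z\log\lambda)^2=2m_k\sum_{l\neq k}m_l\,\theta_1'(z_k-z_l)/\theta_1(z_k-z_l)$. Using that $\theta_1'/\theta_1$ is odd, this matches exactly $\partial_{z_k}\log\prod_{l\neq k}\theta_1(z_k-z_l)^{-m_km_l}$, which fixes the $z_k$-dependence of the correction factor as the product over pairs in \eqref{eq:Rank1Garnier}.

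The $\tau$-derivative requires evaluating $\oint_A(\partial_z\log\lambda)^2\,dz$. The diagonal part $m_k^2\oint_A\bigl(\theta_1'(z-z_k)/\theta_1(z-z_k)\bigr)^2dz$ is computed exactly as in \eqref{eq:I2epsilon}, reproducing the $\eta(\tau)$ and $e^{i\pi\tau}$ factors already seen in the Calogero--Moser case. The main work is the off-diagonal integral $\oint_A \tfrac{\theta_1'(z-z_k)}{\theta_1(z-z_k)}\tfrac{\theta_1'(z-z_l)}{\theta_1(z-z_l)}\,dz$, which I would handle with the polarization identity $ab=\tfrac12\bigl(a^2+b^2-(a-b)^2\bigr)$. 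The key observation is that $g(z):=\theta_1'(z-z_k)/\theta_1(z-z_k)-\theta_1'(z-z_l)/\theta_1(z-z_l)=\partial_z\log\bigl(\theta_1(z-z_k)/\theta_1(z-z_l)\bigr)$ is doubly periodic, hence $g^2$ is elliptic and admits the decomposition $g^2=\wp(z-z_k)+\wp(z-z_l)-2\tfrac{\theta_1'(z_k-z_l)}{\theta_1(z_k-z_l)}\,g(z)+c$, with the constant $c$ expressed through $\theta_1''(z_k-z_l)/\theta_1(z_k-z_l)$ and $\theta_1'''/\theta_1'$ (fixed by matching the regular part at $z_k$, where $\wp$ is the shifted Weierstrass function of the paper). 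Integrating term by term over the A-cycle using $\oint_A\wp(z-z_k)\,dz=-\tfrac16\theta_1'''/\theta_1'$ and $\oint_A g\,dz=0$ then gives the off-diagonal integral in closed form.

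Finally I would assemble the diagonal and off-diagonal pieces, rewrite the resulting combination of $\theta_1'''/\theta_1'$, $\theta_1''(z_k-z_l)/\theta_1(z_k-z_l)$ and constants as a total $\tau$- (resp.\ $z_k$-) logarithmic derivative using the heat equation \eqref{eq:heat_theta1} and the eta identity \eqref{eq:eta_id}, and integrate. Since the $\tau$- and $z_k$-derivatives both descend from the single scalar quantity $\oint_A(\partial_z\log\lambda)^2$, their integrals are automatically compatible and combine into the product \eqref{eq:Rank1Garnier}, the integration constant being set to $1$ as in the one-puncture case. The main obstacle is the off-diagonal A-cycle integral together with the careful bookkeeping of all $\tau$-dependent constants (the $\theta_1'''/\theta_1'$ and the $\pi^2$ terms): it is precisely these constants that determine the exact powers of $\eta(\tau)$ and of the exponentials $e^{i\pi\tau(\cdots)}$ in the final prefactor, and that must reduce correctly to \eqref{eq:prop_tildetau_tau} when $n=1$.
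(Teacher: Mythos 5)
Your proposal is correct, and its skeleton is the same as the paper's: both reduce the claim to
$2\pi i\,\partial_\tau\log\bigl(\widetilde{\T}_H/\T_H\bigr)=\oint_A(\partial_z\log\lambda)^2dz$ and
$\partial_{z_k}\log\bigl(\widetilde{\T}_H/\T_H\bigr)=\res_{z=z_k}(\partial_z\log\lambda)^2$ (using $\tr L=0$), treat the diagonal $m_k^2$ terms exactly as in the Calogero--Moser case, and integrate using the heat equation \eqref{eq:heat_theta1} and the eta identity \eqref{eq:eta_id}. Where you genuinely depart from the paper is the key off-diagonal integral $I_{kl}=\oint_A \frac{\theta_1'(z-z_k)}{\theta_1(z-z_k)}\frac{\theta_1'(z-z_l)}{\theta_1(z-z_l)}dz$. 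The paper integrates the degree-three product $\frac{\theta_1'(z-z_k)}{\theta_1(z-z_k)}\bigl(\frac{\theta_1'(z-z_l)}{\theta_1(z-z_l)}\bigr)^2$ over the period-parallelogram boundary, computing it once by residues at $z_k,z_l$ and once via the quasi-periodicity shift $\theta_1'/\theta_1\mapsto\theta_1'/\theta_1-2\pi i$ (equations \eqref{eq:nRHS}, \eqref{eq:nLHS}). You instead polarize and exploit that $g=\partial_z\log\bigl(\theta_1(z-z_k)/\theta_1(z-z_l)\bigr)$ is elliptic, expanding $g^2$ in $\wp(z-z_k)$, $\wp(z-z_l)$, $g$ and a constant. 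I verified your decomposition: matching Laurent data at $z_k$ (with the paper's shifted $\wp$) gives $c=\frac{4}{3}\frac{\theta_1'''}{\theta_1'}-\frac{\theta_1''(z_k-z_l)}{\theta_1(z_k-z_l)}$, and termwise integration then yields $I_{kl}=\frac12\frac{\theta_1''(z_k-z_l)}{\theta_1(z_k-z_l)}-\frac16\frac{\theta_1'''}{\theta_1'}+\frac{(2\pi i)^2}{6}=2\pi i\,\partial_\tau\log\bigl(\theta_1(z_k-z_l)/(\eta(\tau)e^{-i\pi\tau/3})\bigr)$, in exact agreement with the paper. Your route is more structural (standard pole decomposition of elliptic functions, no quasi-periodicity bookkeeping), at the price of computing $c$ by local expansion and of justifying $\oint_A g\,dz=0$ --- do state that this holds because $z_k$ and $z_l$ lie on the same side of the A-contour, so the two $\mp i\pi$ contributions cancel.

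Two points to tighten. First, in the residue step your residue $2m_k\sum_{l\ne k}m_l\frac{\theta_1'(z_k-z_l)}{\theta_1(z_k-z_l)}$ is right, but it does not ``match exactly'' $\partial_{z_k}\log\prod_{l\ne k}\theta_1(z_k-z_l)^{-m_km_l}$: that expression is $-\frac12$ times the residue. The residue equals $\partial_{z_k}\log$ of the \emph{double} product $\prod_{k'}\prod_{l\ne k'}\theta_1(z_{k'}-z_l)^{+m_{k'}m_l}$ (each unordered pair occurs twice, whence the factor $2$ by oddness), i.e.\ minus the log-derivative of the correction factor in \eqref{eq:Rank1Garnier}; your final identification is right, but the sign and double counting should be made explicit. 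Second, your $n=1$ consistency check is worth doing carefully: both your method and the paper's own proof produce the diagonal factor $\bigl(\eta(\tau)e^{i\pi\tau/6}\bigr)^{-2m_k^2}$, consistent with \eqref{eq:prop_tildetau_tau}, whereas the exponent $e^{i\pi\tau/12}$ printed in \eqref{eq:Rank1Garnier} appears to be a typo in the statement.
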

\begin{proof}
Under the transformation \eqref{eq:transf_n}, the $z_k$-derivative of $\widetilde{\T}_H$ is
\begin{gather}
            \partial_{z_k}\log\widetilde{\T}_{H} = \res_{z_k}\frac{1}{2}\tr \widetilde{L}^{2} = \res_{z_k}\frac{1}{2}\tr L^{2}+ \res_{z_k}\left(\lambda^{-1}(z)\partial_z \lambda(z) \right)^2\nonumber\\
    =\partial_{z_k}\log\T_{H}+\sum_{j=1}^n\res_{z_k}\left[m_j^2\left(\frac{\theta_1'(z-z_j)}{\theta_1(z-z_j)} \right)^2+\sum_{l\ne k}m_jm_l\frac{\theta_1'(z-z_j)}{{\theta_1(z-z_j)}}\frac{\theta_1'(z-z_l)}{{\theta_1(z-z_l)}} \right] \nonumber \\
    = \partial_{z_k}\log\T_{H}+ \partial_{z_k}\log\left[\prod_{l\ne k}\left(\theta_1(z_k-z_l) \right)^{2m_km_l}\right] . \label{eq:tau_der_zk}
\end{gather}
In the last line we use that
\begin{equation}
\sum_{j=1}^{n}\sum_{l\ne k}m_jm_l\res_{z_k}\left(\frac{\theta_1'(z-z_k)}{{\theta_1(z-z_k)}}\frac{\theta_1'(z-z_l)}{{\theta_1(z-z_l)}}\right)=\partial_{z_k}\log\left(\prod_{l\ne  k}\theta_1(z_k-z_l)^{2 m_km_l} \right),
\end{equation}
and
\begin{equation}
    \res_{z_k}\left(\frac{\theta_1'(z-z_k)}{\theta_1(z-z_k)} \right)^2 =0.
\end{equation}
We now turn to the computation of the $\tau$-derivative:
\begin{gather}
            2\pi i\partial_\tau\log\widetilde{\T}_{H}  = \oint_{A}dz \frac{1}{2}\tr\widetilde{L}^{2} = \oint_{A}dz\frac{1}{2}\tr L^{2}+ \oint_{A}  dz\left(\lambda^{-1}(z)\partial_z \lambda(z) \right)^2\nonumber\\
    =2\pi i\partial_\tau\log\T_{H}+\sum_{k=1}^n\int_0^1dz\left[m_k^2\left(\frac{\theta_1'(z-z_k)}{\theta_1(z-z_k)} \right)^2+\sum_{l\ne k}m_km_l\frac{\theta_1'(z-z_k)}{{\theta_1(z-z_k)}}\frac{\theta_1'(z-z_l)}{{\theta_1(z-z_l)}} \right] \label{eq:tau_der_minor}
\end{gather}
Let us consider the A-cycle integral of the first term in equation \eqref{eq:tau_der_minor}.
 \begin{figure}[H]
\centering
\begin{tikzpicture}[scale = 4]
\tikzmath{\eps = 0.1;}
\begin{scope}
\draw[thick,decoration={markings, mark=at position 0.5 with {\arrow{>}}}, postaction={decorate}] (0,0) to (1,0);
\draw[thick,decoration={markings, mark=at position 0.5 with {\arrow{>}}}, postaction={decorate}] (1,0) to (1.72,0.5);
\draw[thick,decoration={markings, mark=at position 0.5 with {\arrow{>}}}, postaction={decorate}] (1.72,0.5) to (0.72,0.5);
\draw[thick,decoration={markings, mark=at position 0.5 with {\arrow{>}}}, postaction={decorate}] (0.72,0.5) to (0,0.0);
\node at (1.22,0.6) {$\cC$};

\node at (-0,-0.06) {$0$};
\node at (1,-0.06) {$1$};
\node at (0.74,0.43) {$\tau$};
\end{scope}
\end{tikzpicture}
\caption{Contour of integration}
\label{fig:ModContour_1n}
\end{figure}
The computation goes along the same lines of the $n=1$ case \eqref{eq:tildeCM_int}, but in the present case we do not shift the contour by $i\epsilon$, since the singularity $z_l$ is now in the interior of the contour $\cC$ in figure \ref{fig:ModContour_1n}:
\begin{equation}
    \begin{split}
        2\pi i~\res_{z=z_l}\left(\frac{\theta_1'(z-z_l)}{\theta_1(z-z_l)} \right)^3 & =\left[\int_0^1+\int_1^{\tau+1}+\int_{\tau+1}^\tau+\int_\tau^0\right]\left(\frac{\theta_1'(z-z_l)}{\theta_1(z-z_l)} \right)^3dz \\
& =6\pi i\int_0^1\left(\frac{\theta_1'(z-z_l)}{\theta_1(z-z_l)} \right)^2dz-3(2\pi i)^2\int_0^1\frac{\theta_1'(z-z_l)}{\theta_1(z-z_l)}dz+(2\pi i)^3 \\
& =6\pi i\int_0^1\left(\frac{\theta_1'(z-z_l)}{\theta_1(z-z_l)} \right)^2dz-\frac{1}{2}(2\pi i)^3, 
    \end{split}\label{eq:theta2RHS}
\end{equation}
while
\begin{gather}
2 \pi i~\res_{z=z_l}\left(\frac{\theta_1'(z-z_l)}{\theta_1(z-z_l)} \right)^3=2\pi i\frac{\theta_1'''}{\theta_1'}\mathop{=}^{\eqref{eq:eta_id}}3(2\pi i)^2\partial_\tau\log\eta(\tau)^2. \label{eq:theta3LHS}
\end{gather}
Equating \eqref{eq:theta2RHS}, \eqref{eq:theta3LHS}, we see that the first term of \eqref{eq:tau_der_minor} simply consists of $n$ copies of the 1-point computation \eqref{eq:I2epsilon}: 
\begin{equation}
\oint_Adz\sum_{k=1}^nm_k^2\left(\frac{\theta_1'(z-z_k)}{\theta_1(z-z_k)} \right)^2=2\pi i\partial_\tau\log\left[\prod_{k=1}^n\left(\eta(\tau)e^{\frac{i\pi\tau}{6}} \right)^{2m_k^2}\right]. \label{eq:theta2_int}
\end{equation}
We then turn to the computation of the second term of \eqref{eq:tau_der_minor}:
\begin{equation}
I_{kl}:=\oint_Adz\frac{\theta_1'(z-z_k)}{{\theta_1(z-z_k)}}\frac{\theta_1'(z-z_l)}{{\theta_1(z-z_l)}}.
\end{equation}
To compute $I_{kl}$, we consider the following integral over the deformed contour $\cC$ in figure \ref{fig:ModContour_1n}:
\begin{gather}
\left[\int_0^1+\int_1^{1+\tau}+\int_{1+\tau}^{\tau}+\int_\tau^0 \right] dz \frac{\theta_1'(z-z_k)}{{\theta_1(z-z_k)}}\left(\frac{\theta_1'(z-z_l)}{{\theta_1(z-z_l)}}\right)^2 \nonumber \\
=  2\pi i\left( ~\res_{z=z_k}+~\res_{z=z_l} \right)\frac{\theta_1'(z-z_k)}{{\theta_1(z-z_k)}}\left(\frac{\theta_1'(z-z_l)}{{\theta_1(z-z_l)}}\right)^2. \nonumber\\
= 2\pi i\left[\left(\frac{\theta_1'(z_k-z_l)}{\theta_1(z_k-z_l)} \right)^2+\frac{d}{dz_k}\left(\frac{\theta_1'(z_k-z_l)}{\theta_1(z_k-z_l)} \right) \right]\mathop{=}^{\eqref{eq:heat_theta1}}2\pi i\frac{\theta_1''(z_k-z_l)}{\theta_1(z_k-z_l)}\nonumber
\\=(2\pi i)^2\partial_\tau\log\theta_1(z_k-z_l)^2.
\label{eq:nRHS}
\end{gather}
The left-hand side of \eqref{eq:nRHS} is
\begin{gather}
\left[\int_0^1+\int_1^{1+\tau}+\int_{1+\tau}^{\tau}+\int_\tau^0 \right]dz\frac{\theta_1'(z-z_k)}{{\theta_1(z-z_k)}}\left(\frac{\theta_1'(z-z_l)}{{\theta_1(z-z_l)}}\right)^2\nonumber \\
=\int_0^1 dz\frac{\theta_1'(z-z_k)}{{\theta_1(z-z_k)}}\left(\frac{\theta_1'(z-z_l)}{{\theta_1(z-z_l)}}\right)^2-\int_0^1 dz \left(\frac{\theta_1'(z-z_k)}{\theta_1(z-z_k)}-2\pi i \right)\left(\frac{\theta_1'(z-z_l)}{\theta_1(z-z_l)}-2\pi i \right)^2 \nonumber \\
= 4\pi i I_{kl}-\frac{1}{2}(2\pi i)^3+2\pi i\int_0^1dz\left(\frac{\theta_1'(z-z_l)}{\theta_1(z-z_l)} \right)^2 \nonumber \\
\mathop{=}^{\eqref{eq:theta2_int}} 4\pi iI_{kl}-\frac{1}{2}(2\pi i)^3+(2\pi i)^2\partial_\tau\log\left(\eta(\tau)^2e^{\frac{i\pi\tau}{3}} \right) . \label{eq:nLHS}
\end{gather}
Equating \eqref{eq:nLHS} and \eqref{eq:nRHS}, we find
\begin{gather}
I_{kl}=2\pi i\partial_\tau\log\left(\frac{\theta_1(z_k-z_l)}{\eta(\tau)e^{-\frac{i\pi\tau}{3}}} \right).
\end{gather}
Therefore, the second term of \eqref{eq:tau_der_minor} reads
\begin{gather}
    \sum_{k=1}^{n} \sum_{l\ne k} m_{k} m_{l} \int_{0}^{1} dz \frac{\theta_{1}'(z-z_{k}) \theta_{1}'(z-z_{l})}{\theta_{1}(z-z_{k}) \theta_{1}(z-z_{l})} = 2\pi i \sum_{k=1}^{n} \sum_{l\ne k} m_{k} m_{l} \partial_{\tau} \log\left(\frac{\theta_1(z_k-z_l)}{\eta(\tau)e^{-\frac{i\pi\tau}{3}}} \right) \nonumber \\
    =2\pi i  \partial_{\tau} \log\left[\prod_{k=1}^n\prod_{l\ne k}\left(\frac{\theta_1(z_k-z_l)}{\eta(\tau)e^{-\frac{i\pi\tau}{3}}} \right)^{m_{k} m_{l}} \right].\label{eq:Tau_t_term2}
\end{gather}
Substituting \eqref{eq:theta2_int} and \eqref{eq:Tau_t_term2} in \eqref{eq:tau_der_minor},
\begin{equation}
\begin{split}
    2\pi i \partial_{\tau} \log \widetilde{\T}_{H} & =
    2\pi i \partial_{\tau} \log \T_{H} + 2\pi i\partial_\tau\log\left[\prod_{k=1}^{n}\left(\eta(\tau) e^{\frac{i\pi\tau}{6}}\right) ^{2m_k^2}\prod_{l\ne k}\left(\frac{\theta_1(z_k-z_l)}{\eta(\tau)e^{-\frac{i\pi\tau}{3}}} \right)^{  m_{k} m_{l}}\right]. 
\end{split} \label{eq:Tau_t}
\end{equation}
Combining \eqref{eq:tau_der_zk} and \eqref{eq:Tau_t} we find
\begin{gather}
     2\pi i \partial_{\tau} \log \widetilde{\T}_{H} +\sum_{k=1}^{n}\partial_{z_k}\log\widetilde{\T}_{H} =  2\pi i \partial_{\tau} \log \T_{H} + \sum_{k=1}^{n}\partial_{z_k}\log\T_{H} \nonumber \\
     2\pi i\partial_\tau\log\left[\prod_{k=1}^{n}\left(\eta(\tau) e^{\frac{i\pi\tau}{6}}\right) ^{2m_k^2}\prod_{l\ne k}\left(\frac{\theta_1(z_k-z_l)}{\eta(\tau)e^{\frac{-i\pi\tau}{3}}} \right)^{  m_{k} m_{l}}\right] \nonumber\\
     + \sum_{k=1}^n\partial_{z_k}\log\left[ \prod_{l\ne k}\theta_1(z_k-z_l)^{2m_km_l}\right]
\end{gather}
Integrating the above equation on both sides and setting the integration constant to $1$, we obtain
\begin{equation}
    \frac{\widetilde{\T}_H}{\T_H}=\prod_{k=1}^n\left(\eta(\tau) e^{\frac{i\pi\tau}{6}} \right)^{2m_k^2}\prod_{l\ne k}\left( \frac{\theta_1(z_k-z_l)}{\eta(\tau)e^{-\frac{i \pi\tau}{3}}}\right)^{m_km_l}.
\end{equation}
\end{proof}
\begin{remark}
Note that \eqref{eq:Rank1Garnier} takes the form of the partition function for a Coulomb gas on a torus, with the first term encoding the self-interaction of the particles, while the second term encodes the pairwise interactions.
\end{remark}
Using Proposition \ref{prop:THTHtilde}, it is possible to write the tau function of the elliptic Garnier system as a Fourier series of Nekrasov partition functions.
\begin{theorem}\label{thm:GL_minor}
The isomonodromic tau function of the elliptic Garnier system (see \eqref{eq:Thm2} restricted to $N=2$) admits the following combinatorial expression:
\begin{equation}
\begin{split}
    \T_{H}(\tau) &=\widetilde{\Upsilon}_{1,n}\frac{ e^{-2\pi i\left(\widetilde{\rho}-\frac{\tau}{4}\right)}}{ \theta_1\left(Q-\widetilde{\rho} \right) \theta_{1}\left(Q+\widetilde{\rho} \right)}\prod_{k=1}^n\left(\eta(\tau) e^{\frac{-i\pi\tau}{12}} \right)^{2-2m_k^2}e^{-2\pi iz_k m_k^2}
    \prod_{l\ne k}\left( \frac{\theta_1(z_k-z_l)}{\eta(\tau)e^{-\frac{i \pi\tau}{6}}}e^{-i\pi\left(z_k-z_l \right)}\right)^{-m_km_l}\\
    &\times   \sum_{\vec{\sfQ}_1,\dots \vec{\sfQ}_n}\sum_{\vec{\sfY }_1,\dots \vec{\sfY }_n}e^{-2\pi i\sfQ\left(\widetilde{\rho}-\frac{\tau}{2}\right)} e^{2\pi i\tau\left[\frac{1}{2}(\vec\sfQ_1+\vec{a}_1)^2+\vert\vec{\sfY }_n\vert \right]} \prod_{j=1}^{n}e^{-2\pi i(z_{j}-z_{j-1})\left[\frac{1}{2}(\vec{a}_{j}+\vec{\sfQ}_{j})^2+\vert\vec{\sfY}_{j}\vert\right]} \\
   &\prod_{k=1}^ne^{2\pi i\vec{\sfQ}_i\cdot\vec{\nu}_i}\frac{Z_{pert}\left(\vec{a}_{k}+\vec{\sfQ}_{k},\vec{a}_{k+1}+m_k+\vec{\sfQ}_{k+1} \right)}{Z_{pert}\left(\vec{a}_k,\vec{a}_{k+1}+m_k \right)}Z_{inst}\left(\vec{a}_{k}+\vec{\sfQ}_{k},\vec{a}_{k+1}+m_k+\vec{\sfQ}_{k+1}\vert\vec{\sfY}_k,\vec{\sfY}_{k+1}\right),\label{eq:ThmTHT1n}
   \end{split}
\end{equation}
where the functions $Z_{inst}$, $Z_{pert}$ are defined in \eqref{eq:Zinstdef}, \eqref{eq:Zpertk} respectively, $\vec{a}_k=\left(a_k,-a_k \right)$, $a_k$ being the $\mathfrak{sl}_2$ local monodromy exponent on the circle $\cC_{in}^{[k]}$ in figure \ref{fig:nptTorusPants}, $m_k$ is the $\mathfrak{sl}_2$ monodromy exponent at the puncture $z_k$, $Q\equiv Q(\tau; z_{1},...,z_{n})$ is the Calogero-like variable in the Lax matrix \eqref{eq:nptL} specialized to $N=2$, $\tau$ is the modular parameter, $\widetilde{\Upsilon}_{1,n}$ is an integration constant that depends on the monodromy data, $(\vec{\sfY}, \vec{\sfQ})$ are charged partitions, 
\begin{equation}
    \widetilde{\rho} = \rho - \sum_{j=1}^n \Lambda_{j} \left( z_{j} -\frac{(\tau+1)}{2}\right),
\end{equation}
and $\rho$ is an arbitrary parameter.
\end{theorem}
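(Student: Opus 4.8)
The plan is to follow the same three-step strategy already used in the proof of Theorem~\ref{prop:Minor_CM}, now applied to the $n$-punctured $SL(2)$ system, by chaining together results already established. First, Proposition~\ref{prop:THTHtilde} expresses the isomonodromic tau function $\T_H$ of the Garnier system in terms of the tau function $\widetilde{\T}_H$ of the rank-1 system obtained from the scalar gauge transformation \eqref{eq:transf_n}, up to the explicit Coulomb-gas factor in \eqref{eq:Rank1Garnier}. Second, Theorem~\ref{thm:GL_Ham}, equation~\eqref{eq:Thm2} restricted to $N=2$, rewrites $\widetilde{\T}_H$ as the Fredholm determinant $\det[\mathbb{I}-\widetilde{K}_{1,n}]$ times a product of theta, eta and exponential prefactors. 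Third, equation~\eqref{eq:T1nNek} replaces this Fredholm determinant by its explicit Nekrasov--Okounkov sum over pairs of charged partitions. Composing the three substitutions and simplifying the accumulated prefactors will yield \eqref{eq:ThmTHT1n}.

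The first concrete step is to read off the monodromy data of the rank-1 system from the gauge factor $\lambda(z)=\prod_k\theta_1(z-z_k)^{m_k}$ in \eqref{eq:transf_n}. This shifts the residue at each puncture by $-m_k$, so that, in the notation of \eqref{eq:sigma_mu_def}, the $U(1)$ parameters are $\Lambda_k=-m_k$ for $k=1,\dots,n$ together with $\Lambda_0=\tfrac12\sum_k m_k$, exactly as in the one-punctured case. These fix $\widetilde{\rho}$ through \eqref{eq:tilderho} and give $\bs\sigma_k=\bs a_k-\tfrac{m}{2}+\sum_{j<k}m_j$, where $\bs a_k$ is the traceless $\mathfrak{sl}_2$ part and $m=\sum_j m_j$. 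Since $Z_{pert}$ and $Z_{inst}$ depend on their arguments only through differences, the common $U(1)$ shift drops out, and using $\bs\sigma_{k+1}-\bs\sigma_k=(\bs a_{k+1}-\bs a_k)+m_k$ converts $\vec\sigma_k\mapsto\vec a_k$ in the first slot and $\vec\sigma_{k+1}\mapsto\vec a_{k+1}+m_k$ in the second slot of each bifundamental factor, producing precisely the arguments displayed in \eqref{eq:ThmTHT1n}.

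It then remains to collect and recombine the prefactors coming from the three sources. The expected main obstacle is purely bookkeeping: one must combine (i) the self- and pairwise-interaction factors $\prod_k(\eta\,e^{i\pi\tau/12})^{-2m_k^2}\prod_{l\ne k}\bigl(\theta_1(z_k-z_l)/(\eta\,e^{-i\pi\tau/3})\bigr)^{-m_k m_l}$ of \eqref{eq:Rank1Garnier}; (ii) the $z_k$-dependent exponentials $e^{-i\pi z_k(\tr\bs\sigma_{k+1}^2-\tr\bs\sigma_k^2)}$, the factor $e^{-i\pi N\widetilde{\rho}}$, and the theta/eta products of \eqref{eq:Thm2}; and (iii) the charge- and $\vec\nu_i$-dependent exponentials inside \eqref{eq:T1nNek}. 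Expanding $\tr\bs\sigma_{k+1}^2-\tr\bs\sigma_k^2$ via the difference above turns the $z_k$-dependence of the second group into the extra phases $e^{-i\pi(z_k-z_l)}$ appearing inside the pairwise theta product and the factors $e^{-2\pi i z_k m_k^2}$ of \eqref{eq:ThmTHT1n}; the delicate part is to verify that all the resulting powers of $\eta(\tau)$, of $e^{i\pi\tau}$, and of the position-dependent phases recombine into exactly the stated form. Absorbing the residual purely monodromy-dependent constant into $\widetilde{\Upsilon}_{1,n}$, and integrating the logarithmic-derivative identities as in \eqref{eq:tau_der_zk} and \eqref{eq:Tau_t}, completes the argument.
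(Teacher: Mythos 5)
Your proposal is correct and follows essentially the same route as the paper's own proof: it chains Proposition \ref{prop:THTHtilde} (the Coulomb-gas factor \eqref{eq:Rank1Garnier}), Theorem \ref{thm:GL_Ham} in the form \eqref{eq:Thm2} restricted to $N=2$, and the Nekrasov expansion \eqref{eq:T1nNek}, with the identical specialization of monodromy data $\Lambda_k=-m_k$, $\Lambda_0=\frac{m}{2}$, $\vec\sigma_k=\vec a_k-\sum_{j<k+1,\,j\geq 0}\Lambda_j$ and the same prefactor bookkeeping. The only (harmless) cosmetic difference is that you spell out the difference-invariance argument converting the slots $\vec\sigma_k\mapsto\vec a_k$ and $\vec\sigma_{k+1}\mapsto\vec a_{k+1}+m_k$ in $Z_{pert}$, $Z_{inst}$, which the paper leaves implicit in its final chain of equalities.
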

\begin{proof}
 The Lax matrix \eqref{eq:rk1npt} is the same as \eqref{eq:nptL} specialised to $n$-punctures, $N=2$. The monodromy exponents $\vec{\sigma}_{k}$, the $U(1)$ shifts $\Lambda_{k}$ in \eqref{eq:sigma_mu_def}, and the parameter $\widetilde{\rho}$ defined in \eqref{eq:tilderho} read as follows for the present case:
  \begin{align}
  \Lambda_{j}=-m_{j} \quad \textrm{for} \quad j=1...n, &&  \Lambda_{0}= \frac{m}{2}, \label{eq:1nLambdaj}
 \end{align}
 with $m$ defined in \eqref{eqq:Defm}, and
 \begin{align}
    \vec{\sigma}_{k}= \left( a_{k} -\sum_{j=0}^{k-1} \Lambda_{j}, \, -a_{k} -\sum_{j=0}^{k-1} \Lambda_{j} \right), && \widetilde{\rho} = \rho - \sum_{j=1}^n \Lambda_{j} \left( z_{j} -\frac{(\tau+1)}{2}\right). \label{eq:1nsigmakrho}
 \end{align}
Theorem \ref{thm:GL_Ham} then implies that the tau function $\widetilde{\T}_{H}$ in \eqref{eq:Rank1Garnier} can be written in terms of a Fredholm determinant of an operator we call $\widetilde{K}_{1,n}$ which in turn can be written in terms of  Nekrasov functions as in \eqref{eq:T1nNek}.
\begin{equation}
\begin{split}
    \T_H & \mathop{=}^{\eqref{eq:Rank1Garnier}}\left(\prod_{k=1}^n\left(\eta(\tau) e^{\frac{i\pi\tau}{6}} \right)^{-2m_k^2}\prod_{l\ne k}\left( \frac{\theta_1(z_k-z_l)}{\eta(\tau)e^{-\frac{i \pi\tau}{3}}}\right)^{-m_km_l}\right)\widetilde{\T}_H \\
    & \mathop{=}^{\eqref{eq:Thm2}} \left(\prod_{k=1}^n\left(\eta(\tau) e^{\frac{i\pi\tau}{6}} \right)^{-2m_k^2}\prod_{l\ne k}\left( \frac{\theta_1(z_k-z_l)}{\eta(\tau)e^{-\frac{i \pi\tau}{3}}}\right)^{-m_km_l}\right)e^{i\pi\tau\tr\left(\bs \sigma_1^2+\frac{\mathbb{I}}{6} \right)} e^{-2 i \pi \widetilde{\rho}}  \\ &\times\frac{\eta(\tau)^2}{ \theta_1(Q-\widetilde{\rho}) \theta_{1}(Q+\widetilde{\rho})} \left(\prod_{k=1}^n e^{-i\pi z_k\left(\tr\bs \sigma_{k+1}^2-\tr\bs \sigma_{k}^2\right)}\right)\det\left[\mathbb{I} - \widetilde{K}_{1,n} \right] \widetilde{\Upsilon}_{1,n},\\
    & \mathop{=}^{\eqref{eq:1nLambdaj}, \eqref{eq:1nsigmakrho}} \frac{e^{2\pi i\tau  a_1^2} e^{-2
        \pi i \left(\widetilde{\rho}-\frac{\tau}{4} \right)}}{ \theta_1(Q-\widetilde{\rho}) \theta_{1}(Q+\widetilde{\rho})}\prod_{k=1}^n\left(\eta(\tau) e^{\frac{-i\pi\tau}{12}} \right)^{2-2m_k^2}\prod_{l\ne k}\left( \frac{\theta_1(z_k-z_l)}{\eta(\tau)e^{-\frac{i \pi\tau}{6}}}\right)^{-m_km_l}\\
    & \times\prod_{k=1}^n e^{-2\pi i  z_k\left(a_{k+1}^2- a_{k}^2+m_k^2+m_k\sum_{j=1}^{k-1}m_j +m_{k}\right)}\det\left[\mathbb{I}-\widetilde{K}_{1,n}\right]\widetilde{\Upsilon}_{1,n} \\
    & \mathop{=}^{\eqref{eq:T1nNek}}\frac{ e^{-2
        \pi i \left(\widetilde{\rho}-\frac{\tau}{4} \right)}}{ \theta_1(Q-\widetilde{\rho}) \theta_{1}(Q+\widetilde{\rho})}\prod_{k=1}^n\left(\eta(\tau) e^{\frac{-i\pi\tau}{12}} \right)^{2-2m_k^2}e^{-2\pi iz_km_k^2}\prod_{l\ne k}\left( \frac{\theta_1(z_k-z_l)}{\eta(\tau)e^{-\frac{i \pi\tau}{6}}}e^{-i\pi\left(z_k-z_l \right)}\right)^{-m_km_l}\\
    & \times\prod_{k=1}^n e^{-2\pi i  z_k\left(a_{k+1}^2- a_{k}^2+m_{k}\right)}\widetilde{\Upsilon}_{1,n}  \times \sum_{\vec{\sfQ}_1,\dots \vec{\sfQ}_n}\sum_{\vec{\sfY }_1,\dots \vec{\sfY }_n}e^{-2\pi i\sfQ\left(\widetilde{\rho}-\frac{\tau}{2}\right)}\\
    &\times e^{2\pi i\tau\left[\frac{1}{2}(\vec\sfQ_1+\vec{a}_1)^2+\vert\vec{\sfY }_n\vert \right]}
   \prod_{j=1}^{n}e^{-2\pi i(z_{j}-z_{j-1})\left[\frac{1}{2}(\vec{a}_{j}+\vec{\sfQ}_{j})^2-\frac{1}{2}\vec{a}_j^2+\vert\vec{\sfY}_{j}\vert\right]} \\
   &\prod_{k=1}^ne^{2\pi i\vec{\sfQ}_i\cdot\vec{\nu}_i}\frac{Z_{pert}\left(\vec{a}_{k}+\vec{\sfQ}_{k},\vec{a}_{k+1}+m_k+\vec{\sfQ}_{k+1} \right)}{Z_{pert}\left(\vec{a}_k,\vec{a}_{k+1}+m_k \right)}Z_{inst}\left(\vec{a}_{k}+\vec{\sfQ}_{k},\vec{a}_{k+1}+m_k+\vec{\sfQ}_{k+1}\vert\vec{\sfY}_k,\vec{\sfY}_{k+1}\right).\\
\end{split}
\end{equation}
\end{proof}

\section{Comments on the Riemann-Hilbert problem on the torus and generalization of the Widom constant}\label{sec:RHP}

It is known from \cite{Cafasso:2017xgn} that the  isomonodromic tau function of a linear system on a sphere can be identified with the so-called Widom constant \cite{WIDOM19761}, which depends only on the \(N\times N\) jump matrix \(G(z)\) of the associated Riemann-Hilbert problem (RHP) with unit determinant.
For the 4-point isomonodromic problem on a sphere, the general RHP can be recast as a RHP on a unit circle with the jump \(G(z)\) is given by the ratio of the solutions $\psi_{in,out}$ of the auxiliary 3-point problems:
\begin{equation}G(z)=\psi_{in}(z)^{-1}\psi_{out}(z), \label{eq:jump1}\end{equation} and admits Birkhoff factorization. Finding the opposite factorization to \eqref{eq:jump1} is equivalent to solving the RHP with the above jump condition, which in turn is the same as solving the general 4-point RHP.

Following the same logic, we rewrite\footnote{ In this section, we use the $SL(N)$ analogues of the objects studied in Section \ref{sec:CM} for the one-punctured torus.} the Fredholm determinant \eqref{eq:prop1} in terms of the ratio of the solutions to the 3-point problem $\Yt$ as defined in \eqref{eq:3ptcyl}, and the projectors \(\Pi_{\pm}\) onto \(\mathcal{H}^0_{\pm}\), where \(\mathcal{H}^0_{\pm}\) are spaces of functions with identical A-cycle monodromies (in this section ``\(+\)'' and ``\(-\)'' are understood in the sense of \eqref{eq:functionComponents}):
\begin{equation}
\label{eq:4}
\mathcal{T}^{(1,1)}=\det_{\mathcal{H}_-\oplus \mathcal{H}_+} \left( \mathbb{I}- 
\begin{pmatrix}
\widetilde{Y}_- \nabla^{-1}\Pi_- \widetilde{Y}_+^{-1} & \Pi_- - \widetilde{Y}_-\Pi_-\widetilde{Y}_-^{-1}\\
\Pi_+-\widetilde{Y}_+\Pi_+\widetilde{Y}_+^{-1} & \widetilde{Y}_+\nabla \Pi_+ \widetilde{Y}_-^{-1}
\end{pmatrix} \right),
\end{equation}
where we set
\begin{equation}
\label{eq:Ypm}
\widetilde{Y}_-(z):=\widetilde{Y}_{out}(z+\tau),\quad \widetilde{Y}_+(z):=\widetilde{Y}_{in}(z).
\end{equation}
Using the fact that \(\Pi_\mp\) vanishes on \(\mathcal{H}_\pm^0\), and conjugating \eqref{eq:4} by the matrix \(\operatorname{diag}\left(\widetilde{Y}_-^{-1},\widetilde{Y}_+^{-1}\right)\), which is an isomorphism \(\mathcal{H}_-\oplus \mathcal{H}_+\to \mathcal{H}^0_-\oplus \mathcal{H}^0_+\), we get:
\begin{equation}
\label{eq:4b}
\mathcal{T}^{(1,1)}=\det_{\mathcal{H}_-^0\oplus \mathcal{H}_+^0} \left( \mathbb{I}-
\begin{pmatrix}
\nabla^{-1}\Pi_- \widetilde{Y}_+^{-1}\widetilde{Y}_- & -\Pi_-\widetilde{Y}_-^{-1}\widetilde{Y}_+\\
-\Pi_+\widetilde{Y}_+^{-1}\widetilde{Y}_- & \nabla \Pi_+ \widetilde{Y}_-^{-1}\widetilde{Y}_+
\end{pmatrix} \right).
\end{equation}
Defining the jump function
\begin{equation}
\label{eq:RHPdual}
J(z):=\widetilde{Y}_-^{-1}(z)\widetilde{Y}_+(z),
\end{equation}
and writing \(\nabla\) explicitly as in \eqref{eq:def_nabla}, we propose the following definition for a tau function generalizing the Widom constant to the case of the torus.
\begin{definition}
The torus tau function for the jump $J(z)$ defined in \eqref{eq:RHPdual} is
\begin{equation}
\label{eq:toricTau}
\mathcal{T}^{(1,1)}[\rho,J]
:=\det_{\mathcal{H}_-^0\oplus \mathcal{H}_+^0}
\begin{pmatrix}
\mathbb{I}-e^{-2\pi i\rho+\tau\partial_z}\Pi_- J^{-1} & \Pi_-J\\
\Pi_+J^{-1} & \mathbb{I}-e^{2\pi i\rho-\tau\partial_z} \Pi_+ J
\end{pmatrix},
\end{equation}
where
\[\det J(z)=1,\]
$\rho$ is an arbitrary parameter, and $\tau$ is the modular parameter.
\end{definition}
Let us now find an appropriate definition of the corresponding RHP.
Consider the solution to the the torus one-point linear system \(\mathcal{Y}(z)\) in \eqref{eq:linear_systemCM}, whose monodromies inside the fundamental domain are the same as the monodromies of the solution to the 3-pt problem \(\widetilde{\mathcal{Y}}(z)\). The function
\begin{equation}
\label{eq:8}
\Psi(z):=\widetilde{\mathcal{Y}}(z)^{-1}\mathcal{Y}(z)
\end{equation}
is analytic inside the fundamental domain, and satisfies the following relation on the A-cycle,
\begin{equation}
\label{eq:RHP}
\Psi(z+\tau)=J(z)\Psi(z)e^{2\pi i \boldsymbol{Q}[J]},
\end{equation}
where
\begin{equation}
\label{eq:10}
\boldsymbol{Q}[J]=\diag(Q_1[J],\ldots ,Q_N[J]),\qquad \sum_{i=1}^NQ_i[J]=0,
\end{equation}
making it a natural candidate for solution of the RHP.
An important observation here is that unlike in the spherical case, on the torus we have an extra diagonal twist \(e^{2\pi i \boldsymbol{Q}[J]}\), which implies that \(\Psi(z)\) is not a function, but a section of some non-trivial vector bundle on the torus. So we propose the following
\begin{definition}
The solution of the RHP \(\Psi(z)\) \eqref{eq:8} with jump \(J(z)\) on the A-cycle of the torus, is a section of a vector bundle, which is analytic in the fundamental domain and satisfies \eqref{eq:RHP} at the boundaries.
The complex moduli of the bundle \(Q_i[J]\) are some functionals of \(J\).
\end{definition}
We conjecture that the RHP for generic jump \(J(z)\) is solvable, and moreover, that complex moduli of the vector bundle are given by zeroes of \(\mathcal{T}^{(1,1)}[\rho,J]\) in \(\rho\) as in \eqref{eq:remark2}:
\begin{equation}
\label{eq:tauZero}
\mathcal{T}^{(1,1)}\left[Q_i[J],J\right]=0.
\end{equation}
It should not be hard to verify that the \(\rho\)-dependence of \(\mathcal{T}^{(1,1)}[\rho,J]\) should be factorizable as in the 1-point case \eqref{eq:prop1}, or even in the more general case \eqref{eq:Thm2}, with the $\rho$-independent part that we denote by $\mathcal{T}^{(1,1)}_0[J]$:
\begin{equation}
\label{eq:tauThetaN}
\mathcal{T}^{(1,1)}[\rho,J]=e^{N\pi i\rho}\prod_{i=1}^{N}\theta_1\left(\rho-Q_{i}[J]\right)\mathcal{T}^{(1,1)}_0[J].
\end{equation}
Furthermore, the solution to the RHP \eqref{eq:RHP} is not unique.
Namely, if \(\Psi_{N}(z)\) is a solution to the $N \times N$ RHP, then
\begin{equation}
\label{eq:1}
\Psi'(z)=\Psi(z)\cdot e^{2\pi i \boldsymbol{k} z},
\end{equation}
where \(\sum_{i=1}^N k_i=0\), is also a solution with the moduli of the vector bundle given by
\begin{equation}
\label{eq:5}
\boldsymbol{Q}'=\boldsymbol{Q}+\boldsymbol{k}\tau,
\end{equation}
using the Notation \ref{not:bold}. This demonstrates that \(Q_i\)'s are points on the same complex torus over which the RHP is formulated (they are the Tyurin points that parametrize the vector bundle on the torus, as in \cite{Krichever:2001zg,Krichever:2001cx}), and is consistent with \eqref{eq:tauThetaN}.

The next point of discussion is the distinction between what is called the direct RHP \eqref{eq:RHP} and the dual one \eqref{eq:RHPdual}.
In the spherical case the direct and dual RHPs were identical, and even in the present case it seems that we can rewrite the dual RHP in a similar way, \(\widetilde{Y}_{out}(z+\tau)=J(z)\widetilde{Y}_{in}(z)\).
However, the important difference with respect to the spherical case is that here, \(\Psi(z)\) is analytic for \(\Im z \in[0,\Im \tau]\), \(\widetilde{Y}_{in}(z)\) is analytic for \(\Im z\in(-i\infty,0]\), and \(\widetilde{Y}_{out}(z)\) is analytic for \(\Im z\in[\Im\tau,i\infty)\).
So the dual RHP is actually the same Birkhoff factorization problem as in the spherical case, while the direct one is different.

Another interesting question regards the possible options for \(J(z)\). In the 1-puncture case $J(z)$ was constructed from \eqref{eq:Ypm}, where $\Yt_-$, $\Yt_+$ were the analytic continuations of the same function in two different regions. We do not know the features of the RHPs with such jumps, which jumps are more natural to consider, and what is the natural \(\tau\)-dependence of \(J(z)\).

In the torus case, the jump \(J(z)\) can be in principle \(\tau\)-dependent, but we can also consider \(\tau\)-independent jump \(J(z)\) and find a limit to the usual Widom constant.
Namely, setting \(\rho=\tau/2\), one finds that the eigenvalues of \(e^{-2\pi i\rho+\tau\partial_z}\) acting on \(\mathcal{H}_-^0\) are \(e^{2\pi i (n+1/2)\tau}\) for \(n\geq 0\), and similarly for \(e^{2\pi i\rho-\tau\partial_z}\) acting on \(\mathcal{H}_+^0\) (recall the definition of $\cH_{+}$ in \eqref{eq:functionComponents}).
They vanish in the \(\tau\to i\infty\) limit, so we are left with the standard determinant giving the Widom constant.
We also see this degeneration at the level of the RHP \eqref{eq:RHP}: \(\Psi(z)\) at the boundaries can be first approximated by Fourier series' decaying towards the interior of the fundamental domain, and then consistency conditions around \(z=\tau/2\) will introduce corrections of order \(\mathcal{O}\left(e^{\pi i\tau}\right)\). 

Another observation is that equation \eqref{eq:RHP}, for rational \(J(z)\), can be considered as the solution to a \(q\)-difference linear system.
In this case \(\bs Q[J]\) has a different interpretation: following the \(q\)-difference generalization of the approach of \cite{Krichever:2004bb} to difference equations,
we introduce the following function
\begin{equation}
\label{eq:7}
\widehat{\Psi}(z)=\Psi(z)e^{-2\pi i z\boldsymbol{Q}[J]/\tau},
\end{equation}
which solves an equation
\begin{equation}
\label{eq:9}
\widehat{\Psi}(z+\tau)=J(z)\widehat{\Psi}(z).
\end{equation}
Then, \(\bs Q[J]\) parametrizes the local monodromy of the \(q\)-difference system, since
\begin{equation}
    \widehat{\Psi}(z+1)=\widehat{\Psi}(z)e^{-2\pi i\bs Q[J]/\tau},
\end{equation}
and the monodromy matrix $e^{-2\pi i\bs Q[J]/\tau}$ is defined up to multiplication by \(e^{-2\pi i\bs k/\tau}\), due to the presence of the \(\tau\)-periodic exponentials \(e^{-2\pi i\bs k z/\tau}\).

Another interesting question is then the following: what is the meaning of the torus tau function in this case of the \(q\)-difference system, and what information can we extract about the local monodromy from equation \eqref{eq:tauZero}? What is the generalization of the Widom formula for the variation of \(\mathcal{T}^{(1,1)}[\rho,J]\) under the variations of \(J(z)\)? We leave all these questions for a future work.

\section{Outlook and Discussion}

An immediate question regards the modular properties of tau functions on the torus which provides a way to study the so-called connection constant \cite{Iorgov:2013uoa, Its:2014lga}. A starting point in analyzing the modular properties of the tau function $\T_{H}$ in \eqref{eq:Thm2} is the free fermion conformal block $Z^D$ in \eqref{eq:ZDual}, whose transformations can be obtained by using the results of \cite{Ponsot:1999uf,Ponsot:2000mt,Hadasz:2010xp,Nemkov:2015zha,Nemkov:2016ikx}, where the so-called modular kernel, governing the behavior of the conformal block under modular transformations was derived. The other key ingredient is the variable $Q_i$ that appears in the argument of the theta functions. Its modular properties have been studied in \cite{manin1996sixth} for the $SL(2)$ one-punctured case, for which $Q$ is the solution of the equation \eqref{eq:EllPainleve}.

A natural continuation of this work is the generalization of the Fredholm determinant representation of tau functions on higher genus ($g\ge2$) Riemann surfaces, and for $g\ge 1$ cases with irregular singularities.  The main obstacle in providing explicit formulas for the tau function in both these cases is that the corresponding pants decomposition necessarily contains trinions with no external legs (see Figure \ref{fig:G2Dec}), for which the construction for the matrix elements of the Plemelj operators is not clear. Solving the problem posed by the all-internal trinion would immediately allow us to generalize several results on the Riemann sphere \cite{Gavrylenko:2017lqz,Cafasso:2017xgn,Desiraju:2019vna,2020arXiv200801142D} to the case of the torus.
\begin{figure}[h]
    \centering
    \includegraphics{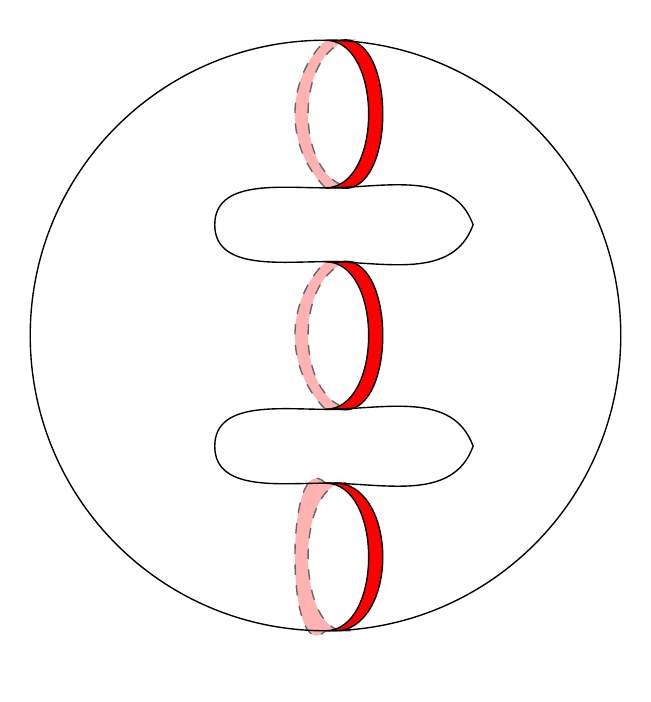}
    \caption{Pants decomposition for a genus-2 Riemann surface}
    \label{fig:G2Dec}
\end{figure}

The explicit formulas for the higher genus case would have important consequences in theoretical physics as well: the relation between the isomonodromic tau function and free fermion conformal blocks argued in \cite{DelMonte:2020nvp} would provide new explicit formulas for higher genus conformal blocks. From yet another perspective, the determinant tau functions studied in this paper coincide with partition functions of topological string theory on certain local Calabi-Yau threefolds, as already observed in \cite{Bonelli:2019boe}, and the identification with determinants provides a powerful nonperturbative definition of such partition functions \cite{Grassi:2014zfa,Bonelli:2016idi,Bonelli:2017ptp,Coman:2018uwk,Coman:2020qgf}. Further extending our construction to higher genus Riemann Surfaces and irregular punctures would provide explicit formulas for cases that have proven inaccessible by usual methods, like the topological vertex \cite{Aganagic:2003db}.
In CFT, these would be given by irregular conformal blocks on Riemann surfaces (for relations of irregular conformal blocks on the sphere with Painlev\'e equations, see also \cite{Bonelli:2016qwg,Nagoya:2015cja,2018arXiv180404782N,Gavrylenko:2020gjb}).

\section*{Acknowledgements}

We thank G. Bonelli, T. Grava, O. Lisovyy, A. Tanzini for their useful comments. Thanks are also due to the organizers of the workshops {\it Geometric Correspondences of Gauge Theories} (Trieste 2019) and {\it Winter School on Integrable Systems and Representation Theory} (Bologna, 2020) for their hospitality. F.D.M. thanks SISSA for providing him support in this taxing time after the end of his PhD.

The work of P.G. is partially supported by the HSE University Basic Research Program, Russian Academic Excellence Project '5-100' and by the RSF Grant No. 19-11-00275 (results of Sections \ref{subsec:Plemelj_CM} and \ref{sec:RHP} were funded by RSF). The work of H.D is partly supported by H2020-MSCA-RISE-2017 PROJECT No. 778010 IPaDEGAN. H.D thanks INFN Iniziativa Specifica GAST, and F.D.M thanks INFN Iniziativa Specifica ST\&FI, for funding the travel to  {\it Winter School on Integrable Systems and Representation Theory} (Bologna, 2020) where a part of the work was done. The work of F.D.M. was funded by SISSA, under the project "Fredholm determinants for tau functions and supersymmetric partition functions".

A special thanks to the apps Microsoft Whiteboard, Telegram, and Zoom, where most of the work was done.
P.G. would also like to thank Emacs+CD\LaTeX+Magit.

\bibliographystyle{JHEP}
\bibliography{Biblio.bib}
\end{document}